\theoremstyle{plain}
\newcommand{\defparproblem}[4]{
  \vspace{3mm}
\noindent\fbox{
  \begin{minipage}{.95\textwidth}
  \begin{tabular*}{\textwidth}{@{\extracolsep{\fill}}lr} \textsc{#1}\\ \end{tabular*}
  {\bf{Input:}} #2  \\
  {\bf{Parameter:}} #3 \\
  {\bf{Question:}} #4
  \end{minipage}
  }
  \vspace{2mm}
}
\newcommand{\defproblem}[3]{
  \vspace{3mm}
\noindent\fbox{
  \begin{minipage}{.95\textwidth}
  \begin{tabular*}{\textwidth}{@{\extracolsep{\fill}}lr} #1  \\ \end{tabular*}
  {\bf{Input:}} #2  \\
  {\bf{Question:}} #3
  \end{minipage}
  }
  \vspace{2mm}
  }
\def\mypara#1{\sbox0{\parbox{\linewidth}{%
  \parindent0pt
 #1\par\xdef\myparasize{\the\prevgraf}}}%
\ifnum\myparasize=1
{\parindent0pt #1\par}%
\else
#1
\fi}
\newcommand{\AAA}{{\mathcal A}}
\newcommand{\OO}{{\mathcal O}}
\newcommand{\FF}{{\mathcal F}}
\newcommand{\SSS}{{\mathcal S}}
\newcommand{\GG}{{\mathcal G}}
\newcommand{\pionepitwodeletion}{{\sc $\Pi_1$ or $\Pi_2$ Deletion}}
\newcommand{\fpionepitwodeletion}{{\sc Finite $\Pi_1$ or $\Pi_2$ Deletion}}
\newcommand{\fpionepitwodeletionpath}{{\sc Finite $\Pi_1$ or $\Pi_2$ Deletion with Path}}
\newcommand{\ssp}{{\sf sp}}
\newcommand{\sv}[1]{}
\newtheorem{corollary}{Corollary}
\newtheorem{reduction rule}{Reduction Rule}
\newtheorem{branching rule}{Branching Rule}
\newtheorem{definition}{Definition}
\newtheorem{lemma}{Lemma}
\newtheorem{observation}{Observation}
\newtheorem{theorem}{Theorem}
\newcommand{\PIVDTrees}{{\sc Proper Interval-or-Tree Deletion}}
\newcommand{\IVDTrees}{{\sc Interval-or-Tree Deletion}}
\newcommand{\ChordalBipPer}{{\sc Chordal-or-Bipartite Permutation Deletion}}
\newcommand{\InfCMSOPiOnePiTwo}{{\sc Special Infinite-$(\Pi_1, \Pi_2)$-Deletion}}
\newcommand{\CliqueorPlanar}{{\sc Cliques-or-Planar Deletion}}
\newcommand{\ClawfreeorTriangleFree}{{\sc Claw-Free-or-Triangle-Free Deletion}}
\newcommand{\CliqueorPiTwo}{{\sc Cliques-or-$\Pi_2$ Deletion}}
\newcommand{\SplitorBipartite}{{\sc Split-or-Bipartite Deletion}}
\newcommand{\CliqueorKtFree}{{\sc Clique-or-$K_t$-free-$\Pi_2$ Deletion}}
\newcommand{\PalphafreePiOnePiTwo}{{\sc $P_\alpha$-Free-$(\Pi_1, \Pi_2)$-Deletion}}
\title{Deletion to Scattered Graph Classes II - Improved FPT Algorithms for Deletion to Pairs of Graph Classes \footnote{Preliminary versions of the paper appeared in proceedings of IPEC 2020 \cite{jacob2020parameterized} and FCT 2021 \cite{JacobMR21fct}.}}
\date{}
\author[1]{Ashwin Jacob}
\author[2]{Diptapriyo Majumdar}
\author[3]{Venkatesh Raman}
\affil[1]{Ben-Gurion University of the Negev, Beersheva, Israel
  \texttt{ashwinj@bgu.ac.il}}
\affil[2]{Indraprastha Institute of Information Technology Delhi, New Delhi, India
    \texttt{diptapriyo@iiitd.ac.in}}
\affil[3]{The Institute of Mathematical Sciences, HBNI, Chennai, India
  \texttt{vraman@imsc.res.in}}
\begin{document}
\maketitle

\begin{abstract}
The problem of deletion of vertices to a hereditary graph class is a well-studied problem in parameterized complexity. Recently, a natural extension of the problem was initiated where we are given a finite set of hereditary graph classes and we determine whether $k$ vertices can be deleted from a given graph so that the connected components of the resulting graph belong to one of the given hereditary graph classes.  The problem is shown to be fixed parameter tractable
(FPT) when the deletion problem to each of the given hereditary graph classes is fixed-parameter tractable, and the property of being in any of the graph classes is expressible in the counting monodic second order (CMSO) logic. 
This paper focuses on pairs of specific graph classes ($\Pi_1,\Pi_2$) in which we would like the connected components of the resulting graph to belong to, and design simpler and more efficient FPT algorithms.

\end{abstract}
\section{Introduction}
Graph modification problems are a class of problems in which the input instance is a graph, and the goal is to check if the input can be transformed into a graph of a specified graph class by using some ``allowed'' graph operations.
Depending on the allowed operations, {\em vertex or edge deletion problems}, {\em edge editing or contraction problems} have been extensively studied in various algorithmic paradigms.

In the last two decades, graph modification problems, specifically vertex deletion problems, have been extensively studied in the field of parameterized complexity. Examples of vertex deletion problems include {\sc Vertex Cover}, {\sc Cluster Vertex Deletion}, {\sc Feedback Vertex Set}, and {\sc Chordal deletion set}. We know from the classical result by Lewis and Yannakakis \cite{LewisY80} that the problem of whether removing a set of at most $k$ vertices results in a graph satisfying a hereditary property $\pi$ is NP-complete for every non-trivial property $\pi$. 
It is well-known that any hereditary graph class\footnote{A hereditary graph class is a class of graphs that is closed under induced subgraphs} can be described by a forbidden set of graphs, finite or infinite, that contains all minimal forbidden graphs in the class. It is also well-known~\cite{Cai96} that if a hereditary graph class has a finite forbidden set, then deletion to the graph class has a simple fixed-parameter tractable (FPT) algorithm using a hitting set based approach.

Recently Jacob et al.~\cite{jacob2020parameterized,jacob2021parameterized}, building on the work of Ganian et al.~\cite{GanianRS17} for constraint satisfaction problems, introduced a natural extension of vertex deletion problems to deletion to scattered graph classes. Here we want to delete vertices from a given graph to put the connected components of the resulting graph to one of a few given graph classes.
A scattered graph class $(\Pi_1, \dotsc ,\Pi_d)$ consists of graphs whose connected components are in one of the graph classes $\Pi_1, \dotsc ,\Pi_d$. The vertex deletion problem to this class cannot be solved by a hitting set based approach, even if the forbidden graphs for these classes are finite.
In particular, it is possible that the solution could be disjoint from the forbidden subgraphs present in the input instance.
It is sufficient if the solution separates a forbidden subgraph from one class from a forbidden subgraph of another class so that the forbidden subgraphs of the $d$ classes don't belong to the same component.

Jacob et al.~\cite{jacob2020parameterized} proved that the vertex deletion problem for the scattered graph class $(\Pi_1, \dotsc , \Pi_d)$ is FPT  with running time $2^{poly(k)} n^{\OO(1)}$ ($poly(k)$ denotes a polynomial in $k$) if the forbidden families corresponding to all the graph classes $\Pi_1, \dotsc , \Pi_d$ are finite. The technique involves iterative compression and important separator variants. In a later result \cite{jacob2021parameterized}, they showed that when the vertex deletion problem to each of the individual graph classes is FPT and for each graph class, the property that a graph belongs to the graph class is expressible by Counting Monadic Second Order (CMSO) logic. Unfortunately, the running time of the algorithm incurs a gargantuan constant factor (a function of $k$) overhead. 


Since the algorithms in \cite{jacob2020parameterized,jacob2021parameterized} incur a huge running time and use sophisticated techniques, it is interesting to see whether we can get simpler and faster algorithms for some special cases of the problem. In this paper, we do a deep dive on the vertex deletion problems to a pair of graph classes when at least one of the graph classes has an infinite forbidden family. 

\noindent
{\bf Our Problems, Results, and Techniques:} We look at specific variants of the following problem.

\defparproblem{{\pionepitwodeletion}}{An undirected graph $G = (V, E)$, two hereditary graph classes $\Pi_1$ and $\Pi_2$ with $\FF_i$ as the forbidden family for graphs whose each connected component belongs to $\Pi_i$ for $i \in \{1,2\}$. }{$k$}{Is there a set $S \subseteq V(G)$ of size at most $k$ such that every connected component of $G - S$ is in $\Pi_1$ or in $\Pi_2$?}

We emphasize that two distinct components of $G-S$ can be in two distinct classes $\Pi_1$ and $\Pi_2$. We do not ask that every component of the resulting graph is in $\Pi_1$ or that every component of the resulting graph is in $\Pi_2$.
Also, note that the forbidden families $\FF_i$ are for graphs whose each connected component is a graph belonging to $\Pi_1$ for $i \in \{1,2\}$. It is not the forbidden family of graphs associated to the graph class $\Pi_i$. This distinction does not make a difference for most of the popular graph classes as the union of connected components of such graph classes still belong to the graph class. Examples include bipartite graphs, chordal graphs, planar graphs, interval graphs, and forests.
However, it is important for classes such as cliques and split graphs. The forbidden family for cliques is the singleton graph $2K_1$. But if the graph class is such that each connected component is a clique, $2K_1$ is present by taking a single vertex from two different components of the graph. The forbidden family in this case can be proven to be the singleton graph $P_3$. In our definition of {\pionepitwodeletion}, when the graph class $\Pi_1$ is the class of cliques $\FF_1 = \{P_3\}$ is the forbidden family of graphs where each component is a clique. 

We describe a general algorithm for {\pionepitwodeletion} under some conditions which cover pairs of several graph classes. While the specific conditions on the pairs of classes to be satisfied by this algorithm are somewhat technical and are explained in Section \ref{subsection-forbidden-paths} and \ref{subsection:general-algorithm}, we give a high-level description here. We note that we do not put any CMSO logic based conditions; thus the algorithm solves fixed parameter tractability for pairs of graph classes not coming under the result by Jacob et al in \cite{jacob2021parameterized}.

We first make the reasonable assumption that the vertex deletion problems to the graph class $\Pi_1$ and to $\Pi_2$ have FPT algorithms.
As we want every connected component of the graph after removing the solution vertices to be in $\Pi_1$ or in $\Pi_2$, any pair of forbidden subgraphs $H_1 \in \FF_1$ and $H_2 \in \FF_2$ cannot both 
 be in a connected component of $G$. Let us look at such a component $C$ with $J_1, J_2 \subseteq V(C)$ such that $G[J_i]$ is isomorphic to $H_i$ for $i \in \{1,2\}$ and look at a path $P$ between the sets $J_1$ and $J_2$. Assuming that the graphs in families $\FF_1$ and $\FF_2$ are connected graphs, we can conclude that the solution has to hit the set $J_1 \cup J_2 \cup P$ allowing a branching on such sets.

However, if the path is too large, such a branching does not lead to efficient algorithms. The generalization comes up from our observation that for certain pairs of graph classes, if we focus on a pair of forbidden subgraphs $H_1 \in \FF_1$ and $H_2 \in \FF_2$ that are ``closest" to each other, then there is always a solution that does not intersect the shortest path $P$ between them. This helps us to branch on the vertex sets of these forbidden graphs. However, note that the forbidden 
graphs may have unbounded sizes. We come up with a notion of {\it forbidden pair} (Definition~\ref{forbpair} in Section~\ref{sec:forbidden-characterization}) and show that there are pairs of graph classes that have a finite number of forbidden pairs even if each of them has infinite forbidden sets. For some such pairs, we can bound the branching step to obtain the FPT algorithm.


\noindent
{\bf Organization of the paper:}
In Section \ref{sec:Prelims}, we state the notations used in this paper and give the necessary preliminaries on various graph classes and parameterized complexity. We also state some preliminary observations and reduction rules for {\pionepitwodeletion}. In Section \ref{sec:finite-set-with-paths}, we give algorithms and kernels for the simplest case of the problem when both of the forbidden families associated with the pair of graph classes are finite, and one of them has a path graph present in it. 

In Section \ref{sec:constant-forbidden-pair}, we give algorithms {\pionepitwodeletion} having a constant sized forbidden pair families. We define the notion of forbidden pair family and associated characterizations in Section \ref{sec:forbidden-characterization}. In Section \ref{subsection-forbidden-paths}, we first give an algorithm for {\pionepitwodeletion} having a constant sized forbidden pair families assuming that one of the families has a path graph present in it. Later, we give some examples of pairs of graph classes satisfying these properties. 

In Section \ref{subsection:constant-forbidden-pair}, we give  
algorithms for {\pionepitwodeletion} having a constant sized forbidden pair families satisfying some additional conditions. For these problems, the path between the closest forbidden pair need not be finite like the examples in Section \ref{sec:constant-forbidden-pair}. We motivate this algorithm with the example of {\ClawfreeorTriangleFree} in Section \ref{section:clae-free-triangle-free}. In Section \ref{subsection:general-algorithm}, we give the general algorithm for this case. In Section \ref{sec:other-examples}, we give more examples for this case, such as {\IVDTrees}, {\PIVDTrees}, and {\ChordalBipPer}.

\section{Preliminaries}
\label{sec:Prelims}
\paragraph{\bf Sets and Graph Theory:}
Given $r \in \mathbb{N}$, we use $[r]$ to denote the set $\{1,\ldots,r\}$.
Given a finite set $A$, and an integer $t$, we use ${{A}\choose{t}}$ to denot the collection of all subsets of $A$ of size exactly $t$ and use ${{A}\choose{\leq t}}$ to denote the collection of all subsets of $A$ of size  at most $t$.
We consider undirected graphs throughout this paper.
We use standard graph-theoretic notations for undirected graphs from \cite{DiestelBook2012}.
For $\ell \in \mathbb{N}$, we use $P_{\ell}$ to denote the path on $\ell$ vertices.
Similarly, for $\ell \in \mathbb{N}$, we use $C_{\ell}$ to denote an induced cycle on $\ell$ vertices.
Let $u, v \in V(G)$ and $K_{\ell}$ to denote a clique with $\ell$ vertices.
We use $d_G(u, v)$ to denote the length of a {\em shortest path} from $u$ to $v$ in $G$. For $P,Q \subseteq V(G)$, we define $d_G(P, Q) = \min_{u \in P,v \in Q} \{d_G(u,v)\}$.

Given a graph $G = (V, E)$, and $Y \subseteq V(G)$, we denote by $G[Y]$ the subgraph of $G$ induced by the vertex set $Y$.
A graph $G$ is called a {\em bipartite graph} if there exists a partition of $V(G) = A \uplus B$ such that for every edge $uv \in E(G)$, $u \in A$ and $v \in B$.
A graph $G$ is called a {\em split graph} if its vertex set can be partitioned into two parts $V(G) = C \uplus I$ such that $C$ is a clique and $I$ is an independent set.
A graph is called a {\em cactus graph} if every edge of the graph is contained in at most one cycle.
Let $A$ be a set of three arbitrary vertices of a graph $G$.
Then, $A$ is called an {\em asteroidal triple (AT)} if between every two vertices of $A$, there is a path avoiding the third vertex.
A {\em chord} of a cycle $C = v_0 v_1 \dotsc v_{p}v_0$ is an edge $(v_i, v_j)$ with $|i - j| \geq 2$. A chordless cycle is a cycle having no chord. A {\em hole} is a chordless cycle of length at least 4. 
A graph is called a {\em chordal graph} if it has holes as induced subgraphs.
A graph is called an {\em interval graph} if it is chordal and AT-free.
Alternatively, any interval graph has an interval representation.
It means that every vertex of an interval graph can be represented as an interval on the real line and two vertices are adjacent if and only if the intervals representing the corresponding vertices intersect.
A graph is called a {\em proper interval graph} if it is an interval graph with an interval representation such that no interval properly contains any other interval.
A graph is called a {\em bipartite permutation graph} if it is bipartite and AT-free.

A {\em sunflower} with $k$ {\em petals} and {\em core $Y$} is a family of sets $\{S_1, \dotsc , S_k\}$ such that $S_i \cap S_j = Y$ for all $i\neq  j$. The sets $S_i \setminus$ $Y$ are petals and we require none of them to be empty. We have the following lemma.

\begin{lemma}\label{lemma:sunflower-lemma}
Let $\FF$ be a family of sets over a universe $U$, such that each set in $\FF$ has cardinality exactly $d$. If $|\FF| > d!(k-1)^d$, then $\FF$ contains a sunflower with $k$ petals and such a sunflower can be computed in time polynomial in $|\FF|, |U|$ and $k$.
\end{lemma}
\paragraph{\bf Parameterized Complexity:} A parameterized problem $L$ is a subset of $\Sigma^* \times \mathbb{N}$ for some finite alphabet $\Sigma$.
An instance of a parameterized problem $L$ is a pair $(x, k) \in \Sigma^* \times \mathbb{N}$ where $k$ is called the parameter and $x$ is called the input.
We assume that $k$ is given in unary and without loss of generality $k \leq |x|$, such that $|x|$ is the length of the input.

\begin{definition}[Fixed-Parameter Tractability]
A parameterized problem $L \subseteq \Sigma^* \times \mathbb{N}$ is said to be {\em fixed-parameter tractable} if there exists an algorithm $\AAA$ that given an input $(x, k)$, runs in $f(k)|x|^{c}$ time and correctly decides if $(x, k) \in L$ where $c$ is a fixed constant independent of $|x|$ and $k$.
\end{definition}

A closely related notion to fixed-parameter tractability is the notion of Kernelization defined below.

\begin{definition}[Kernelization]
Let $L \subseteq \sum^* \times \mathbb{N}$ be a parameterized language. Kernelization is a procedure that replaces the input instance $(I,k)$ by a reduced instance $(I^\prime, k^\prime)$ such that
\begin{itemize}
\item $k^\prime \leq f(k)$, $\vert I^\prime \vert \leq g(k)$ for some function $f, g$ depending only on $k$.
\item $(I, k) \in L$ if and only if $(I^\prime, k^\prime) \in L$.
\end{itemize} 
The reduction from $(I, k)$ to $(I^\prime, k^\prime)$ must be computable in $poly(\vert I \vert + k)$ time. If $g(k) = k^{\OO(1)}$ then we say that $L$ admits a {\it polynomial kernel}.
\end{definition}

For more details on parameterized complexity, we refer to \cite{cygan2015parameterized}.

We now end the section by introducing some notations and observations on {\pionepitwodeletion} that we use in the paper. Throughout this paper, we assume that the graphs in the forbidden families $\FF_1$ and $\FF_2$ associated to {\pionepitwodeletion} are connected which is true for most of the well-known graph classes. We use  $\Pi_{(1,2)}$ to denote the class of graphs whose connected components are in the graph classes $\Pi_1$ or $\Pi_2$.

\begin{definition}[Minimal Forbidden Family]
\label{defn:minimal-forbidden-family}
 A forbidden family $\FF$ for a graph class $\Pi$ is said to be {\em minimal} if for all graphs $H \in \FF$, we have that $\FF - \{H\}$ is not a forbidden family for $\Pi$.
\end{definition}

Let $\FF_1 \times \FF_2 = \{(H_1, H_2) : H_1 \in \FF_1 \textrm{ and } H_2 \in \FF_2 \}$. The following characterization for $\Pi_{(1,2)}$ is easy to see.

\begin{observation}
\label{lemma:characterization-pairs}
A graph $G$ is in the graph class $\Pi_{(1,2)}$ if and only if no connected component $C$ of $G$ contains $H_1$ and $H_2$ as induced graphs in $C$, where $(H_1, H_2) \in \FF_1 \times \FF_2$.
\end{observation}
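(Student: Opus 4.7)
The plan is to prove both implications directly from the definitions, using the fact that $\FF_i$ is the forbidden family for the class of graphs each of whose connected components lies in $\Pi_i$. Applied to a single connected component $C$, which viewed in isolation is a one-component graph, this definition says that $C \in \Pi_i$ if and only if $C$ contains no induced subgraph isomorphic to any member of $\FF_i$. With this equivalence in hand, the observation reduces to a per-component check.

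For the forward direction I would argue contrapositively. Suppose some component $C$ of $G$ contains induced copies of both $H_1 \in \FF_1$ and $H_2 \in \FF_2$. By the equivalence above, the presence of an induced $H_1$ forces $C \notin \Pi_1$, and the presence of an induced $H_2$ forces $C \notin \Pi_2$. Hence $C$ lies in neither class, and so $G \notin \Pi_{(1,2)}$.

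For the backward direction, assume that no component of $G$ simultaneously contains induced copies of some $H_1 \in \FF_1$ and some $H_2 \in \FF_2$. Then for every component $C$ of $G$, either $C$ contains no induced subgraph from $\FF_1$, placing $C$ in $\Pi_1$, or $C$ contains no induced subgraph from $\FF_2$, placing $C$ in $\Pi_2$. Either way every component of $G$ belongs to $\Pi_1 \cup \Pi_2$, so $G \in \Pi_{(1,2)}$.

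There is no real obstacle here; the one point worth flagging is that $\FF_i$ is defined as the forbidden family of the component-wise class rather than of $\Pi_i$ itself, which is exactly what makes the per-component test valid. This is the same distinction the authors emphasize with the cliques example, where $\FF_1 = \{P_3\}$ rather than $\{2K_1\}$, and invoking it is the one non-trivial line of the argument.
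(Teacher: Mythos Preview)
Your argument is correct and is exactly the direct unpacking of definitions the paper has in mind; the paper itself gives no proof beyond ``easy to see,'' and your two contrapositive/direct implications, together with the remark that $\FF_i$ is the forbidden family for the per-component class, constitute that easy verification.
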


Let $J_1, J_2 \subseteq V(G)$ such that $G[J_i]$ is isomorphic to graphs $H_i$ for $i \in \{1,2\}$ and $(H_1, H_2) \in \FF_1 \times \FF_2$. We call the sets $J_1$ and $J_2$ as the vertex sets of the pair $(H_1, H_2)$.

In the following lemma, we show that any solution to {\pionepitwodeletion} must either hit the vertex sets of a pair in $\FF_1 \times \FF_2$ or a path connecting them.

\begin{lemma}
\label{lemma:hitting-forbidden-pairs-with-path}
Let $J_1, J_2 \subseteq V(G)$ such that $G[J_i]$ is isomorphic to graphs $H_i$ for $i \in \{1,2\}$ and $(H_1, H_2) \in \FF_1 \times \FF_2$. Let $P$ be a path between $J_1$ and $J_2$ in the graph $G$.  Then any solution for {\pionepitwodeletion} with input graph $G$ contains one of vertices in the set $J_1 \cup J_2 \cup P$.
\end{lemma}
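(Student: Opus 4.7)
The plan is to prove the contrapositive: if $S \subseteq V(G)$ is disjoint from $J_1 \cup J_2 \cup V(P)$, then $S$ cannot be a solution to {\pionepitwodeletion} on $G$. This reduces the lemma to showing that after removing such an $S$, some connected component of $G - S$ simultaneously contains an induced copy of a graph in $\FF_1$ and an induced copy of a graph in $\FF_2$, which by Observation \ref{lemma:characterization-pairs} forbids the resulting graph from being in $\Pi_{(1,2)}$.

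First I would fix $S \subseteq V(G) \setminus (J_1 \cup J_2 \cup V(P))$ and observe that $J_1$, $J_2$, and $V(P)$ all survive in $G - S$. Since $G[J_1]$ is isomorphic to $H_1 \in \FF_1$ and $G[J_2]$ is isomorphic to $H_2 \in \FF_2$, the induced subgraphs $G[J_1]$ and $G[J_2]$ persist unchanged in $G - S$. Next, I would invoke the standing assumption made just before the statement that every graph in $\FF_1$ and $\FF_2$ is connected, so $J_1$ lies in a single connected component of $G - S$ and likewise for $J_2$.

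Then I would use the path $P$ to glue these two components together. Since $P$ is a path from some vertex of $J_1$ to some vertex of $J_2$ in $G$ and no vertex of $P$ (nor of $J_1$ or $J_2$) is deleted, $P$ remains a connected subgraph of $G - S$ that meets both $J_1$ and $J_2$. Hence $J_1 \cup J_2 \cup V(P)$ is contained in a single connected component $C$ of $G - S$. In particular, $C$ contains both $G[J_1] \cong H_1$ and $G[J_2] \cong H_2$ as induced subgraphs.

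By Observation \ref{lemma:characterization-pairs}, the presence of $H_1 \in \FF_1$ and $H_2 \in \FF_2$ as induced subgraphs of the same connected component $C$ of $G - S$ implies $G - S \notin \Pi_{(1,2)}$, so $S$ is not a solution. The only subtle point, and the one I would be most careful about, is to make explicit that membership in $\Pi_{(1,2)}$ is witnessed componentwise and that the forbidden families $\FF_1, \FF_2$ refer to components (as clarified in the paragraph following the definition of {\pionepitwodeletion}); once that is stated, Observation \ref{lemma:characterization-pairs} closes the argument with no further calculation.
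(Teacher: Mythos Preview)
Your proof is correct and follows essentially the same approach as the paper: assume a solution avoids $J_1 \cup J_2 \cup V(P)$, use connectedness of $H_1$, $H_2$, and $P$ to conclude that $J_1 \cup J_2 \cup V(P)$ lies in a single component $C$ of the remaining graph, and then observe that $C$ contains both $H_1$ and $H_2$, contradicting membership in $\Pi_{(1,2)}$. Your version is somewhat more explicit in invoking Observation~\ref{lemma:characterization-pairs}, but the argument is the same.
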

\begin{proof}
Suppose this is not the case. Since the graphs $H_1$ and $H_2$ are connected, the graph $G'$ induced by the set $J_1 \cup J_2 \cup P$ is a connected subgraph of $G$. If a solution $X$ does not intersect $J_1 \cup J_2 \cup P$, then $G'$ occurs in a connected component $C$ of the remaining graph $G-X$. The presence of graphs $H_1$ and $H_2$ in $C$ implies that $C$ is neither in $\Pi_1$ nor in $\Pi_2$ giving a contradiction that $X$ is a solution.
\end{proof}

We use the following reduction rule for {\pionepitwodeletion} whose correctness easily follows.

\begin{reduction rule}
\label{red-rule:removal-redundant-component}
If a connected component $C$ of $G$ is in $\Pi_1$ or in $\Pi_2$, then delete $C$ from $G$.
The new instance is $(G - V(C), k)$.
\end{reduction rule}


\section{{\fpionepitwodeletion} with forbidden paths}
\label{sec:finite-set-with-paths}

In this section, we restrict the problem to the case where both the forbidden families $\FF_1$ and $\FF_2$ are finite and there exists a path $P_\alpha$ in one of the families, say $\FF_1$ where $\alpha$ is some constant. Observe that for several natural graph classes (like cluster graphs, edgeless graphs, split cluster graphs, cographs) paths above a certain length is forbidden.

We define the problem below.

\defproblem{\fpionepitwodeletionpath}{An undirected graph $G$, and an integer $k$. Furthermore, for a fixed integer $\alpha$, the path $P_\alpha \in \FF_1$ .}{Does $G$ have a set $S$ of at most $k$ vertices such that every connected component of $G - S$ is in $\Pi_1$ or in $\Pi_2$?}

Since both $\FF_1$ and $\FF_2$ are finite, the set  $\FF_1 \times \FF_2$ is also finite.

Let us look at a pair $(H_1, H_2)$ such that the distance between its vertex sets is the smallest among all the pairs in $\FF_1 \times \FF_2$. We call such a pair the {\em closest} pair. We claim below that this distance is bounded by $\alpha$.

\begin{lemma} Let $(H_1, H_2) \in \FF_1 \times \FF_2$ be a closest pair in the graph $G$ and let  $(J_1, J_2)$ be a pair of vertex subsets corresponding to the pair. Let $P$ be a shortest path between $J_1$ and $J_2$. Then $|V(P)| \leq \alpha$.
\end{lemma}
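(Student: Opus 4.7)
The plan is a short proof by contradiction: suppose $|V(P)| \geq \alpha + 1$ and exhibit a pair in $\FF_1 \times \FF_2$ whose vertex-set distance in $G$ is strictly smaller than $d_G(J_1, J_2)$, contradicting the choice of $(H_1, H_2)$ as a closest pair. The only tool I need is the standard observation that a shortest path in $G$ is always induced (any chord would shortcut the path), so that every $\alpha$-vertex sub-path of $P$ gives an induced $P_\alpha$ in $G$, which by hypothesis lies in $\FF_1$.

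Concretely, I would write $P = u_1 u_2 \cdots u_m$ with $u_1 \in J_1$ and $u_m \in J_2$, so that $d_G(J_1, J_2) = m - 1$. Assuming $m \geq \alpha + 1$, I take the prefix $P' = u_1 u_2 \cdots u_\alpha$; since $P$ is induced, $G[V(P')] \cong P_\alpha \in \FF_1$, and therefore $(P_\alpha, H_2) \in \FF_1 \times \FF_2$ has a realisation in $G$ with vertex subsets $V(P')$ and $J_2$. Using the suffix $u_\alpha u_{\alpha+1} \cdots u_m$ of $P$, I obtain
\[
d_G(V(P'), J_2) \;\leq\; d_G(u_\alpha, u_m) \;\leq\; m - \alpha \;<\; m - 1 \;=\; d_G(J_1, J_2),
\]
where the strict inequality uses $\alpha \geq 2$; the degenerate case $\alpha = 1$ makes $\Pi_1$ trivially empty and can be ignored. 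This contradicts the choice of $(H_1, H_2)$ as a closest pair, so $|V(P)| \leq \alpha$.

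I do not expect a real obstacle here; the whole argument is the remark ``cut out a shorter forbidden representative from the shortest path, and it will sit closer to $J_2$''. The only bookkeeping points are that the candidate pair $(P_\alpha, H_2)$ is a genuine element of $\FF_1 \times \FF_2$ realised in $G$, and that its vertex-set distance to $J_2$ strictly drops — both handled by the hypothesis $P_\alpha \in \FF_1$ together with $m \geq \alpha + 1$ and $\alpha \geq 2$.
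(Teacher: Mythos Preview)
Your proof is correct and follows essentially the same idea as the paper's: extract an induced $P_\alpha$ from the shortest path $P$ and pair it with $H_2$ to obtain a strictly closer pair. The only cosmetic difference is that the paper takes the \emph{last} $\alpha$ vertices of $P$ (the suffix ending in $J_2$), which immediately gives distance $0$ to $J_2$ and thus avoids your side remark about $\alpha \geq 2$; your prefix version works just as well once that harmless caveat is noted.
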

\begin{proof}
Suppose this is not the case. Then let us look at the set $J_1'$  of the last $\alpha$ vertices of $P$ which is isomorphic to $P_\alpha \in \FF_1$. Then the pair of vertex subsets $(J_1', J_2)$ corresponds to the pair $(P_\alpha, H_2)$ in the graph $G$ with the distance between them as zero. This contradicts that $(J_1, J_2)$ is the vertex subsets of the closest pair.
\end{proof}

Hence, we have the following branching rule for closest pairs where we branch on the vertex subsets plus the vertices of the path. The correctness follows from Lemma \ref{lemma:hitting-forbidden-pairs-with-path}.

\begin{branching rule}
\label{branch-rule:simple-pair-branching}
Let $(J^*, T^*)$ be the vertex subsets of a  closest pair $(H_1, H_2) \in \FF_1 \times \FF_2$. Let $P^*$ be a path corresponding to this forbidden pair. Then for each $v \in J^* \cup T^* \cup P^*$, we delete $v$ and decrease $k$ by $1$, resulting in the instance $(G-v,k -1)$.
\end{branching rule}

Using this branching rule, we have an FPT algorithm for {\fpionepitwodeletionpath}. Let $d_i$ be the size of a maximum sized finite forbidden graph in $\FF_i$ for $i \in \{1,2\}$. 
Observe that $|(J^* \cup T^*) \cap P^*| \geq 2$ as $P^*$ be a shortest path between a vertex of $J^*$ and a vertex of $P^*$.
Therefore, $|J^* \cup T^* \cup P^*| \leq |J^*| + |T^*| + |P^*| - 2$.
As $|J^*| \leq d_1, |T^*| \leq d_2$ and $|P^*| \leq \alpha$, we have that $|J^* \cup T^* \cup P^*| \leq d_1 + d_2 + \alpha - 2$.
Let $c = d_1+d_2 + \alpha - 2 $.

\begin{theorem}
{\fpionepitwodeletionpath} can be solved in $c^k poly(n)$ time.
\end{theorem}

\begin{proof}
We describe our algorithm as follows.
Let $(G, k)$ be an input instance of {\fpionepitwodeletionpath}.
We exhaustively apply Reduction Rule~\ref{red-rule:removal-redundant-component} and Branching Rule~\ref{branch-rule:simple-pair-branching} in sequence to get an instance $(G', k')$. The algorithm finds the closest pair to apply Branching Rule~\ref{branch-rule:simple-pair-branching} by going over all pairs in $(H_1, H_2) \in \FF_1 \times \FF_2$  and going over all subsets of size $|V(H_1)| + |V(H_2)|$ of the graph (which is still a polynomial in $n$) and checking the distance between them. 

Every component of $G'$ is such that it is $\FF_1$-free or $\FF_2$-free or in other words in $\Pi_1$ or $\Pi_2$. Hence if $k' \geq 0$, we return yes-instance.
Otherwise, we return no-instance.
Since the largest sized obstruction in these rules is at most $c = d_1 + d_2 + \alpha - 2 $, the bounded search tree of the algorithm has $c^{k}$ nodes bounding the running time to $c^k poly(n)$. This completes the proof.
\end{proof}

We now describe a family of graphs such that instead of focusing that each component is free of pairs in $\FF_1 \times \FF_2$, we can check whether the graph is free of graphs in this family.

Let $\FF'$ be the minimal family of graphs (as in Definition \ref{defn:minimal-forbidden-family}) such that for each member $H \in \FF'$, there exist subsets $J_1, J_2 \subseteq V(H)$ such that $H[J_i]$ is isomorphic to $H_i \in \FF_i$ for $i \in \{1,2\}$ and $d_H(J_1, J_2) \leq \alpha-1$. 

From Lemma \ref{lemma:hitting-forbidden-pairs-with-path}, it can be inferred that any graph without any members from $\FF'$ as induced subgraphs belong to the graph class $\Pi_{1,2}$. Hence, the set $\FF'$ is the forbidden family for the graph class $\Pi_{1,2}$. We now use this family to give a kernel and approximation algorithm for {\fpionepitwodeletionpath}.

The size of any member $H \in \FF'$ is bounded by $d = d_1+d_2 + \alpha - 2 $. Suppose not. Then we can identify a vertex $v \in V(H)$ which is not part of $J_1, J_2$ and a path $P$ of length  at most $\alpha-1$ between them. But then the graph  $H \setminus \{v\}$ is also in $\FF'$ contradicting that $\FF'$ is minimal. This also proves that the size of $\FF'$ is bounded by $2^{d+1 \choose 2}$ which is the bound on the number of graphs of at most $d$ vertices.

\begin{theorem}
\sloppy {\fpionepitwodeletionpath} admits a $d$-approximation algorithm, and a $\OO(k^d)$ sized kernel.
\end{theorem}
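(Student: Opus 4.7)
Both parts of the statement exploit the structural fact established just above: $\FF'$ is a forbidden characterisation of $\Pi_{(1,2)}$ whose members have at most $c = d_1 + d_2 + \alpha - 2$ vertices, so every induced copy of a member of $\FF'$ in $G$ can be enumerated in $n^{\OO(c)}$ time.

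For the $c$-approximation the plan is the standard disjoint-obstructions argument. Starting from $S = \emptyset$, I would repeatedly search the residual graph $G - S$ for an induced copy $H$ of some $F \in \FF'$; whenever one exists, add all at most $c$ vertices of $V(H)$ to $S$ and iterate. When the loop halts $G - S$ is $\FF'$-free and hence lies in $\Pi_{(1,2)}$, so $S$ is feasible. Since each $H$ is chosen inside the current residual graph, the copies produced across iterations are pairwise vertex disjoint, and every feasible solution -- in particular an optimal one $S^{\mathrm{opt}}$ -- must contribute at least one vertex to each of them; if $t$ iterations are performed then $|S^{\mathrm{opt}}| \geq t$ while $|S| \leq ct$, yielding the approximation ratio $c$.

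For the kernel the plan is to apply the Sunflower Lemma to the hypergraph $\mathcal{H}$ whose hyperedges are the vertex sets of induced copies of members of $\FF'$ in $G$; each hyperedge has size at most $c$. The reduction rule is: whenever $|\mathcal{H}| > c!\,(k+1)^c$, extract a sunflower on $k + 2$ petals with common core $Y$ via the Erd\H{o}s--Rado bound; a hitting set of size at most $k$ cannot meet $k+2$ pairwise disjoint petals and therefore must contain a vertex of $Y$, so discarding any one of the sunflower's petals from $\mathcal{H}$ leaves a hypergraph with exactly the same size-$k$ hitting sets. Applying this exhaustively reduces $\mathcal{H}$ to a subfamily $\mathcal{H}'$ with $|\mathcal{H}'| = \OO(k^c)$ hyperedges; the output is the instance $(G[V^{*}], k)$ where $V^{*} := \bigcup_{F \in \mathcal{H}'} F$ has at most $c \cdot |\mathcal{H}'| = \OO(k^c)$ vertices.

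The correctness step -- which I expect to be the main place careful bookkeeping is needed -- is to verify that $(G, k)$ and $(G[V^{*}], k)$ are equivalent. Let $\mathcal{H}_0$ denote the full hypergraph of induced $\FF'$-copies of $G$ and $\mathcal{H}(G[V^{*}])$ the analogous hypergraph for $G[V^{*}]$; the sunflower reduction already guarantees that $\mathcal{H}_0$ and $\mathcal{H}'$ admit the same size-$k$ hitting sets. It therefore suffices to sandwich $\mathcal{H}' \subseteq \mathcal{H}(G[V^{*}]) \subseteq \mathcal{H}_0$ and invoke monotonicity of hitting set: the first inclusion holds because every hyperedge of $\mathcal{H}'$ is already contained in $V^{*}$ and $G[V^{*}]$ coincides with $G$ on it, while the second holds because induced subgraphs of $G[V^{*}]$ are induced subgraphs of $G$. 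Since adding constraints can only make the hitting problem harder, the agreement at the two ends of the sandwich propagates to the middle, delivering the $\OO(k^c)$-vertex kernel.
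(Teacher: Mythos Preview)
Your proposal is correct and follows essentially the same approach as the paper: both reduce the problem to implicit $c$-\textsc{Hitting Set} via the finite forbidden family $\FF'$ of bounded-size obstructions, then invoke the greedy disjoint-obstructions argument for the approximation and the Sunflower Lemma for the kernel. The paper's own proof is terse and simply cites these as standard, whereas you spell out the sunflower reduction and the sandwich correctness argument explicitly; there is no substantive difference in method.
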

\begin{proof}

We show that an instance of {\fpionepitwodeletionpath} can be reduced to an instance of  of {\sc $d$-Hitting Set} problem defined as follows.

\defparproblem{{\sc $d$-Hitting Set}}{A family $\SSS$ of sets over a universe $U$, where each set in $\SSS$ has size at most $d$, and an integer $k$.}{$k$.}{Does there exist a subset $X \subseteq U, |X| \leq k$ such that $X$ contains at least one element from each set in $\SSS$?}

An instance $(G, k)$ of {\fpionepitwodeletionpath}, we construct an instance $(U,\SSS,k)$ {\sc $d$-Hitting Set} problem with $U= V(G)$ and  $\SSS$ being the vertex sets of all induced graphs of $G$ that is isomorphic to a graph in $\FF$. Note that every set in $\SSS$ has size atmost $d= d_1+d_2+\alpha-2$. It is easy to see that $(G,k)$ is a yes-instance of {\fpionepitwodeletionpath} if and only if $(U,\SSS,k)$ is a yes-instance of {\sc $d$-Hitting Set}.

It is folklore that the {\sc $d$-Hitting Set} problem has a $d$-approximation algorithm and a kernel with $\OO(k^d)$ size. We adapt the techniques used in this to give a $d$-approximation algorithm and a $\OO(k^d)$ sized kernel for {\fpionepitwodeletionpath}.

%

\medskip \noindent {\bf Approximation Algorithm.} Let us define the family $\mathcal{S}$ as follows. Initially $\mathcal{S} = \emptyset$. In polynomial time, we find  a subset $T \subseteq V(G)$ such that $G[T]$ is isomorphic to a member in $\FF$. We add $T$ to $\mathcal{S}$ and update $G$ to $G-T$. We repeat this step until it is no longer applicable. 

Let $S_{OPT}$ be the  minimum sized set such that in the graph $G - S_{OPT}$, every connected component is either in $\Pi_1$ or $\Pi_2$. Let $|S_{OPT}| = OPT$. Let $S$ be the set of vertices that is present in any pair of graphs in $\mathcal{S}$. From Lemma \ref{lemma:hitting-forbidden-pairs-with-path}, we can conclude that any feasible solution of $G$ must contains a vertex from each member of the family $\mathcal{S}$. Since the members of $\mathcal{S}$ are pairwise disjoint, we have that $|S_{OPT}| \geq |\mathcal{S}|$.

We have $|S|  \leq \max_{T \in \mathcal{F}}|T| \cdot |\mathcal{S}| \leq d|S_{OPT}|$. Thus we have a $d$-approximation algorithm for {\fpionepitwodeletionpath}.

\medskip \noindent {\bf Kernel.} Given an instance $(G,k)$ of {\fpionepitwodeletionpath}, we construct an equivalent instance $(U,\SSS,k)$ {\sc $d$-Hitting Set} as described earlier. For each $d' \in [d]$, we repeatedly do the following. If $|\SSS| > d'!(k+1)^{d'}$, from Lemma \ref{lemma:sunflower-lemma}, we obtain a sunflower of size $k+2$ in polynomial time. Let $Y$ be the core of sunflower and $S_1 \setminus Y, \dotsc, S_{k+2} \setminus Y$ be the $k+2$ petals, which are non-empty. We remove $S_{k+2}$ from $\SSS$.
Let $T = \bigcup_{S \in \SSS} S$. We reduce $(G,k)$ to $(G[T],k)$.

We now claim that the reduction rule is safe. The forward direction is trivial as $G[T]$ is an induced subgraph of $G$. In the reverse direction, suppose $X$ is a solution of size $k$ of the instance $(G[T],k)$. Suppose there exist a subset $U \subseteq V(G) \setminus X$ such that $G[U] \in \FF$. If $U \subseteq V(G[T])$, it contradicts that $X$ is a solution of $(G[T],k)$. In the other case, $U$ part of a sunflower. Let this sunflower be $S_1, S_2, \dotsc, S_{k+1}, U$ with core $Y$. Since graphs induced by $S_1, \dotsc, S_{k+1}$ in $G[T]$ are in $\FF$, $X$ should intersect each of them. Suppose $X$ does not intersect $Y$. Then $X$ must contain an element from each of the petals $S_1 \setminus Y, \dotsc, S_{k+1} \setminus Y$. However, since $S_i \cap S_j = Y$ for all $i,j \in [k+1], i \neq j$, we have $|X| > k$, a contradiction. Thus $X$ must contain an element from $Y$. But in this case, $X$ intersects $U$ as well, giving a contradiction.

When the reduction rule is not applicable, we have $|\SSS| \leq d!(k+1)^d$ and thus, $|T| \leq d \cdot d!(k+1)^d$. Thus we have a kernel with $\OO(k^d)$ vertices.
\end{proof}

\section{{\pionepitwodeletion} with a constant number of forbidden pairs}
\label{sec:constant-forbidden-pair}

\subsection{Forbidden Characterization for {\pionepitwodeletion}}
\label{sec:forbidden-characterization}


%

Unfortunately, the algorithm in Section \ref{sec:finite-set-with-paths} does not work when at least one of the sets $\FF_1$ or $\FF_2$ is infinite as the family $\FF_1 \times \FF_2$ is no longer finite. But we observed that for many problems, branching on most of the pairs in $\FF_1 \times \FF_2$ could be avoided.

We aim to identify such `redundant' pairs in $\FF_1 \times \FF_2$. Instead of ensuring that such pairs are absent in a graph for an instance of {\pionepitwodeletion}, we identify some graphs which are forbidden in such a graph. Since hitting the vertices of of forbidden induced graphs is the familiar framework in designing algorithms for vertex deletion problems, such a characterization for $\Pi_{(1,2)}$ with forbidden graphs and irredundant pairs would be useful.

\sloppy For example, let $\FF_1 = \{C_3, C_4 \}$ and $\FF_2 = \{D_4, C_4 \}$ where $D_4$ is the graph obtained after removing an edge from $K_4$. We have $\FF_1 \times \FF_2 = \{(C_3, D_4),(C_3, C_4),(C_4, D_4), (C_4, C_4)\}$. Since $C_4 \in \FF_1 \cap \FF_2$, the graph $C_4$ is forbidden for the graph class $\Pi_{1,2}$. Hence the pairs $(C_4, C_4), (C_3, C_4)$ and $(C_4, D_4)$ are redundant by identifying that $C_4$ is forbidden. Now note that $C_3$ is an induced subgraph of the graph $D_4$. Hence the pair $(C_3, D_4)$ can also be made redundant by identifying that $D_4$ is forbidden for $\Pi_{(1,2)}$.

We now formalize such forbidden graphs in the graph class $\Pi_{(1,2)}$ by defining the notion of super-pruned family.
Recall that if a family of graphs is {\em minimal}, no element of it is an induced subgraph of some other element of the family. 

\begin{definition}[Super-Pruned Family]

An element of a {\em super-pruned family} $\ssp(\GG_1, \GG_2)$ of two minimal families of graphs $\GG_1$ and $\GG_2$ is a graph that $(i)$ belongs to one of the two families and $(ii)$ has an element of the other family as induced subgraph. 

The family $\ssp(\GG_1, \GG_2)$ can be obtained from an enumeration of all pairs in $\GG_1 \times \GG_2$ and adding the supergraph if one of the graphs is an induced subgraph of the other. The family obtained is made minimal by removing the elements that are induced subgraphs of some other elements.
\end{definition}

For example, let ($\Pi_1$,$\Pi_2$) be (Interval, Trees), with the forbidden families $\FF_1 = \{ \textrm{net, sun, long-claw, whipping top,} \dagger\textrm{-AW}, \ddagger\textrm{-AW} \} \cup \{C_i:i \geq 4\}$ (See Figure \ref{fig:interval-obstruction}) and $\FF_2$ as the set of all cycles. Note that all graphs $C_i$ with $i \geq 4$ are in $\ssp(\FF_1 , \FF_2)$ as they occur in both $\FF_1$ and $\FF_2$. The remaining pairs of $\FF_1 \times \FF_2$ contain triangles from $\FF_2$. If the graph from $\FF_1$ is a net, sun, whipping top, $\dagger\textrm{-AW}$ or $\ddagger\textrm{-AW}$, it contains triangle as an induced subgraph. Hence these graphs are also in the family $\ssp(\FF_1 , \FF_2)$. 

We now show that graphs in $\ssp(\FF_1 , \FF_2)$  are forbidden in the graph class $\Pi_{(1,2)}$. 

\begin{lemma}
\label{lemma:super-prune-forbidden}
If a graph $G$ is in the graph class $\Pi_{(1,2)}$, then no connected component of $G$ contains a graph in $\ssp(\FF_1 , \FF_2)$ as induced subgraphs.
\end{lemma}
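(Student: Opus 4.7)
The plan is to proceed by contradiction, exploiting the definitions of $\Pi_{(1,2)}$ and of the super-pruned family directly. Suppose $G \in \Pi_{(1,2)}$, yet some connected component $C$ of $G$ contains a graph $H \in \ssp(\FF_1,\FF_2)$ as an induced subgraph. By the definition of the super-pruned family, $H$ belongs to one of the two forbidden families; by symmetry we may assume $H \in \FF_1$, and in this case $H$ contains (as an induced subgraph) some graph $H' \in \FF_2$. Note that ``induced subgraph of'' is transitive, so $H'$ is also an induced subgraph of $C$.

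Since $G \in \Pi_{(1,2)}$, its component $C$ lies in $\Pi_1$ or in $\Pi_2$. I would dispose of the two cases separately. If $C \in \Pi_1$, then viewing $C$ on its own as a graph, every connected component of $C$ (there is only one, namely $C$ itself) lies in $\Pi_1$; hence $C$ must be $\FF_1$-free, contradicting the fact that $C$ contains $H \in \FF_1$ as an induced subgraph. If instead $C \in \Pi_2$, the same reasoning gives that $C$ is $\FF_2$-free, which contradicts the inclusion of $H' \in \FF_2$ as an induced subgraph of $C$ obtained in the previous paragraph.

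In either case we derive a contradiction, completing the proof. The argument does not actually need minimality of $\FF_1$ or $\FF_2$, only the fact (stated in the paper) that $\FF_i$ is a forbidden family for the class of graphs whose connected components are in $\Pi_i$, together with the connectedness assumption on members of $\FF_1,\FF_2$ (so that the induced occurrences of $H$ and of $H'$ lie inside the single component $C$). I do not foresee a significant obstacle: the only subtle point is to remember that $\FF_i$ is defined as the forbidden family for the ``union-of-components'' version of $\Pi_i$ and to observe that this specializes cleanly to a single connected component, which is exactly the situation here.
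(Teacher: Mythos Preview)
Your proposal is correct and follows essentially the same approach as the paper: both proceed by contradiction, unpack the definition of $\ssp(\FF_1,\FF_2)$ to obtain a member of $\FF_1$ and a member of $\FF_2$ (one contained in the other) both present in the component $C$, and conclude that $C$ can lie in neither $\Pi_1$ nor $\Pi_2$. Your additional remarks about transitivity of induced subgraphs and the connectedness assumption are accurate but not needed beyond what the paper already implicitly uses.
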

\begin{proof}
Suppose a graph $H \in \ssp(\FF_1 , \FF_2)$ occur as induced subgraph of a connected component $C$ of  $G$. From the definition of Super-Pruned Family, we can associate a pair 
$(H_1, H_2) \in \FF_1 \times \FF_2$ to $H$ such that either $H$ is isomorphic to $H_1$ and $H_2$ is an induced subgraph of $H_1$ or vice-versa. 
Without loss of generality, let us assume the former. Since $H_i \in \FF_i$, we know that $C$ is not in the graph class $\Pi_i$ for $i \in \{1,2\}$.
This contradicts that $G$ is in the graph class $\Pi_{(1,2)}$.
\end{proof}




Hence any pair containing a graph from $\ssp(\FF_1 , \FF_2)$ are redundant. But $\ssp(\FF_1 , \FF_2)$ does not capture all the pairs in $\FF_1 \times \FF_2$. We now define the following family to capture the remaining pairs.

\begin{definition}[Forbidden Pair Family]
\label{forbpair}
A {\em forbidden pair family} $\FF_p$, of $\FF_1$ and $\FF_2$, consists of all pairs $(H_1,H_2) \in \FF_1 \times \FF_2$ such that both $H_1 \notin \ssp(\FF_1 , \FF_2)$ and $H_2 \notin \ssp(\FF_1 , \FF_2)$.
\end{definition}

Informally, a pair $(H_1, H_2)$ is part of a forbidden pair family $\FF_p$ if at least one of $H_1$ and $H_2$ does not belong to $\ssp(\FF_1, \FF_2)$.
For example, if $\Pi_1$ is the class of interval graphs and $\Pi_2$ is the class of forests, we have already shown that $\ssp(\FF_1 , \FF_2)$ contains all the graphs in $\FF_1$ except long-claw. The only remaining pair is $($long-claw, triangle$)$ and the singleton set containing this pair forms the forbidden pair family.
Now we characterize $\Pi_{(1,2)}$ based on the super-pruned family and the forbidden pair family associated with $\FF_1$ and $\FF_2$ as follows.
This is used in the algorithms in Section \ref{sec:constant-forbidden-pair}.

\begin{lemma}
\label{lemma:forbidden-pair-characterization}
The following statements are equivalent.
\begin{itemize}
\item Each connected component of $G$ is either in $\Pi_1$ or $\Pi_2$.
\item  The graph $G$ does not contain graphs in the super-pruned family $\ssp(\FF_1 , \FF_2)$ as induced subgraphs. Furthermore, for pairs $(H_1, H_2)$ in the forbidden pair family of $\FF_1$ and $\FF_2$,  $H_1$ and $H_2$ both cannot appear as induced subgraphs in a connected component of $G$.
\end{itemize}
\end{lemma}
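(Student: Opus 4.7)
The plan is to prove the two implications separately and then combine them. Both use only the definitions of $\ssp(\FF_1,\FF_2)$ and of the forbidden pair family $\FF_p$, together with Observation~\ref{lemma:characterization-pairs} and Lemma~\ref{lemma:super-prune-forbidden}.

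For the forward direction, assume every component of $G$ lies in $\Pi_1$ or $\Pi_2$. The absence of induced subgraphs from $\ssp(\FF_1, \FF_2)$ is immediate from Lemma~\ref{lemma:super-prune-forbidden}. For the second conjunct I argue by contradiction: if some pair $(H_1, H_2) \in \FF_p$ had both $H_1$ and $H_2$ appearing as induced subgraphs of a single component $C$, then $C$ would contain an induced copy of a graph in $\FF_1$ and of a graph in $\FF_2$, hence $C \notin \Pi_1$ and $C \notin \Pi_2$, contradicting the assumption.

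For the backward direction I use the contrapositive. Suppose some component $C$ of $G$ lies in neither $\Pi_1$ nor $\Pi_2$; by Observation~\ref{lemma:characterization-pairs}, $C$ contains induced copies of some $H_1 \in \FF_1$ and some $H_2 \in \FF_2$. I split into two cases. \emph{Case A:} some element $K$ of the (minimized) super-pruned family is an induced subgraph of $H_1$ or $H_2$ (in particular, if $H_1$ or $H_2$ itself lies in $\ssp(\FF_1,\FF_2)$, we may take $K$ to be that graph). Then $K$ is an induced subgraph of $G$, violating the first bullet of statement (2). \emph{Case B:} no such $K$ exists. I claim then that $H_1,H_2 \notin \ssp(\FF_1,\FF_2)$, so that $(H_1,H_2) \in \FF_p$ by definition; since $H_1$ and $H_2$ both appear as induced subgraphs of the same component $C$, the second bullet of statement (2) is violated.

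The one delicate point, and the place I expect the main work, is ruling out in Case B the possibility that, say, $H_1$ belongs to the \emph{pre-minimized} super-pruned family (because it contains some $H_2' \in \FF_2$ as an induced subgraph) while being discarded during the minimization step. The argument is that if $H_1$ is pre-minimized but removed, it must contain a strictly smaller element of the pre-minimized family as an induced subgraph; iterating this and invoking well-foundedness of induced-subgraph containment on a finite witness yields an element of the minimized $\ssp(\FF_1,\FF_2)$ that is an induced subgraph of $H_1$, putting us back into Case A. The same argument applies symmetrically to $H_2$. Once this observation is recorded, the two directions together give exactly the claimed equivalence.
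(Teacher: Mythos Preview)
Your proof is correct and follows essentially the same route as the paper: the forward direction via Lemma~\ref{lemma:super-prune-forbidden} plus a direct contradiction, and the backward direction by contrapositive, picking $H_1\in\FF_1$, $H_2\in\FF_2$ in a bad component and arguing that either an element of $\ssp(\FF_1,\FF_2)$ shows up or $(H_1,H_2)\in\FF_p$.

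One remark: the ``delicate point'' you flag is not actually needed given how you set up Cases A and B. In Case B you assume no element of the (minimized) $\ssp(\FF_1,\FF_2)$ is an induced subgraph of $H_1$ or $H_2$; since every graph is an induced subgraph of itself, this already forces $H_1,H_2\notin\ssp(\FF_1,\FF_2)$, hence $(H_1,H_2)\in\FF_p$ directly from the definition. The well-foundedness argument is correct but superfluous. In fact, your case split is cleaner than the paper's own converse argument, which splits on whether $H_1$ and $H_2$ are induced subgraphs of \emph{each other} and then asserts that otherwise $H_1,H_2\notin\ssp(\FF_1,\FF_2)$ --- a step that, as you implicitly notice, does not immediately follow, since $H_1$ could contain some \emph{other} $H_2'\in\FF_2$.
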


\begin{proof}
To  prove the forward direction, note that from Lemma \ref{lemma:super-prune-forbidden}, the $G$ does not contain graphs in the super-pruned family $\ssp(\FF_1 , \FF_2)$ as induced subgraphs.
Hence, suppose that there exists a pair $(H_1, H_2) \in \FF_p$ in a connected component $\chi$ of $G$. But then $\chi$ cannot be in $\Pi_1$ due to the presence of $H_1$ and cannot be in $\Pi_2$ due to the presence of $\Pi_2$ giving a contradiction.
To prove the converse, suppose that $G$ contains a component $\chi$ which is neither in $\Pi_1$ nor in $\Pi_2$. Then there exist graphs $H_1 \in \FF_1$ and $H_2 \in \FF_2$ occuring as induced subgraphs of $\chi$. If $H_1$ occurs as an induced subgraph of $H_2$ or vice-versa, then the supergraph occurs in $\ssp(\FF_1 , \FF_2)$ giving a contradiction. Else we have $H_1 \in \FF_1 \setminus \ssp(\FF_1 , \FF_2)$ and $H_2 \in \FF_2 \setminus \ssp(\FF_1 , \FF_2)$. Hence $(H_1, H_2) \in \FF_p$ giving a contradiction.
\end{proof}

We now define useful notions of forbidden sets and closest forbidden pairs for the graph class $\Pi_{(1,2)}$.

\begin{definition} We call a minimal vertex subset $Q \subseteq V(G)$ as a {\em forbidden set} corresponding to the graph class $\Pi_{(1,2)}$ if $G[Q]$ is isomorphic to a graph in  $\ssp(\FF_1 , \FF_2)$ or $G[Q]$ is connected and contains both $H_1$ and $H_2$ as induced subgraphs for some forbidden pair $(H_1, H_2)$ of $\Pi_{(1,2)}$.
\end{definition}

\begin{definition}
We say that a forbidden pair $(H_1, H_2)$ is a {\em closest forbidden pair} in a graph $G$ if there exists subsets $J_1, J_2 \subseteq V(G)$ such that $G[J_1]$ is isomorphic to $H_1$, $G[J_2]$ is isomorphic to $H_2$ and the distance between $J_1$ and $J_2$ in $G$ is the smallest among all such pairs 
over all forbidden pairs of $\FF_1$ and $\FF_2$. We call the pair of vertex subsets $(J_1, J_2)$ as the {\em vertex subsets corresponding to the closest forbidden pair}. We call a shortest path $P$ between $J_1$ and $J_2$ as the {\em path corresponding to the closest forbidden pair}.
\end{definition}


 
\subsection{The case with forbidden paths}
\label{subsection-forbidden-paths}

We now aim to give an algorithm for {\pionepitwodeletion} when the forbidden families $\FF_1$ and $\FF_2$ be infinite but the forbidden pair family $\FF_p$ is finite. We also assume that $P_\alpha \in \FF_1$ and $\FF_p$ is given as input.

Let us list the conditions that {\pionepitwodeletion} is required to satisfy.

\begin{enumerate}
\item\label{1problem-condition-1} The vertex deletion problems for the graph classes $\Pi_1$ and $\Pi_2$ are FPT with algorithms to the respective classes being $\mathcal{A}_1$ and $\mathcal{A}_2$. 
\item\label{1problem-condition-2} $\mathcal{F}_p$, the forbidden pair family of $\FF_1$ and $\FF_2$ is of constant size.
\item\label{1problem-condition-4} The path $P_\alpha \in \FF_1$.
\end{enumerate}

\defproblem{\PalphafreePiOnePiTwo}{An undirected graph $G$, graph classes $\Pi_1, \Pi_2$ with associated forbidden families $\FF_1$ and $\FF_2$ such that Conditions \ref{1problem-condition-1} - \ref{1problem-condition-4} are satisfied and an integer $k$.}{Does $G$ have a set $S$ of at most $k$ vertices such that every connected component of $G - S$ is either in $\Pi_1$ or in $\Pi_2$?}

Since the forbidden pair set is finite, we have the following Branching Rule for closest forbidden pairs which is similar to Branching Rule \ref{branch-rule:simple-pair-branching} where we branch on the vertex subsets plus the vertices of the path. The correctness follows from Lemma \ref{lemma:hitting-forbidden-pairs-with-path}.

\begin{branching rule}
\label{branch-rule:closest-pair-branching-path}
Let $(J^*, T^*)$ be the vertex subsets of a  closest forbidden pair $(H_1, H_2) \in \FF_1 \times \FF_2$. Let $P^*$ be a path corresponding to this forbidden pair. Then for each $v \in J^* \cup T^* \cup P^*$, we delete $v$ and decrease $k$ by $1$, resulting in the instance $(G-v,k -1)$.
\end{branching rule}

From here on, assume that $(G, k)$ be an instance at which Reduction Rule~\ref{red-rule:removal-redundant-component} and Branching Rule~\ref{branch-rule:closest-pair-branching-path} are not applicable. Note that any component of $G$ is now free of forbidden pairs.

 Let $\mathcal{F}_p^1$ denote the family of graphs $H_1$ where $(H_1, H_2) \in \mathcal{F}_p$. Similarly define $\mathcal{F}_p^2$ as the family of graphs $H_2$ where $(H_1, H_2) \in \mathcal{F}_p$.  By the definition of forbidden pairs, the set of pairs $(H_1, H_2)$ with $H_i \in \mathcal{F}_p^i$ is the forbidden pair set $\FF_p$. Hence a graph that does not contain any forbidden pairs  is $\mathcal{F}_p^1$-free or $\mathcal{F}_p^2$-free. With this observation, the following results are easy to see.

\begin{lemma}
\label{lemma:path-component-with-no-H1}
Let $C$ be a connected component of $G$ that is $\mathcal{F}_p^i$-free for some $i \in \{1,2\}$. If $G[C]$ has no $\Pi_i$ vertex deletion set of size $k$, then $(G,k)$ is a no-instance. Otherwise, let $X$ be a minimum $\Pi_i$ vertex deletion set of $G[C]$. Then $(G, k)$ is a yes-instance  if and only if $(G - V(C), k - |X|)$ is a yes-instance.
\end{lemma}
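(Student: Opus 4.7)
Without loss of generality I treat the case $i = 1$, so $C$ is $\mathcal{F}_p^1$-free; the $i=2$ case is symmetric. The heart of the argument is a structural claim: for any solution $S$ of $(G, k)$, the restriction $S_C := S \cap V(C)$ is a $\Pi_1$ vertex deletion set of $G[C]$ (not merely a set whose removal leaves components in $\Pi_1 \cup \Pi_2$). Granting this, the lemma follows by straightforward bookkeeping. The ``no-instance'' case is immediate, since a solution $S$ of size $\leq k$ would restrict to a $\Pi_1$ deletion set of $G[C]$ of size $\leq k$, contradicting the hypothesis. For the forward direction of the equivalence, the claim gives $|S_C| \geq |X|$ by minimality of $X$, and then $S \setminus V(C)$ is a solution of $(G - V(C), k - |X|)$, using that $V(C)$ is a union of connected components of $G$ and hence is separated from the rest. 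For the backward direction, given any solution $S''$ of $(G - V(C), k - |X|)$, the set $S' := S'' \cup X$ has size $\leq k$, and every component of $G - S'$ is either a component of $G[C] - X$ (in $\Pi_1$ by choice of $X$) or a component of $(G - V(C)) - S''$ (in $\Pi_1 \cup \Pi_2$ by hypothesis).

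To prove the structural claim I proceed component by component. Let $C'$ be any connected component of $G[C] - S_C$; since $V(C)$ is a union of components of $G$ and $S_C = S \cap V(C)$, $C'$ is also a component of $G - S$ and therefore lies in $\Pi_1 \cup \Pi_2$. Suppose for contradiction that $C' \notin \Pi_1$, so that $C'$ contains an induced copy of some $H_1 \in \FF_1$. I split on whether $H_1 \in \ssp(\FF_1, \FF_2)$. In the first case, by the definition of the super-pruned family, $H_1$ contains an induced copy of some $H_2 \in \FF_2$, and hence so does $C'$, contradicting $C' \in \Pi_2$. In the second case, $H_1 \in \FF_1 \setminus \ssp(\FF_1, \FF_2)$ can be paired with any $H_2 \in \FF_2 \setminus \ssp(\FF_1, \FF_2)$ to yield a pair in $\mathcal{F}_p$, so $H_1 \in \mathcal{F}_p^1$; but then $C$ contains an induced $\mathcal{F}_p^1$-member, contradicting the hypothesis that $C$ is $\mathcal{F}_p^1$-free.

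The only non-trivial step is the structural claim, which is exactly where the $\mathcal{F}_p^i$-freeness of $C$ is used, in combination with the dichotomy $\FF_1 = (\FF_1 \cap \ssp(\FF_1, \FF_2)) \cup \mathcal{F}_p^1$: any induced $\FF_1$-member inside a hypothetical $\Pi_2$-component must come from $\ssp$, and such a graph drags an $\FF_2$-member along with it, violating the $\Pi_2$ status. Everything else is accounting enabled by $V(C)$ being a union of whole components of $G$.
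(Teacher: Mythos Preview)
Your proof is correct and follows essentially the same approach as the paper: both reduce the lemma to the structural claim that $S\cap V(C)$ is a $\Pi_i$-deletion set of $G[C]$, and then do the straightforward accounting. Your case analysis for the structural claim (splitting on whether the witnessing $H_1\in\FF_1$ lies in $\ssp(\FF_1,\FF_2)$ or in $\mathcal{F}_p^1$) is in fact cleaner and more explicit than the paper's argument, which relies on the ambient assumption that $C$ contains no forbidden pairs and is somewhat garbled as written; note only that your Case~2 implicitly uses $\FF_2\setminus\ssp(\FF_1,\FF_2)\neq\emptyset$, which holds whenever $\FF_p\neq\emptyset$ and is tacitly assumed throughout the paper.
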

\begin{proof}
Suppose that the premise of the statement holds and $k' = k - |X|$.

$(\Leftarrow)$ The backward direction is trivial.
If $G - V(C)$ has a feasible solution $S'$ of size at most $k'$, then we can add the minimum sized $\Pi_i$ vertex deletion set $X$ of $G$ and output $S' \cup X$ has a feasible solution of size $k' + |X| = k$.

$(\Rightarrow)$ We prove the forward direction now.
Suppose that $S^*$ be a feasible solution of size at most $k$ to $(G, k)$ and let $Y = S^* \cap V(C)$.
We prove that $D = (S^* \setminus Y) \cup X$ is also a feasible solution to $(G, k)$ and $|Y| \geq |X|$.
If we manage to prove that $Y$ is a $\Pi_i$-deletion vertex set of $C$ then we are done. This is because since $X$ is a minimum $\Pi_i$-deletion vertex set of $C$, $|X| \leq |Y|$. 

We now prove that $Y$ is indeed a $\Pi_i$-deletion vertex set of $C$. Suppose not. Then there exist a forbidden set $Q$ in $C-Y$. Note that $C$ does not contain any forbidden pairs. Hence from Lemma \ref{lemma:forbidden-pair-characterization}, $Q$ is isomorphic to a graph in $\ssp(\FF_1 , \FF_2)$. But in that case, from the definition of Super-Pruned Family, any graph  $H \in \ssp(\FF_1 , \FF_2)$ contains an induced subgraph which is isomorphic to some graph in $\FF_i$. Hence the presence of $Q$ contradicts that $Y$ is a $\Pi_i$-deletion set. This completes the proof.
\end{proof}

We are ready to prove our main theorem statement of this section. Let $f(k) = \max \{f_1(k), f_2(k)\}$ where $f_i(k) poly(n)$ is the running time for the algorithm $\mathcal{A}_i$. Also let $c$ be the maximum among the size of graphs in $\mathcal{G}_1$ and the integer $\max_{(H_1, H_2) \in \mathcal{F}_p}(|H_1|+|H_2|+ \alpha -2)$. 

\begin{theorem}
\label{theorem:finite-path-FPT-algorithm}
{\PalphafreePiOnePiTwo} can be solved in \\ ${\max\{f(k),c^k\}) poly(n)}$-time.
\end{theorem}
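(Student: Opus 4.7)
My plan is a branch-and-bound procedure whose branching step handles closest forbidden pairs and whose leaves are processed by the individual FPT algorithms $\mathcal{A}_1, \mathcal{A}_2$. First I would exhaustively apply Reduction Rule~\ref{red-rule:removal-redundant-component}, and then Branching Rule~\ref{branch-rule:closest-pair-branching-path}: at each internal node of the search tree I locate a closest forbidden pair $(H_1,H_2) \in \mathcal{F}_p$ with vertex subsets $(J_1, J_2)$ and a shortest path $P^*$ connecting them, and I branch by guessing a vertex of $J_1 \cup J_2 \cup V(P^*)$ to include in the deletion set. Correctness of this branching step is immediate from Lemma~\ref{lemma:hitting-forbidden-pairs-with-path}. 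Finding such a closest pair takes $n^{\OO(1)}$ time because $|\mathcal{F}_p|$ is a constant and each of its members has bounded size, so all candidate vertex subsets can be enumerated in polynomial time.

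To bound the branching factor, I would show $|V(P^*)| \leq \alpha$ for the closest forbidden pair. If not, the last $\alpha$ vertices of $P^*$ induce a copy of $P_\alpha \in \FF_1$ which, paired with $H_2$, yields a forbidden pair whose vertex subsets overlap and therefore have distance $0$, contradicting the closeness of $(H_1,H_2)$ since $d_G(J_1,J_2) \geq 1$. Consequently the branching set has size at most $|H_1| + |H_2| + \alpha - 2 \leq c$, so the search tree has depth at most $k$ and at most $c^k$ leaves.

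At every leaf, no component of the residual graph $G'$ contains a forbidden pair, so each component $C$ of $G'$ is $\mathcal{F}_p^i$-free for some $i \in \{1,2\}$. Invoking Lemma~\ref{lemma:path-component-with-no-H1}, I run $\mathcal{A}_i$ on $C$ with the residual budget, which either reports that $C$ has no $\Pi_i$-deletion set within the allowed budget (so this leaf fails) or returns a minimum $\Pi_i$-deletion set $X_C$. The leaf succeeds iff $\sum_C |X_C|$ is at most the residual budget, and $(G,k)$ is a yes-instance iff some leaf succeeds.

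For the running time, at a leaf of depth $d$ the residual budget is $k - d$, and the combined cost of running $\mathcal{A}_1$ and $\mathcal{A}_2$ over all components is $f(k-d) \cdot n^{\OO(1)}$. Summing over the tree yields $\sum_{d=0}^{k} c^d \cdot f(k-d) \cdot n^{\OO(1)}$, which is $\OO^*(\max\{f(k), c^k\})$: the $d=0$ term contributes $f(k)$ and the $d=k$ term contributes $c^k$, and for typical exponential $f$ the remaining terms form a geometric series dominated by one of these extremes up to a factor absorbed into $\OO^*$. The main obstacle I expect is the path-length bound: one must verify that the extracted $P_\alpha$-subgraph is genuinely usable as the first coordinate of a forbidden pair in $\mathcal{F}_p$, i.e., that $P_\alpha \notin \ssp(\FF_1,\FF_2)$, a consistency check between Condition~\ref{1problem-condition-4} and Definition~\ref{forbpair} without which the contradiction with closeness does not go through.
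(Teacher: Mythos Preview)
Your proposal is correct and mirrors the paper's proof almost exactly: exhaustively apply Reduction Rule~\ref{red-rule:removal-redundant-component} and Branching Rule~\ref{branch-rule:closest-pair-branching-path}, then on each remaining component (which is $\mathcal{F}_p^1$-free or $\mathcal{F}_p^2$-free) invoke $\mathcal{A}_1$ or $\mathcal{A}_2$ via Lemma~\ref{lemma:path-component-with-no-H1}, with the same running-time bookkeeping $c^{k-k'} f(k')\,\mathrm{poly}(n)$. The concern you flag about needing $P_\alpha \notin \ssp(\FF_1,\FF_2)$ is legitimate and the paper glosses over it as well; it is harmless in all of the paper's concrete applications, and in general can be dispatched by noting that if $P_\alpha \in \ssp(\FF_1,\FF_2)$ then some $P_\beta \in \FF_2$ with $\beta \le \alpha$, so either $(H_1,P_\beta) \in \FF_p$ gives the same closeness contradiction, or $P_\alpha$ itself is a bounded-size forbidden graph on which one can branch as a preliminary step.
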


\begin{proof}
We describe our algorithm as follows.
Let $(G, k)$ be an input instance of {\PalphafreePiOnePiTwo}.
We exhaustively apply Reduction Rule~\ref{red-rule:removal-redundant-component} and Branching Rule~\ref{branch-rule:closest-pair-branching-path} in sequence to get an instance $(G', k')$. The algorithm finds the closest pair to apply Branching Rule~\ref{branch-rule:closest-pair-branching-path} by going over all pairs in $(H_1, H_2) \in \FF_p$  and going over all subsets of size at most $|V(H_1)| + |V(H_2)|$ of the graph (which is still a polynomial in $n$) and checking the distance between them. 
Since the largest sized obstruction in these rules is at most $c$, the bounded search tree of the algorithm so far has $c^{k - k'}$ nodes.
Hence, every component of $G'$ is such that it is $\mathcal{F}_p^1$-free but has graphs in $\mathcal{F}_p^2$ as induced subgraphs, or vice-versa.
In the first case, we invoke the $f_1(|X|)poly(n)$-time algorithm for {\sc $\Pi_1$ Vertex Deletion} on $G[C]$ to compute a minimum $\Pi_1$ vertex deletion set $X$ of $G[C]$.
In the second case, we invoke the $f_2(|X|) poly(n)$-time algorithm for {\sc $\Pi_2$ Vertex Deletion}  to compute a minimum $\Pi_2$ vertex deletion set $X$ of $G[C]$. The correctness follows from Lemma \ref{lemma:path-component-with-no-H1}. 
This creates the total number of nodes in the search tree to $c^{k-k'}f(k')$, bounding the running time to $c^{k-k'}f(k')poly(n)$.
This completes the proof.
\end{proof}

We now give an approximation algorithm for {\PalphafreePiOnePiTwo} when for $i \in \{1,2\}$, {\sc $\Pi_i$ Vertex Deletion} has  an approximation algorithm with approximation factor $c_i$. 

\begin{theorem}
\label{theorem:finite-path-approximation-algorithm}
{\PalphafreePiOnePiTwo} has a $d$-approximation algorithm where $d=\max \{c, c_1, c_2\}$.
\end{theorem}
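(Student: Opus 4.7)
The plan is to mimic the FPT algorithm of Theorem~\ref{theorem:finite-path-FPT-algorithm}, replacing the branching step by greedy inclusion and the FPT subroutine by the assumed approximation subroutines for $\Pi_1$ and $\Pi_2$ vertex deletion. First, I would exhaustively apply Reduction Rule~\ref{red-rule:removal-redundant-component}. Then, while the current graph contains a ``small'' obstruction $Q$ of size at most $c$ -- either an induced copy of some graph in $\ssp(\FF_1,\FF_2)$, or a set $J^{*}\cup T^{*}\cup P^{*}$ arising from a closest forbidden pair as in Branching Rule~\ref{branch-rule:closest-pair-branching-path} -- I would add all of $Q$ to a partial solution $S_1$ and recurse on $G-Q$. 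Such an obstruction can be found in polynomial time exactly as in Theorem~\ref{theorem:finite-path-FPT-algorithm} because $|\ssp(\FF_1,\FF_2)|$ (at least for the representatives we need) and $|\FF_p|$ are both constant and $|Q|\le c$.

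The key analytic observation is that, by Lemma~\ref{lemma:super-prune-forbidden} and Lemma~\ref{lemma:hitting-forbidden-pairs-with-path}, any feasible solution, in particular an optimum $\mathrm{OPT}$, must contain at least one vertex of every such $Q$. Thus each greedy round charges at most $c$ vertices to a vertex of $\mathrm{OPT}$, giving $|S_1|\le c\cdot|\mathrm{OPT}_1|$, where $\mathrm{OPT}_1\subseteq \mathrm{OPT}$ is the portion used to cover these greedily removed obstructions. I also need to note that after removing $Q$, the residual instance $G-Q$ still has optimum at most $|\mathrm{OPT}|-|\mathrm{OPT}\cap Q|$, which is immediate since $\mathrm{OPT}\setminus Q$ remains feasible for $G-Q$.

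Once no such obstruction is found, Lemma~\ref{lemma:forbidden-pair-characterization} guarantees that each remaining connected component $C$ is either $\mathcal{F}_p^1$-free or $\mathcal{F}_p^2$-free. In the first case I would run the assumed $c_2$-approximation algorithm for $\Pi_2$ vertex deletion on $G[C]$; in the second case I would run the $c_1$-approximation algorithm for $\Pi_1$ vertex deletion. Lemma~\ref{lemma:path-component-with-no-H1} ensures that $\mathrm{OPT}\cap V(C)$ is itself a $\Pi_i$-deletion set of $G[C]$ of minimum relevant cost, so the factor $c_i$ is preserved on each component. Summing yields $|S_2|\le \max\{c_1,c_2\}\cdot|\mathrm{OPT}_2|$ where $\mathrm{OPT}_2=\mathrm{OPT}\setminus\mathrm{OPT}_1$, and the returned set $S_1\cup S_2$ satisfies $|S_1\cup S_2|\le \max\{c,c_1,c_2\}\cdot|\mathrm{OPT}| = d\cdot|\mathrm{OPT}|$.

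The main obstacle is making the charging argument airtight: I must verify carefully that the vertices picked by the greedy phase and those picked by the component-wise approximation phase are charged against disjoint portions of $\mathrm{OPT}$, so that the two bounds add rather than compound. This follows from the recursion on $G-Q$ together with the fact that after the greedy phase no forbidden set or closest forbidden pair survives, so the component-wise phase only ``sees'' the residual graph whose optimum is at most $|\mathrm{OPT}_2|$. The rest of the argument is routine bookkeeping on polynomial running time.
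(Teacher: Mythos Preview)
Your approach is essentially the paper's own proof: greedily remove the sets $J^*\cup T^*\cup P^*$ coming from closest forbidden pairs, then run the $c_i$-approximation for $\Pi_i$ vertex deletion on each remaining component, and charge the two phases against $S_{\mathrm{OPT}}\cap S_1$ and $S_{\mathrm{OPT}}\setminus S_1$ respectively. Two small corrections: first, the paper does \emph{not} include $\ssp(\FF_1,\FF_2)$ in the greedy phase here---it is unnecessary (Lemma~\ref{lemma:path-component-with-no-H1} already absorbs those obstructions into the per-component $\Pi_i$-deletion call), and since $\ssp(\FF_1,\FF_2)$ can be infinite your hedge ``at least for the representatives we need'' would have to be justified; second, you have the indices swapped---if $C$ is $\mathcal{F}_p^{\,1}$-free you must run the $c_1$-approximation for $\Pi_1$ (not $\Pi_2$), exactly as in Lemma~\ref{lemma:path-component-with-no-H1}. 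With those fixes, and with $\mathrm{OPT}_1$ made precise as $\mathrm{OPT}\cap S_1$, your argument coincides with the paper's.
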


\begin{proof}
Let $G$ be the input graph. Let $S_{OPT}$ be the  minimum sized set such that in the graph $G - S_{OPT}$, every connected component is either in $\Pi_1$ or $\Pi_2$. Let $|S_{OPT}| = OPT$.


Let us define the family $\mathcal{S}_1$ as follows. Initially $\mathcal{S}_1 = \emptyset$. In polynomial time, we find the closest forbidden pair $(J^*,T^*)$ in $G$ with $P^*$ being a shortest path between the pair, add $J^* \cup T^* \cup P^*$ to $\mathcal{S}_1$ and delete $J^* \cup T^* \cup P^*$ from $G$. We repeat this step until it is no longer applicable. Let $S_1$ be the set of vertices that is present in any pair of graphs in $\mathcal{S}_1$. From Lemma \ref{lemma:hitting-forbidden-pairs-with-path}, we can conclude that any feasible solution of $G$ must contains a vertex from each member of the family $\mathcal{S}_1$. Since the members of $\mathcal{S}_1$ are pairwise disjoint, we have that $|S_{OPT} \cap S_1| \geq |\mathcal{S}_1|$.

Let $G' = G - S_1$. We now construct a set $S_2$ as follows. Let $C_1, \dotsc , C_q$ be the connected components of $G'$. If a connected component $C_i$ has no graphs in $\mathcal{F}_p^j$ as induced subgraph for $j \in \{1,2\}$, we apply the $c_j$-approximation algorithm for {\sc $\Pi_j$ Vertex Deletion} on $G'[C_i]$ to obtain a solution $Z_i$. The correctness comes from Lemma \ref{lemma:path-component-with-no-H1}. We have $S_2 = \bigcup_{i \in [q]} Z_i$. Since $(S_{OPT} - S_1) \cap C_i$ is a feasible solution for the connected component $C_i$ of $G'$, we have that $|Z_i| \leq (\max \{c_1,c_2\})|(S_{OPT} - S_1) \cap C_i|$ for all $i \in [q]$.

We set $S = S_1 \cup S_2$. We have 
\begin{eqnarray*}
|S|  &=&  |S_1| + |S_2| \\
 & \leq & \left(\max_{(H_1, H_2) \in \mathcal{F}_p}(|H_1|+|H_2|+ \alpha -2 )\right)|\mathcal{S}_1| +  \sum\limits_{i=1}^{q} |Z_i|\\
& \leq & c|S_{OPT} \cap S_1| +  \\ & &\sum\limits_{i=1}^{q} (\max \{c_1,c_2\})|(S_{OPT} - S_1) \cap C_i|\\
& \leq & (\max \{c, c_1, c_2\})|S_{OPT}|
\end{eqnarray*}

Thus we have a $d$-approximation algorithm for {\PalphafreePiOnePiTwo}.
\end{proof}

We now give some examples of {\PalphafreePiOnePiTwo}.

\subsubsection{Cliques or $K_t$-free graph subclass}

We focus on the case of  {\pionepitwodeletion} when $\Pi_1$ is the class of cluster graphs(where every connected component of the graph is a clique) and $\Pi_2$ is any graph class such that the complete graph $K_t$ for some constant $t$ is forbidden in this graph and the problem {\sc $\Pi_2$ Vertex Deletion} is known to be FPT. We show that this problem is an example of {\PalphafreePiOnePiTwo}. We will see later that $\Pi_2$ can be many of the popular classes including planar graphs, cactus graphs, $t$-treewidth graph.

Let us formalize the problem.

 \defparproblem{{\CliqueorKtFree}}{An undirected graph $G = (V, E)$, an integer $k$ and $\Pi_2$ is the graph class which is $K_t$-free for some constant $t$ and {\sc $\Pi_2$ Vertex Deletion} has an FPT algorithm $\mathcal{A}$ with running time $f_2(k')poly(n)$ for solution size $k'$.}{$k$}{Is there $S \subseteq V(G)$ of size at most $k$ such that every connected component of $G - S$ is either a clique graph, or in $\Pi_2$?}
 
We now show that Conditions \ref{1problem-condition-1} - \ref{1problem-condition-4}  are satisfied by {\CliqueorKtFree}. The {\sc Clique Vertex Deletion} problem is same as {\sc Vertex Cover} in the complement graph; thus has  a $1.27^k poly(n)$ \cite{chen2010improved} time algorithm parameterized by the solution size $k$. The {\sc $\Pi_2$ Vertex Deletion} has an $f_2(k)poly(n)$ time FPT algorithms parameterized by the solution size $k$. Hence Condition \ref{1problem-condition-1} is satisfied by  {\CliqueorKtFree}. Since $P_3 \in \FF_1$, Condition \ref{1problem-condition-4} is satisfied by  {\CliqueorKtFree}. The only condition remaining to be proven is Condition \ref{1problem-condition-2} which we do below.

\begin{lemma} 
\label{lemma:CliqueorKtFree-condition-2}
Condition \ref{1problem-condition-2} is satisfied by {\CliqueorKtFree}.
\end{lemma}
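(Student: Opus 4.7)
The plan is to explicitly characterize the forbidden pair family $\FF_p$ and show it has size at most $t-1$. First I would recall that $\FF_1 = \{P_3\}$: a graph has each connected component equal to a clique if and only if it is $P_3$-free, since an induced $P_3$ immediately witnesses a non-clique component, and conversely any graph with no induced $P_3$ is a disjoint union of cliques. Consequently every pair in $\FF_p$ has first coordinate $H_1 = P_3$.

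Next I would analyze which $H_2 \in \FF_2$ can appear in $\FF_p$. By Definition~\ref{forbpair} we need $H_2 \notin \ssp(\FF_1,\FF_2)$, which (since $\FF_1 = \{P_3\}$) forces $H_2$ to contain no induced $P_3$. Using the classical characterization again, a connected $P_3$-free graph is a clique, so $H_2 \cong K_m$ for some $m \geq 2$ (recall we assume all graphs in $\FF_2$ are connected, and we may discard the trivial $K_1$ case).

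The key step is to bound $m$ using the hypothesis that $K_t \in \FF_2$ (since $\Pi_2$ is $K_t$-free) together with the minimality of $\FF_2$. If $m > t$, then $K_m$ contains $K_t$ as a proper induced subgraph, contradicting minimality; hence $m \leq t$. Therefore $H_2 \in \{K_2, K_3, \ldots, K_t\}$, and combined with the single choice $H_1 = P_3$ this yields $|\FF_p| \leq t-1$. Since $t$ is a constant, $\FF_p$ has constant size, establishing Condition~\ref{1problem-condition-2}. I would also briefly note the degenerate sub-case: if $P_3 \in \ssp(\FF_1,\FF_2)$ (which would happen only if $\Pi_2$ is strict enough that $P_3 \in \FF_2$), then $\FF_p = \emptyset$ outright, which is also constant. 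No step here is a real obstacle; the only mildly subtle point is invoking minimality of $\FF_2$ to cap $m$ at $t$.
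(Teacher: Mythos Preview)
Your proof is correct and follows the same route as the paper: both start from $\FF_1 = \{P_3\}$, observe that a connected $P_3$-free graph is a clique, and then use minimality of $\FF_2$ together with the hypothesis $K_t \in \FF_2$ to constrain which cliques can lie in $\FF_2$.

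The only difference is sharpness. You apply minimality one-sidedly, ruling out $K_m$ with $m > t$, and conclude $|\FF_p| \leq t-1$. The paper applies it symmetrically: if some $K_m$ with $m < t$ were in $\FF_2$ then $K_m$ would be a proper induced subgraph of $K_t \in \FF_2$, again contradicting minimality. Hence $K_t$ is the \emph{unique} clique in $\FF_2$, and $\FF_p = \{(P_3, K_t)\}$ has size exactly one. For the lemma as stated your bound suffices, but the paper's precise identification of the single pair $(P_3, K_t)$ is what feeds the subsequent running-time theorem (the branching factor $t+2$ comes from $|V(P_3)| + |V(K_t)| - 1$), so it is worth adding that one extra line to your argument.
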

\begin{proof}
Let us first infer what the forbidden pair family corresponding to {\CliqueorKtFree} is. We have the forbidden family for $\Pi_1$ as $\FF_1 = \{P_3\}$. Note that we don't know what the forbidden family $\FF_2$ for the graph class $\Pi_2$ is. We only know that the graph $K_t$ is present in $\FF_2$. A crucial observation is that this is all needed to infer that the forbidden pair family for {\CliqueorKtFree}.

We know that a graph is $P_3$-free if and only if it is a collection of cliques. Hence, every graph $H \in \FF_2$ that is not a collection of cliques, contains $P_3 \in \FF_1$ as induced graphs. Hence all such graphs $H$ belong to the Super-Pruned family $\ssp(\FF_1 , \FF_2)$. We also know that every graph in $\FF_2$ is connected. Hence, the only graphs possibly in $\FF_2$ that is $P_3$-free are the clique graphs $K_r$ with $r \neq t$. Note that $K_r$ with $r>t$ contains $K_t$ as an induced subgraph. Thus, we can ignore these cliques as well. 
Hence the forbidden pair family is the set of pairs $(P_3, K_r), 1 \leq r \leq t$ which is of size at most $t$.
\end{proof}

Since all the conditions of {\PalphafreePiOnePiTwo} is satisfied by {\CliqueorKtFree}, we have the following theorem.

\begin{theorem}
{\CliqueorKtFree} has an FPT algorithm with running time $\max \{(t+2)^k, f_2(k)\}poly(n)$.
\end{theorem}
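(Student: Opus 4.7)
The approach is to invoke Theorem~\ref{theorem:finite-path-FPT-algorithm} on {\CliqueorKtFree}, which reduces the task to verifying the three required conditions and pinning down the two constants $c$ and $f(k)$. All three conditions have already been discussed in the text: Condition~\ref{1problem-condition-1} is met by the known $\OO^*(1.27^k)$ algorithm for cluster vertex deletion combined with the hypothesised $\OO^*(f_2(k))$ algorithm for $\Pi_2$; Condition~\ref{1problem-condition-2} is the content of Lemma~\ref{lemma:CliqueorKtFree-condition-2}, which gives $\FF_p=\{(P_3,K_t)\}$; and Condition~\ref{1problem-condition-4} holds with $\alpha=3$ since $P_3\in\FF_1$. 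With $f(k)=\max\{1.27^k,f_2(k)\}$, Theorem~\ref{theorem:finite-path-FPT-algorithm} gives a running time of $\OO^*(\max\{c^k,f_2(k)\})$, and the real content of the proof is to argue that $c\le t+2$ rather than the generic $t+4$.

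\textbf{Refined branching analysis.} Plugging the generic formula $|H_1|+|H_2|+\alpha-2$ into Theorem~\ref{theorem:finite-path-FPT-algorithm} only yields $c\le t+4$, so the bound $(t+2)^k$ requires an additional observation. The claim I would prove is: after exhaustive application of Reduction Rule~\ref{red-rule:removal-redundant-component}, every closest forbidden pair $(J^*,T^*)$ with $G[J^*]\cong P_3$ and $G[T^*]\cong K_t$ actually satisfies $J^*\cap T^*\neq\emptyset$, i.e., is at distance zero. Suppose for contradiction that the closest pair has distance $d\ge 1$, realised by a shortest path $v_0 v_1\cdots v_d$ with $v_0\in J^*$ and $v_d\in T^*$. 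If $d\ge 2$, the tail $v_{d-2}v_{d-1}v_d$ is an induced $P_3$ (shortest paths are induced) and contains $v_d\in T^*$, giving a strictly closer forbidden pair. If $d=1$, write $J^*=\{a,b,c\}$ with $b$ the middle vertex and $v_0=a$, and do a short case split on the adjacencies of $v_1\in T^*$: if $b\not\sim v_1$ then $b\text{--}a\text{--}v_1$ is an induced $P_3$; else if $c\not\sim v_1$ then $c\text{--}b\text{--}v_1$ is; and if both $b,c\sim v_1$ then $a\text{--}v_1\text{--}c$ is an induced $P_3$ (using $a\not\sim c$). Each sub-case produces a $P_3$ containing $v_1\in T^*$, contradicting the minimality of the distance.

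\textbf{Finishing the bound.} Once distance zero is established, the path $P^*$ used in Branching Rule~\ref{branch-rule:closest-pair-branching-path} consists of the single shared vertex, so $J^*\cup T^*\cup V(P^*)=J^*\cup T^*$. The intersection $J^*\cap T^*$ has size at least one and at most two, because all three vertices of an induced $P_3$ cannot sit in a clique (the two non-middle vertices are non-adjacent). Hence $|J^*\cup T^*|\le 3+t-1=t+2$, so the branching factor $c$ is at most $t+2$. Since $(t+2)^k$ dominates $1.27^k$, the total running time delivered by Theorem~\ref{theorem:finite-path-FPT-algorithm} simplifies to $\OO^*(\max\{(t+2)^k,f_2(k)\})$. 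The main obstacle is the $d=1$ case analysis, but thanks to the tiny size of $P_3$ it boils down to just three sub-cases and is straightforward.
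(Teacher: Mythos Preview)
Your proposal is correct and follows exactly the paper's approach: both invoke Theorem~\ref{theorem:finite-path-FPT-algorithm} and then sharpen the generic branching factor to $t+2$ by arguing that the closest $(P_3,K_t)$ pair always has intersecting vertex sets. The paper's proof merely asserts this intersection without justification, so you are in fact supplying detail the paper omits. One minor omission: in your $d=1$ case you label $J^*=\{a,b,c\}$ with $b$ the middle vertex and take $v_0=a$, silently assuming the shortest path leaves $J^*$ from an endpoint; the case $v_0=b$ is not covered, though the identical three-sub-case split on the adjacencies of $a$ and $c$ to $v_1$ handles it in the same way.
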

%

We also give an approximation algorithm for {\CliqueorKtFree}. But we appropriately change the definition of  {\CliqueorKtFree} where instead of  the assumption that {\sc $\Pi_2$ Vertex Deletion} has an FPT algorithm with running time $f_2(k')poly(n)$ for solution size $k'$, we assume that {\sc $\Pi_2$ Vertex Deletion} has a polynomial time approximation algorithm with approximation factor $f_2$.

Observing that  $d=\max \{c, c_1, c_2\} =\max \{t+2, f_2\}$ from Theorem \ref{theorem:finite-path-approximation-algorithm}, we have the following theorem.

\begin{theorem}
{\CliqueorKtFree} has an approximation algorithm with approximation factor $\max \{t+2, f_2\}$.
\end{theorem}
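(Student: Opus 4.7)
The plan is to invoke Theorem~\ref{theorem:finite-path-approximation-algorithm} on \CliqueorKtFree{} in the approximation regime and compute the resulting factor explicitly. First I would verify that the structural conditions of \PalphafreePiOnePiTwo{} still hold: Conditions~\ref{1problem-condition-2} and \ref{1problem-condition-4} are unchanged from Lemma~\ref{lemma:CliqueorKtFree-condition-2}, while Condition~\ref{1problem-condition-1} is supplied by the assumed polynomial-time $f_2$-approximation for $\Pi_2$ vertex deletion together with the standard $3$-approximation for cluster vertex deletion obtained by repeatedly picking an induced $P_3$ and adding all three of its vertices to the solution (since $\FF_1 = \{P_3\}$). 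This instantiates $c_1 = 3$ and $c_2 = f_2$ in the notation of Theorem~\ref{theorem:finite-path-approximation-algorithm}.

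Next I would compute $c$. The only forbidden pair is $(P_3, K_t)$, and exactly as observed in the FPT statement just above, in any connected component witnessing this pair the vertex subsets of a closest such pair can always be taken to overlap, so the shortest connecting path is empty. Consequently the branching set $J^* \cup T^* \cup P^*$ has size at most $|V(P_3)| + |V(K_t)| - 1 = t+2$, yielding $c = t+2$ (sharper than the generic worst-case bound $|H_1| + |H_2| + \alpha - 2 = t+4$ used in the abstract statement of Theorem~\ref{theorem:finite-path-approximation-algorithm}).

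Finally, since $t \ge 2$ (otherwise $K_t$ is trivially absent or trivially present everywhere), we have $c = t+2 \ge 4 > 3 = c_1$, so $c_1$ is absorbed into the maximum and
\[
d \;=\; \max\{c,\, c_1,\, c_2\} \;=\; \max\{t+2,\, 3,\, f_2\} \;=\; \max\{t+2,\, f_2\},
\]
which matches the claimed approximation factor. The main obstacle I anticipate is justifying the intersecting-vertex-subsets claim that lets us use $c = t+2$ rather than $t+4$; this is inherited verbatim from the FPT proof (one shows that a vertex $v$ on a shortest path from a $P_3$ to a disjoint $K_t$ together with two adjacent clique vertices yields a $P_3$ sharing a vertex with a $K_t$), so no new ideas are needed beyond replaying that short structural observation in the approximation setting.
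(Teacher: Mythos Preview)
Your proposal is correct and follows the paper's approach: both proofs simply invoke Theorem~\ref{theorem:finite-path-approximation-algorithm} with $c=t+2$ (using the intersecting-closest-pair observation already made in the FPT theorem) and $c_2=f_2$. You add the explicit computation $c_1=3$ and the observation that $t+2\ge 4>3$ absorbs it, which the paper leaves implicit.
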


We now give examples for the graph class $\Pi_2$ in {\CliqueorKtFree} resulting in FPT and approximation algorithms for the corresponding problems.

\begin{itemize}
\item Let $\Pi_2$ be the class of trees. Since triangles are forbidden in trees, we have $t=3$. The problem {\sc $\Pi_2$ Vertex Deletion} corresponds to {\sc Feedback Vertex Set} which has a $3.619^k poly(n)$ time FPT algorithm \cite{KP14} and a $2$-approximation algorithm \cite{bafna19992}. We notice here that the closest $(P_3, C_3)$ pair always intersect on at least two vertices. Hence the branching factor of Branching Rule \ref{branch-rule:closest-pair-branching-path} can be improved to $t+1 = 4$ in this case.

\item Let $\Pi_2$ be the class of cactus graphs.
 The graph $K_4$ is forbidden in cactus graphs as it has two triangles sharing an edge. Hence  we have $t=4$. The problem {\sc $\Pi_2$ Vertex Deletion} corresponds to {\sc Cactus Vertex Deletion} which has a $17.64^k poly(n)$ time FPT algorithm \cite{aoike2022improved}.
\item Let $\Pi_2$ be the class of planar graphs. Since $K_5$ is not planar, we have $t=5$. The problem {\sc $\Pi_2$ Vertex Deletion} corresponds to {\sc Planar Vertex Deletion} which has an $k^{O(k)}poly(n)$ time  FPT algorithm \cite{jansen2014near} and a $log^{O(1)}n$-approximation algorithm with running time $n^{O(\log n/ \log log n)}$ \cite{kawarabayashi2017polylogarithmic}.
\item Let $\Pi_2$ be the class of $\eta$-treewidth graphs. Since $K_{\eta+2}$ has treewidth $\eta+1$, it is forbidden in such graphs. The problem {\sc $\Pi_2$ Vertex Deletion} corresponds to {\sc $\eta$ Treewidth Vertex Deletion} which has a $2^{O(k))}n^{\OO(1)}$ time FPT algorithm and a $\OO(1)$-approximation algorithm \cite{fomin2012planar}.

We have the following corollary.

\begin{corollary} {\pionepitwodeletion} when $\Pi_1$ is the class of cliques and $\Pi_2$ is the class of

\begin{itemize}

\item trees has an $4^{k}poly(n)$ time  FPT algorithm and a $4$-approximation algorithm.

\item cactus graphs has an $17.64^{k}poly(n)$ time  FPT algorithm.
\item planar graphs has an $k^{O(k)}poly(n)$ time  FPT algorithm and a $log^{O(1)}n$-approximation algorithm with running time $n^{O(\log n/ \log log n)}$.
\item $\eta$-treewidth graphs has a $\max{ \{ (\eta+4)^k, 2^{O(k))} \} }n^{\OO(1)}$ time FPT algorithm and a $\OO(1)$-approximation algorithm.

\end{itemize}
\end{corollary}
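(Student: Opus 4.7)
The plan is to instantiate the two preceding theorems about \CliqueorKtFree{} for each of the four candidate classes $\Pi_2$. First I would verify Conditions~\ref{1problem-condition-1}--\ref{1problem-condition-4} in each case: Condition~\ref{1problem-condition-1} is immediate from the existence of FPT algorithms for the respective deletion problems; Condition~\ref{1problem-condition-4} holds because $P_3 \in \FF_1$ when $\Pi_1$ is the class of cliques; and Condition~\ref{1problem-condition-2} follows exactly as in Lemma~\ref{lemma:CliqueorKtFree-condition-2}, because in each case the unique $P_3$-free element of the minimal forbidden family of $\Pi_2$ is a clique $K_t$, yielding a singleton forbidden pair family $\{(P_3, K_t)\}$.

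For trees, the forbidden family is $\{C_i : i \geq 3\}$, so $t = 3$; {\sc Feedback Vertex Set} admits an $\OO^*(3.618^k)$ FPT algorithm and a $2$-approximation. Plain substitution in Theorems~\ref{theorem:finite-path-FPT-algorithm} and~\ref{theorem:finite-path-approximation-algorithm} only gives $\OO^*(5^k)$ and a $5$-approximation. To reach $\OO^*(4^k)$ and a $4$-approximation I would use the refinement recorded in the prose preceding the corollary: after Reduction Rule~\ref{red-rule:removal-redundant-component} is exhaustively applied, every triangle lies in a non-clique component, so some triangle vertex has a neighbour outside the triangle, and a short case analysis on how this outside neighbour attaches to the triangle always produces a $P_3$ sharing two vertices with the triangle. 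Hence the closest $(P_3, C_3)$ pair always shares at least two vertices, the branch factor in Branching Rule~\ref{branch-rule:closest-pair-branching-path} drops from $|V(P_3)|+|V(C_3)|-1 = 5$ to $|V(P_3)|+|V(C_3)|-2 = 4$, and the same sharpening inside the proof of Theorem~\ref{theorem:finite-path-approximation-algorithm} yields the $4$-approximation factor.

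For cactus graphs, $K_4$ is forbidden because it contains two triangles sharing an edge while $K_3$ is itself a cactus, so $t = 4$; plugging the $\OO^*(26^k)$ {\sc Cactus Vertex Deletion} algorithm of Bonnet et al.\ into Theorem~\ref{theorem:finite-path-FPT-algorithm} gives $\OO^*(\max\{6^k, 26^k\}) = \OO^*(26^k)$. For planar graphs, $K_5$ is non-planar while $K_4$ is planar, so $t = 5$; the $k^{O(k)}\mathrm{poly}(n)$-time FPT algorithm of Jansen et al.\ and the $\log^{O(1)} n$-approximation of Kawarabayashi and Sidiropoulos dominate the $7^k$ and $7$ terms coming from $t+2$, so the stated bounds transfer. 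For $\eta$-treewidth graphs, $K_{\eta+2}$ has treewidth $\eta+1$ whereas every smaller clique has treewidth at most $\eta$, so $t = \eta+2$; plugging the $2^{O(k)} n^{O(1)}$ FPT algorithm and the $\OO(1)$-approximation of Fomin et al.\ into the two theorems gives the claimed $\max\{(\eta+4)^k, 2^{O(k)}\} n^{O(1)}$ FPT running time and $\OO(1)$-approximation factor.

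The only non-routine step is the improvement of the branching factor from $5$ to $4$ in the trees case, which is a small graph-theoretic argument about closest $(P_3,C_3)$ pairs in the post-reduction instance; the remaining three cases are direct substitutions of known algorithmic bounds for {\sc $\Pi_2$ Vertex Deletion} into the two general theorems already established.
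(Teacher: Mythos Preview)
Your proposal is correct and follows essentially the same approach as the paper: both instantiate the general \CliqueorKtFree{} theorems with the appropriate value of $t$ and the known FPT/approximation bounds for each $\Pi_2$-deletion problem, and both invoke the observation that a closest $(P_3,C_3)$ pair can be taken to share two vertices to sharpen the trees case from $5$ to $4$. Your write-up actually supplies more justification for that sharpening than the paper does (the paper merely asserts the two-vertex overlap without argument), so there is nothing missing.
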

\end{itemize}

\subsubsection{Split or Bipartite Graphs}

\defparproblem{{\SplitorBipartite}}{An undirected graph $G = (V, E)$, an integer $k$.}{$k$}{Is there $S \subseteq V(G)$ of size at most $k$ such that every connected component of $G - S$ is either a split graph or bipartite?}

 The family $\FF_1$ for graphs whose each connected component is a split graph is $\FF_1 = \{C_4, C_5, P_5,$ necktie, bowtie$\}$  \cite{bruckner2015graph}. A necktie is the graph with vertices $\{a,b,c,d,e\}$ where $\{a,b,e\}$ forms a triangle and $\{a,b,c,d\}$ forms a $P_4$. A bowtie is the graph obtained from a necktie by adding the edge $(b,d)$. The family $\FF_2$ for graphs whose each connected component is a bipartite graph is the set of odd cycles.

The problems {\sc Split Vertex Deletion} and {\sc Odd Cycle Transversal} has  $1.27^k k^{\OO(\log k)}poly(n)$ \cite{cygan2013split} and $2.314^k poly(n)$ time \cite{lokshtanov2014faster} FPT algorithms parameterized by the solution size $k$ respectively. Hence Condition \ref{1problem-condition-1} is satisfied by  {\SplitorBipartite}. Since $P_5 \in \FF_1$, Condition \ref{1problem-condition-4} is satisfied by  {\SplitorBipartite}.

The graph $C_5$ is common in both families whereas $P_5$ is an induced subgraph of odd cycles $C_i$ with $i \geq 7$. Hence  the family $\ssp( \FF_1, \FF_2)$ contains all members in $\FF_2$ except $C_3$. Since both necktie and bowtie contains triangles, they are part of $\ssp( \FF_1, \FF_2)$ as well. The only remaining pairs are $(C_4, C_3)$ and $(P_5, C_3)$ which forms the forbidden pair family $\FF_p$. Since it is of size two, Condition \ref{1problem-condition-2} is satisfied.

Hence, we have the following corollary from the algorithms for {\PalphafreePiOnePiTwo}. Observing that the largest obstruction set that we branch on is for the pair $(C_4, C_3)$ with a path of length at most four between them, we have $c=11$. The {\sc Split Vertex Deletion} problem has a $5$-appoximation algorithm as it can be written as an instance of {\sc $5$-Hitting Set}. The problem {\sc Odd Cycle Transversal} has a $\log(OPT)$ approximation algorithm \cite{garg1996approximate} where $OPT$ is the size of the optimal solution. The latter dominates the approximation factor $d$ of the algorithm obtained from Theorem \ref{theorem:finite-path-approximation-algorithm}.

\begin{corollary}
{\SplitorBipartite} can be solved in \\ ${11^k poly(n)}$-time and has a $\log(OPT)$ approximation algorithm where $OPT$ is the size of the optimal solution.
\end{corollary}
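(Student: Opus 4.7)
The plan is to show that \SplitorBipartite\ is a special case of \PalphafreePiOnePiTwo\ and then invoke Theorem~\ref{theorem:finite-path-FPT-algorithm} and Theorem~\ref{theorem:finite-path-approximation-algorithm}. First I would verify Conditions~\ref{1problem-condition-1}--\ref{1problem-condition-4}. Condition~\ref{1problem-condition-1} holds because \textsc{Split Vertex Deletion} admits an $\OO^*(1.27^k)$ algorithm of Cygan--Pilipczuk and \textsc{Odd Cycle Transversal} admits an $\OO^*(2.314^k)$ algorithm of Lokshtanov et al. Condition~\ref{1problem-condition-4} holds with $\alpha=5$, since $P_5\in \FF_1$ as already recorded. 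For Condition~\ref{1problem-condition-2} I would formally determine $\ssp(\FF_1,\FF_2)$ and $\FF_p$ as discussed in the excerpt: $C_5$ lies in both families; each $C_{2j+1}$ with $j\ge 3$ has $P_5\in\FF_1$ as an induced subgraph; and the necktie and bowtie both contain $C_3\in\FF_2$. Hence $\ssp(\FF_1,\FF_2)=\{\text{necktie},\text{bowtie}\}\cup\{C_{2j+1}:j\ge 2\}$, and the only surviving pairs are $(C_4,C_3)$ and $(P_5,C_3)$, so $\FF_p$ has size two.

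Once the conditions are established, the running-time bound follows from Theorem~\ref{theorem:finite-path-FPT-algorithm} by bounding the branching factor. I would argue that for either forbidden pair $(H_1,H_2)\in\FF_p$, a shortest path between the closest realisation of the pair has at most $\alpha=5$ vertices (by the lemma in Section~\ref{sec:finite-set-with-paths}), so the branching set $J^*\cup T^*\cup P^*$ has size at most $|H_1|+|H_2|+\alpha-2$. Substituting the two pairs gives branching factors of $10$ and $11$ respectively, so $c=11$. Since $\max\{1.27^k,2.314^k\}\ll 11^k$, the FPT bound is $\OO^*(11^k)$.

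For the approximation algorithm I would apply Theorem~\ref{theorem:finite-path-approximation-algorithm} with $c_1,c_2$ equal to the best known approximation factors for \textsc{Split Vertex Deletion} and \textsc{Odd Cycle Transversal}. The former admits a constant-factor approximation, while the latter admits an $\OO(\log OPT)$-approximation by Garg--Vazirani--Yannakakis and subsequent work. Plugging these in gives $d=\max\{11,\OO(1),\OO(\log OPT)\}=\OO(\log OPT)$ for sufficiently large $OPT$, and for $OPT\le 2^{11}$ the exact FPT algorithm can be used in polynomial time. The main (very minor) subtlety is to check that constructing the family of disjoint closest-pair obstructions in Section~\ref{subsection-forbidden-paths} goes through for the two pairs $(C_4,C_3)$ and $(P_5,C_3)$, and that the invocation of the approximation algorithm on a component free of one side of $\FF_p$ gives the claimed bound via Lemma~\ref{lemma:path-component-with-no-H1}; both of these are direct instantiations of the general argument.
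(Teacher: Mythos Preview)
Your proposal is correct and follows essentially the same route as the paper: verify Conditions~\ref{1problem-condition-1}--\ref{1problem-condition-4}, compute $\ssp(\FF_1,\FF_2)$ and the forbidden pair family $\FF_p=\{(C_4,C_3),(P_5,C_3)\}$, then invoke Theorems~\ref{theorem:finite-path-FPT-algorithm} and~\ref{theorem:finite-path-approximation-algorithm}. The only cosmetic discrepancy is that the paper attributes the maximum branching factor $c=11$ to the pair $(C_4,C_3)$ ``with a path of length at most four between them'', whereas your computation (using $|H_1|+|H_2|+\alpha-2$) gives $10$ for $(C_4,C_3)$ and $11$ for $(P_5,C_3)$; either way both you and the paper arrive at $c=11$, and your attribution matches the general formula stated in Theorem~\ref{theorem:finite-path-FPT-algorithm}.
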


\subsection{Algorithms for {\pionepitwodeletion} without forbidden paths }
\label{subsection:constant-forbidden-pair}

We have seen examples of {\pionepitwodeletion} where even though the families $\FF_i$ are infinite, we manage to come up with fast FPT algorithms. This is mainly thanks to Branching Rule \ref{branch-rule:closest-pair-branching-path} whose branching factor is bounded due to the fact that the path between them is bounded. We now look at examples of {\pionepitwodeletion} where paths are not present in the sets $\FF_i$. Hence the path between the closest forbidden pair is no longer bounded. We observe that for certain pairs of graph classes, there is always an optimal solution that does not intersect the path.

We give a general algorithm for pairs of graph classes where we enforce this condition. We first look at the simple case of {\ClawfreeorTriangleFree} as a precursor to the algorithm.


\subsubsection{Claw-free or Triangle-Free graphs}
\label{section:clae-free-triangle-free}
We define the problem. 

\defparproblem{{\ClawfreeorTriangleFree}}{An undirected graph $G = (V, E)$ and an integer $k$.}{$k$}{Is there $S \subseteq V(G)$ of size at most $k$ such that every connected component of $G - S$ is either a claw-free graph, or a triangle-free graph?}

The forbidden pair family corresponding to the graph class is of size one which is $\{(K_{1,3}, C_3)\}$. We now describe a branching rule corresponding to the closest forbidden pair in the graph.

\begin{branching rule}
\label{branch-rule:claw-triangle-branching}
Let $(J^*, T^*)$ be the vertex subsets of a closest claw-triangle pair in a connected component of $G$, where $G[J^*]$ is isomorphic to a claw, and $G[T^*]$ is isomorphic to a triangle. Then for each $v \in J^* \cup T^*$, delete $v$ and decrease $k$ by $1$, resulting in the instance $(G-v,k -1)$.
\end{branching rule}

We now prove that Branching Rule \ref{branch-rule:claw-triangle-branching} is safe. Let $P^* := x_0,x_1,\ldots,x_{d-1},x_d$ be a shortest path between $J^*$ and $T^*$ of length $d_G(J^*, T^*) = d$ with $x_0 = u \in J^*$ and $x_d = v \in T^*$.
 Let $C$ be the connected component of the graph $G - (J^* \cup T^*)$ containing the internal vertices of $P^*$.  We have the following lemma.

\begin{lemma}
\label{lemma:nearest-claw-triangle-pair-path}
The graph corresponding to $C$ is the path $P^*$ without its end vertices. Furthermore, the only vertices of $C$ adjacent to $J^* \cup T^*$ are $x_1$ and $x_{d-1}$ which are only adjacent to vertices $u$ and $v$ respectively.
\end{lemma}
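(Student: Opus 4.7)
The plan is to exploit two minimality properties of the configuration in tandem: $P^*$ is a \emph{shortest} $J^*$-$T^*$ path, and $(J^*, T^*)$ is a \emph{closest} claw-triangle pair. The shortest-path property immediately yields that $G[\{x_0, \ldots, x_d\}]$ is chordless, so once we have established $C = \{x_1, \ldots, x_{d-1}\}$ the induced subgraph $G[C]$ is automatically the path $x_1 \cdots x_{d-1}$. The closest-pair property will be the contradiction device: whenever a new claw $J'$ or triangle $T'$ can be exhibited so that $(J', T^*)$ or $(J^*, T')$ lies at distance strictly less than $d$, the choice of $(J^*, T^*)$ is violated.

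First I would carry out a distance count. If an internal vertex $x_i$ (with $1 \le i \le d-1$) is adjacent to some $w \in J^*$, then $w, x_i, x_{i+1}, \ldots, x_d$ is a walk of length $d - i + 1$ from $J^*$ to $T^*$, forcing $d - i + 1 \ge d$ and hence $i \le 1$. Symmetrically $x_i \sim w' \in T^*$ forces $i \ge d - 1$. Thus each $x_i$ with $2 \le i \le d-2$ has no neighbors in $J^* \cup T^*$ at all, only $x_1$ can have neighbors in $J^*$, and only $x_{d-1}$ in $T^*$.

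The main step is to sharpen ``$x_1$'s neighbors in $J^*$'' down to $\{u\}$, with a symmetric argument for $x_{d-1}$. Suppose for contradiction $x_1 \sim w$ with $w \in J^* \setminus \{u\}$, and split by the role of $u$ in the claw $J^*$. If $u$ and $w$ are adjacent inside $J^*$ (equivalently, one of them is the claw's center), then $\{u, w, x_1\}$ is a triangle $T'$ overlapping $J^*$, giving $d_G(J^*, T') = 0 < d$. Otherwise $u$ and $w$ are two non-adjacent leaves of the claw, and the new claw $\{x_1, u, w, z\}$ centered at $x_1$ suffices: take $z = x_2$ for $d \ge 3$ (which is non-adjacent to $J^*$ by the previous step), and $z = v$ for the boundary $d = 2$, falling back to the triangle $\{w, x_1, v\}$ whenever $v$ happens to be adjacent to $w$. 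In every sub-case the newly built forbidden set lies at distance strictly less than $d$ from its counterpart, and a symmetric argument handles $x_{d-1}$.

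The last step rules out any extra vertex in $C$. Suppose $y \in C \setminus \{x_1, \ldots, x_{d-1}\}$; by connectivity of $C$ inside $G - (J^* \cup T^*)$ we may assume $y$ is directly adjacent to some internal $x_j$, namely the ``entry point'' at the end of a $C$-path from a rogue vertex back to $P^*$. If $y \sim x_{j-1}$ or $y \sim x_{j+1}$, the resulting triangle either overlaps $J^* \cup T^*$ in the boundary cases $j = 1$ or $j = d-1$, or sits at distance at most $j - 1 < d$ from $J^*$ (resp.\ $d - j - 1 < d$ from $T^*$). Otherwise $\{x_j, y, x_{j-1}, x_{j+1}\}$ is a claw centered at $x_j$ that again beats distance $d$ via one of its internal-path vertices. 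The main obstacle throughout is the clean bookkeeping of these boundary cases, where a newly exhibited claw or triangle shares vertices with $J^*$ or $T^*$; but in each such situation the overlap gives $d_G = 0$ directly, which is always strictly less than $d$.
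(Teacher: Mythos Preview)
Your proof is correct and follows essentially the same approach as the paper: both exploit the closest-pair and shortest-path hypotheses to exhibit a nearer claw or triangle whenever an unwanted edge or extra vertex appears. You order the two parts differently (first restricting adjacencies of internal $x_i$ to $J^*\cup T^*$, then ruling out extra vertices in $C$, whereas the paper does the reverse), and you are more explicit about the boundary case $d=2$ and about verifying that the constructed $\{x_1,u,w,z\}$ is genuinely an induced claw; otherwise the arguments coincide.
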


\begin{proof}

\begin{figure}[t]
\centering
	\includegraphics[scale=0.28]{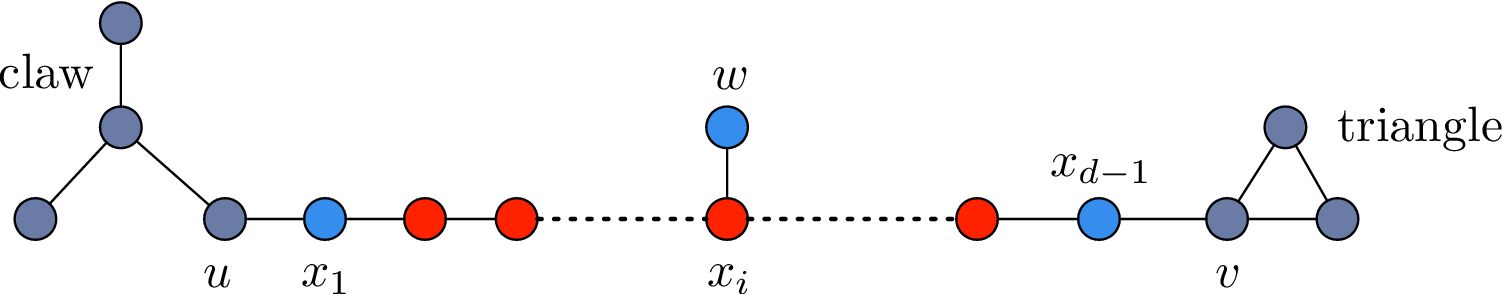}
	\caption{An illustration of a shortest path between a closest (claw, triangle) pair.}
\label{fig:claw-triangle-pair}
\end{figure}

Suppose that there is a vertex $w \in V(C) \setminus V(P^*)$ that is adjacent to a vertex $x_i \in V(P^*)$ with $1 \leq i \leq d-1$. Let us look at the graph induced by the set of vertices $\{w,x_{i-1},x_i,x_{i+1}\}$ in $G$. If $w$ is adjacent to $x_{i-1}$, then the graph induced by the vertices $T' = \{w,x_{i-1},x_i\}$ forms a triangle. Then since $d_G(J^*, T') < d_G(J^*,T^*)$, the pair $(J^*, T')$ contradicts that $(J^*, T^*)$ is the closest forbidden pair in the graph $G$. We can similarly prove that $w$ is not adjacent to $x_{i+1}$. When $w$ is not adjacent to both the vertices $x_{i-1}$ and $x_{i+1}$, the graph induced by the vertices $J'=\{w,x_{i-1},x_i,x_{i+1}\}$ forms a claw. Then since $d_G(J', T^*) < d_G(J^*,T^*)$, the pair $(J', T^*)$ contradicts that $(J^*, T^*)$ is the closest forbidden pair in the graph $G$. 

Hence we conclude that there is no vertex $w \in V(C) \setminus V(P^*)$ that is adjacent to a vertex $x_i \in V(P^*)$ with $1 \leq i \leq d-1$. Since $C$ is defined as the connected component of graph $G - (J^* \cup T^*)$ containing the internal vertices of $P^*$, the first statement of the claim follows.

The second statement is contradicted only when there is an edge from  a vertex $x_i \in V(P^*)$ with $1 \leq i \leq d-1$ to the set $J^* \cup T^*$ apart from the edges $(u,x_1)$ and $(x_{d-1},v)$ . If $2 \leq i \leq d-2$, this creates a shorter path $P'$ between $J^*$ and $T^*$ via this edge giving a contradiction. For $i=1$, in the case $x_1$ is adjacent to some vertex $w \in J^* \setminus \{u\}$, we can show that either the graph induced by the set of vertices $\{w, u, x_1, x_2\}$ is a claw or the graph induced by the set of vertices $\{w, u, x_1\}$ is a triangle, both contradicting that $(J^*, T^*)$ is the closest forbidden pair in the graph $G$.  For $i=d-1$, in the case $x_{d-1}$ is adjacent to $w \in T^* \setminus \{v\}$, we can show that the graph induced by the vertices $\{w, x_{d-1}, v\}$ is a triangle contradicting that $(J^*, T^*)$ is the closest forbidden pair in the graph $G$. This covers all the cases.
\end{proof}

\begin{lemma}
\label{lemma:branching-rule-claw-triangle-safeness}
 Branching Rule \ref{branch-rule:claw-triangle-branching} is sound.
\end{lemma}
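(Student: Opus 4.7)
The plan is to prove soundness by constructing, from any hypothetical solution $S$ with $S \cap (J^* \cup T^*) = \emptyset$, a solution $S'$ of the same or smaller size that contains a vertex of $J^* \cup T^*$; this justifies branching on the (at most $7$-element) set $J^* \cup T^*$. So the argument reduces to producing such a swap.

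First I will record the structural consequences of $S \cap (J^* \cup T^*) = \emptyset$. Both $J^*$ and $T^*$ survive intact in $G - S$, and since $S$ is a solution they cannot be in the same component; call their components $C_J$ and $C_T$ in $G - S$. Because $C_J$ contains the claw $J^*$ it must be triangle-free (in $\Pi_2$), and symmetrically $C_T$ must be claw-free (in $\Pi_1$). By Lemma~\ref{lemma:hitting-forbidden-pairs-with-path} applied to the pair $(J^*, T^*)$ and the path $P^*$, $S$ must hit $P^*$; since the endpoints $u, v$ lie in $J^* \cup T^*$ and so are not in $S$, the set $S_P := S \cap \{x_1, \ldots, x_{d-1}\}$ is nonempty.

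The main construction is $S' := (S \setminus S_P) \cup \{v\}$, giving $|S'| = |S| - |S_P| + 1 \le |S|$ and $v \in S' \cap T^*$. To check that $S'$ is still a solution I will analyze the components of $G - S'$ using Lemma~\ref{lemma:nearest-claw-triangle-pair-path}, which guarantees that every internal path vertex has neighbors only among other path vertices and $\{u, v\}$. With $v$ deleted and all of $S_P$ returned, the whole path $x_1, \ldots, x_{d-1}$ is attached to the rest of the graph only through the edge $u x_1$; consequently the component $D$ of $u$ in $G - S'$ has vertex set $V(C_J) \cup \{x_1, \ldots, x_{d-1}\}$. The remaining affected vertices form induced subgraphs of $C_T \setminus \{v\}$ (still claw-free, hence each piece in $\Pi_1$), and every other component of $G - S$ is untouched by the swap.

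The crux is to show that $D$ is triangle-free, which I expect to be the main obstacle. Since $C_J$ is already triangle-free, any triangle in $D$ must use at least one internal path vertex $x_i$; but Lemma~\ref{lemma:nearest-claw-triangle-pair-path} tells us that in $G - S'$ such an $x_i$ has degree at most $2$, with neighbors among $\{x_{i-1}, x_{i+1}\}$ (or among $\{u, x_2\}$ for $x_1$, or only $\{x_{d-2}\}$ for $x_{d-1}$ after removing $v$), and these neighbors are pairwise non-adjacent. No triangle can therefore pass through any $x_i$, so $D$ is triangle-free and lies in $\Pi_2$. This establishes that $S'$ is a valid solution of size at most $|S|$ hitting $J^* \cup T^*$, proving soundness. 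I note in passing that the symmetric choice of swapping in $u$ instead of $v$ would force the path onto the claw-free side of the partition, and attaching a path at $v$ to a claw-free graph can in principle create new claws at $v$; handling that would require a direct appeal to the closest-pair assumption (to produce a contradicting claw--triangle pair at distance $0$). The $v$-swap bypasses this complication because paths, being of max degree two, cannot create triangles when attached to a triangle-free graph.
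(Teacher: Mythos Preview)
Your proof is correct and uses the same core construction as the paper: assuming a solution $S$ misses $J^* \cup T^*$, you swap out $S \cap \{x_1,\dots,x_{d-1}\}$ for $\{v\}$ and invoke Lemma~\ref{lemma:nearest-claw-triangle-pair-path} to control the neighbourhood of the path. The only difference is in how feasibility of $S'$ is verified. You argue directly at the level of components: the path, once $v$ is removed, dangles off $C_J$ through the single edge $ux_1$, and since each internal path vertex has degree two with non-adjacent neighbours no triangle can be introduced; the pieces of $C_T\setminus\{v\}$ remain claw-free by heredity. The paper instead argues by contradiction through a hypothetical minimal forbidden set $Q$ that meets the path but avoids $v$, and shows that stripping the path vertices from $Q$ preserves forbiddenness, violating minimality. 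Your component-level argument is more concrete and self-contained for this particular lemma; the paper's forbidden-set phrasing is chosen because it is the template that is reused verbatim in the general framework of Section~\ref{subsection:general-algorithm} (Condition~\ref{2problem-condition-4} and Lemma~\ref{lemma:nearest-H1-H2-pair-property}), where one does not have an explicit description of the components but can still reason about minimal obstructions.
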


\begin{proof}
Suppose not. In this case, all the optimal solutions for a {\ClawfreeorTriangleFree} instance is such that it does not intersect $J^* \cup T^*$.  Let $P^* := u,x_1,\ldots,x_{d-1},v$ be a shortest path between $J^*$ and $T^*$ of length $d_G(J^*, T^*) = d$. Since the graphs $G[J^*]$ and $G[T^*]$ cannot be in the same connected component after deleting the solution, any optimal solution $X$ has to intersect the set of internal vertices of $P^*$. 

We now claim that $X' = (X \setminus (P^* \setminus \{u\})) \cup \{v\}$ is also an optimal solution for $G$. Suppose not. Then there is a forbidden set $Q$ such that $Q \cap X' = \emptyset$. Without loss of generality, assume $Q$ is inclusion-wise minimal; i.e. for all $q \in Q$, the graph induced by $Q \setminus \{q\}$ is not a forbidden set. 
Note that $Q$ is a connected set and intersects some vertex $x_i \in P^* \setminus \{u,v\}$ with $1 \leq i \leq d-1$ as $X$ is a feasible solution. The graph $G[Q]$ contain a forbidden pair for $(K_{1,3},C_3)$ and hence contains a cycle. Hence $Q$ is not fully contained in the connected component $C$ of the graph $G \setminus (J^* \cup T^*)$ which is is $P^* \setminus \{u,v\}$ from Lemma \ref{lemma:nearest-claw-triangle-pair-path}. Hence we conclude that $Q$ contains some vertex outside $P^* \setminus \{u,v\}$ as well.

From Lemma \ref{lemma:nearest-claw-triangle-pair-path}, the only neighbors of $C$ is $u$ and $v$ via $x_1$ and $x_{d-1}$ respectively. Since $v \in X'$ and $Q$ is connected, we can conclude that $Q$ contains the subpath $P'$ from $u$ to $x_i$. We now claim that even after deleting the vertices of this subpath from $Q$  except $u$, the set remains forbidden. This contradicts that $Q$ is a forbidden set as it is not a minimal set.

Since $Q$ is a forbidden set, it contains vertex subsets that are isomorphic to a claw and a triangle. From Lemma \ref{lemma:nearest-long-claw-triangle-pair-path-neighbors}, we can conclude that none of the vertices of the subpath $P'$ can be part of any triangle in $G$. None of these vertices can be part of a claw in $G$ either as it contradicts that $(J^*, T^*)$ is the closest forbidden pair. Since $v \in X'$ disconnects the path $P^*$, the subpath $P'$ is not part of a path connecting a claw and a triangle either. Hence the set after removing the vertices of $P'$ from $Q$ is still a forbidden set contradicting that $Q$ is minimal.
\end{proof}

We are ready to give the algorithm for {\ClawfreeorTriangleFree}.

\begin{theorem}
\label{theorem:claw-free-triangle-free-FPT-algorithm}
{\ClawfreeorTriangleFree} can be solved in $7^k poly(n)$ time.
\end{theorem}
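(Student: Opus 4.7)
The plan is to describe a bounded search tree algorithm that repeatedly applies Reduction Rule~\ref{red-rule:removal-redundant-component} and Branching Rule~\ref{branch-rule:claw-triangle-branching} in sequence until neither is applicable. At that point every connected component of the remaining graph is either claw-free or triangle-free (by Observation~\ref{lemma:characterization-pairs}, using the fact that the forbidden pair family of $(\Pi_1,\Pi_2)$ consists solely of $(K_{1,3},C_3)$), so we can simply return yes if $k\geq 0$ and no otherwise.

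First I would argue that a closest claw-triangle pair can be located in polynomial time: enumerate all $4$-tuples $J$ inducing a $K_{1,3}$ and all $3$-tuples $T$ inducing a $C_3$ (there are $O(n^7)$ such pairs), and for each compute $d_G(J,T)$ by BFS, keeping a minimizer. The correctness of branching on such a pair is exactly Lemma~\ref{lemma:branching-rule-claw-triangle-safeness}: any solution must either intersect $J^*\cup T^*$ directly, or by the safeness argument can be modified into one that does, so the algorithm covers all optimal solutions by creating one recursive call $(G-v, k-1)$ for every $v\in J^*\cup T^*$.

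The running time analysis is straightforward. Since $|J^*|=4$ and $|T^*|=3$, we have $|J^*\cup T^*|\leq 7$, so each application of Branching Rule~\ref{branch-rule:claw-triangle-branching} generates at most $7$ children in the search tree, each with the parameter decreased by one. The search tree therefore has depth at most $k$ and at most $7^k$ leaves. At every node the polynomial-time work dominates in the $\OO^*$-notation (finding a closest forbidden pair, running BFS, applying the reduction rule), yielding the claimed $\OO^*(7^k)$ bound.

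The only subtle point, which is really handled by Lemma~\ref{lemma:branching-rule-claw-triangle-safeness} rather than the theorem itself, is why it is safe to branch only on $J^*\cup T^*$ and \emph{not} on the vertices of a shortest connecting path $P^*$ between them (as was required in Branching Rule~\ref{branch-rule:closest-pair-branching-path} for the $\PalphafreePiOnePiTwo$ setting). Thus the proof of the theorem is essentially a bookkeeping argument combining the soundness already established in Lemmas~\ref{lemma:nearest-claw-triangle-pair-path}--\ref{lemma:branching-rule-claw-triangle-safeness} with the $7$-way branching bound; I do not anticipate any further obstacle.
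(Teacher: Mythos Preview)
Your proposal is correct and follows essentially the same approach as the paper: exhaustively apply Reduction Rule~\ref{red-rule:removal-redundant-component} and Branching Rule~\ref{branch-rule:claw-triangle-branching}, invoke Lemma~\ref{lemma:branching-rule-claw-triangle-safeness} for soundness, observe that once the rules are inapplicable every component is claw-free or triangle-free, and bound the search tree by $7^k$ using $|J^*\cup T^*|\le 7$. The only differences are cosmetic---you spell out the $O(n^7)$ enumeration and BFS explicitly, and you cite Observation~\ref{lemma:characterization-pairs} where the paper argues directly that a surviving component with both a claw and a triangle would trigger the branching rule again.
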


\begin{proof}
Let $(G, k)$ be an input instance of {\ClawfreeorTriangleFree}.
We exhaustively apply Reduction Rule~\ref{red-rule:removal-redundant-component} and Branching Rule~\ref{branch-rule:claw-triangle-branching} in sequence to get an instance $(G', k')$ such that any component of $G'$ is either claw-free or triangle-free. Note that finding the closest claw-triangle pair can be done by going over all subsets of size at most $7$, checking if they do induce a claw and a triangle and finding the shortest path between them.
The correctness follows from Lemma~\ref{lemma:branching-rule-claw-triangle-safeness}. If $k'<0$, we return no-instance. Else, we return yes-instance.


Since we branch on a set of size at most $7$ in Branching Rule~\ref{branch-rule:claw-triangle-branching}, the bounded search tree of the algorithm has at most $7^{k}$ nodes. This bounds the running time to $7^{k}n^{\OO(1)}$.
This completes the proof.
\end{proof}

We also give an approximation algorithm for {\ClawfreeorTriangleFree} using similar ideas. 

\begin{theorem}
\label{theorem:claw-free-triangle-free-approx-algorithm}
{\ClawfreeorTriangleFree}  has a $7$-approximation algorithm.
\end{theorem}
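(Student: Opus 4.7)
The plan is to adapt the argument from Theorem~\ref{theorem:finite-path-approximation-algorithm}, but exploit the stronger structural property established in Lemma~\ref{lemma:branching-rule-claw-triangle-safeness}, which lets us avoid paying for the internal vertices of the shortest path between a closest forbidden pair. The proposed algorithm maintains a growing solution $S$, initially empty, and repeats the following while the current graph contains some connected component with both a claw and a triangle: compute a closest (claw, triangle) pair $(J^*, T^*)$ in polynomial time by enumerating all $\OO(n^7)$ vertex subsets and running BFS for distances, add $J^* \cup T^*$ (of size at most $7$) to $S$, and delete these vertices from the graph. When the loop terminates, every remaining component is either claw-free or triangle-free, so $S$ is a feasible solution.

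For the approximation bound, let $G = G_1$, and $G_{i+1} = G_i - (J_i^* \cup T_i^*)$ be the graph at the start of iteration $i + 1$, with $\mathrm{OPT}_i$ denoting the optimum for $G_i$. The key inequality to establish is $\mathrm{OPT}_{i+1} \leq \mathrm{OPT}_i - 1$ whenever iteration $i$ is performed. Applying Lemma~\ref{lemma:branching-rule-claw-triangle-safeness} to $G_i$ yields an optimal solution $X_i$ of $G_i$ with $X_i \cap (J_i^* \cup T_i^*) \neq \emptyset$. The residual $X_i \setminus (J_i^* \cup T_i^*)$ remains feasible for $G_{i+1}$ because $G_{i+1} - (X_i \setminus (J_i^* \cup T_i^*))$ is an induced subgraph of $G_i - X_i$, whose components already lie in $\Pi_1 \cup \Pi_2$; thus $\mathrm{OPT}_{i+1} \leq |X_i| - 1 = \mathrm{OPT}_i - 1$.

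Iterating this inequality over the $r$ rounds performed by the algorithm gives $\mathrm{OPT} = \mathrm{OPT}_1 \geq r$ (since $\mathrm{OPT}_{r+1} = 0$), while $|S| \leq 7r$ by construction, which together yield $|S| \leq 7 \cdot \mathrm{OPT}$. The main (minor) subtlety is verifying that Lemma~\ref{lemma:branching-rule-claw-triangle-safeness} indeed applies to each intermediate graph $G_i$ rather than only to the original input; this is immediate because its proof is a purely structural statement about whichever graph contains the closest forbidden pair, and the sets $J_i^* \cup T_i^*$ are pairwise disjoint across iterations by construction so no double counting occurs. One may also apply Reduction Rule~\ref{red-rule:removal-redundant-component} between rounds without any loss, since it only strips components already in $\Pi_1 \cup \Pi_2$ and affects neither $\mathrm{OPT}_i$ nor $|S|$.
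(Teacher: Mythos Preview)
Your proof is correct and follows essentially the same approach as the paper: both run the identical greedy procedure of repeatedly deleting the vertex set $J^* \cup T^*$ of a closest (claw, triangle) pair and invoke the safeness of Branching Rule~\ref{branch-rule:claw-triangle-branching} to bound the approximation ratio by~$7$. Your accounting via the decreasing potentials $\mathrm{OPT}_i$ is in fact a cleaner formalization than the paper's terser claim that ``any feasible solution of $G$ must contain a vertex from each set of the family $\mathcal{S}_1$,'' which is not literally true (a solution could hit the connecting path instead) and really requires exactly the iterative argument you spell out.
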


\begin{proof}
Let $G$ be the input graph. The approximation algorithm for {\ClawfreeorTriangleFree} is as follows. Let  $\mathcal{S}_1$ be a family of sets initialized to $\emptyset$. We find a closest forbidden pair  $(J^*,T^*)$ in $G'$, add $J^* \cup T^*$ to $\mathcal{S}_1$ and delete $J^* \cup T^*$ from $G'$. We repeat this step until it is no longer applicable. 
Let $S_{OPT}$ be the minimum sized set such that in the graph $G - S_{OPT}$, every connected component is either a claw-free graph or a triangle-free graph. Let $|S_{OPT}| = OPT$. Let $S_1$ be the set of vertices that is present in any set in $\mathcal{S}_1$. From the safeness proof of Branching Rule \ref{branch-rule:claw-triangle-branching}, we can conclude that any feasible solution of $G$ must contain a vertex from each set of the family $\mathcal{S}_1$. 
Since the union of all the sets in $\mathcal{S}_1$ is a feasible solution, we have that $|S_1| = 7|\mathcal{S}_1| \leq 7|S_{OPT}|$.

Thus we have a 7-approximation algorithm for \ClawfreeorTriangleFree.
\end{proof}

\subsubsection{Algorithm for {\InfCMSOPiOnePiTwo}}
\label{subsection:general-algorithm}

Now, we show that the algorithm ideas from {\ClawfreeorTriangleFree} are applicable for a larger number of pairs of graph classes. 
Later in Section \ref{sec:other-examples}, we give other examples for pairs of graph classes where the same ideas work.

We define a variant of {\pionepitwodeletion} called {\InfCMSOPiOnePiTwo} satisfying the following properties.

\begin{enumerate}
\item  \label{2problem-condition-1} The vertex deletion problems for the graph classes $\Pi_1$ and $\Pi_2$ are FPT with algorithms to the respective classes being $\mathcal{A}_1$ and $\mathcal{A}_2$. 
\item  \label{2problem-condition-2} $\mathcal{F}_p$, the forbidden pair family of $\FF_1$ and $\FF_2$ is of constant size.
\item  \label{2problem-condition-4} Let $(H_1, H_2) \in \FF_p$ be a closest forbidden pair in the graph $G$ with $(J_1, J_2)$ being the vertex subsets corresponding to the pair. Let $P$ be a shortest path between $J_1$ and $J_2$. There is a family $\GG_1$ 
 such that 
\begin{itemize}
\item $\GG_1$ is a finite family of graphs of bounded-size (independent of the size of $G$), and
\item in the graph $G$ that is $\GG_1$-free, if a forbidden set $Q$ intersects the internal vertices of $P$, then $Q$ contains the right endpoint of $P$.
\end{itemize}

\end{enumerate}

\defparproblem{{\InfCMSOPiOnePiTwo}}{An undirected graph $G = (V, E)$, graph classes $\Pi_1, \Pi_2$ with associated forbidden families $\FF_1$ and $\FF_2$ such that Conditions \ref{2problem-condition-1} - \ref{2problem-condition-4} are satisfied and an integer $k$. }{$k$}{Is there a vertex set $S$ of size at most $k$ such that every connected component of $G - S$ is either in $\Pi_1$ or in $\Pi_2$?}


Note that the first two conditions for the problem are the same as those in {\PalphafreePiOnePiTwo}. Only the Condition \ref{2problem-condition-4} is changed which is tailored to prove the soundness of the branching rule we introduce. We also note that in the examples of {\InfCMSOPiOnePiTwo} we looked at (see Section \ref{sec:other-examples}), $\GG_1$ is the collection of all the graphs in $\ssp(\FF_1 , \FF_2)$ of some constant size, with the constant depending on the problem.

Towards an FPT algorithm for {\InfCMSOPiOnePiTwo},  We give the following branching rule whose soundness is easy to see.

\begin{branching rule}
\label{branch-rule:G1-free}
Let $(G, k)$ be the input instance and let $Q \subseteq V(G)$ such that $G[Q]$ is isomorphic to a graph in $\mathcal{G}_1$. Then, for each $v \in V(Q)$, delete $v$ from $G$ and decrease $k$ by 1. The resulting instance is $(G-v, k-1)$.
\end{branching rule}

From here on we assume that Branching Rule~\ref{branch-rule:G1-free} is not applicable for $G$ and so $G$ is  $\mathcal{G}_1$-free.
We now focus on connected components of $G$ which contain forbidden pairs. 
We have the following branching rule.

\begin{branching rule}
\label{branch-rule:H1-H2-pair-branching}
Let $(J^*, T^*)$ be the vertex subsets of a  closest forbidden pair $(H_1, H_2) \in \mathcal{F}_p$. Then for each $v \in J^* \cup T^*$, we delete $v$ and decrease $k$ by $1$, resulting in the instance $(G-v,k -1)$.
\end{branching rule}

We now prove the correctness of the above branching rule. 

\begin{lemma}
\label{lemma:nearest-H1-H2-pair-property}
Branching Rule \ref{branch-rule:H1-H2-pair-branching} is safe.
\end{lemma}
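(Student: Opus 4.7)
The plan is to prove safeness by a standard exchange argument. The goal is to show that some optimal solution must intersect $J^* \cup T^*$. Suppose for contradiction that there is an optimal solution $X$ with $X \cap (J^* \cup T^*) = \emptyset$. Applying Lemma~\ref{lemma:hitting-forbidden-pairs-with-path} to the pair $(J^*, T^*)$ and to the shortest path $P^*$ between them, $X$ must contain some vertex of $J^* \cup T^* \cup V(P^*)$, so by our assumption $X$ must intersect the internal vertices of $P^*$.

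The next step is the exchange: let $v$ be the right endpoint of $P^*$ (which lies in $T^*$), and define $X' = (X \setminus V(P^*)) \cup \{v\}$. Since $X$ intersects at least one internal vertex of $P^*$ and $v \notin X$, we have $|X'| \leq |X|$, and $X'$ intersects $J^* \cup T^*$ at $v$. It remains to verify that $X'$ is still a feasible solution; once this is done, the optimality of $X$ forces $X'$ to be optimal as well, contradicting the hypothesis. To show feasibility, I will assume there is a forbidden set $Q \subseteq V(G) \setminus X'$ for $\Pi_{(1,2)}$ and derive a contradiction by a case split on whether $Q$ meets the internal vertices of $P^*$. If it does not, then since the only vertices in $X \setminus X'$ lie in the internal part of $P^*$, we get $Q \cap X = \emptyset$, so $Q$ is a forbidden set in $G - X$, contradicting feasibility of $X$. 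If it does, then Condition~\ref{2problem-condition-4} applied to the $\GG_1$-free graph $G$ (Branching Rule~\ref{branch-rule:G1-free} has been exhausted) forces $Q$ to contain the right endpoint $v$ of $P^*$; but $v \in X'$, contradicting $Q \subseteq V(G) \setminus X'$.

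The main delicate point is the second case of the feasibility argument, and this is precisely where Condition~\ref{2problem-condition-4} is doing all the work. The condition is abstractly tailored so that after the exchange, any newly introduced forbidden set that reaches into the interior of $P^*$ is automatically killed by $v$. So in this lemma the ``hard'' structural fact is absorbed into the hypothesis, and what remains is just making sure the exchange is well-defined and the two cases exhaust the possibilities. One minor bookkeeping detail to handle is to note that $u \in J^*$ and $v \in T^*$ are not in $X$ by assumption, so $X \cap V(P^*)$ lies entirely in the set of internal vertices of $P^*$, which is what justifies the inclusion $Q \cap X \subseteq Q \cap (\text{internal vertices of } P^*)$ used in the first case.
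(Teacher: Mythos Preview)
Your proof is correct and follows essentially the same exchange argument as the paper: assume an optimal solution $X$ misses $J^* \cup T^*$, replace the portion of $X$ on the interior of $P^*$ by the right endpoint $v$, and use Condition~\ref{2problem-condition-4} (together with the fact that $G$ is $\GG_1$-free after exhausting Branching Rule~\ref{branch-rule:G1-free}) to rule out any newly exposed forbidden set. Your write-up is in fact slightly more explicit than the paper's about the two cases and the bookkeeping that $u,v \notin X$, but the underlying argument is identical.
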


\begin{proof}
Let $P^*$ be a shortest path between $J^*$ and $T^*$ with endpoints $u \in J^*$ and $v \in T^*$. Since $G[J^*]$ and $G[T^*]$ cannot occur in the same connected component after deleting the solution, the solution must intersect $J^* \cup T^* \cup P^*$. 
We now prove that there exists an optimal solution of $(G, k)$ that does not intersect the internal vertices of $P^*$.

Let $X$ be an optimal solution such that $X \cap (J^* \cup T^*) = \emptyset$. Since $X \cap (J^* \cup T^* \cup P^*) \neq \emptyset$, $X$ must intersect the internal vertices of $P^*$. We claim that $X' = (X \setminus (P^* \setminus \{u\})) \cup \{v\}$ is also an optimal solution for $G$. Suppose not. Then there exists a forbidden set $Q$ such that $X' \cap Q = \emptyset$. Since $X \cap Q \neq \emptyset$, we know that $Q$ intersects the internal vertices of $P^*$. But then by Condition \ref{2problem-condition-4}, we know that $Q$ contains $v$ as well giving a contradiction.
\end{proof}

We now give the FPT algorithm {\InfCMSOPiOnePiTwo} which is the same algorithm in Theorem \ref{theorem:finite-path-FPT-algorithm}, but the Branching Rule \ref{branch-rule:closest-pair-branching-path} is replaced by Branching Rule~\ref{branch-rule:G1-free} and Branching Rule~\ref{branch-rule:H1-H2-pair-branching} in sequence. The correctness comes from Lemmas \ref{lemma:nearest-H1-H2-pair-property} and \ref{lemma:path-component-with-no-H1}.

 Again we define $f(k) = \max \{f_1(k), f_2(k)\}$ where $f_i(k)poly(n)$ is the running time for the algorithm $\mathcal{A}_i$. Also let $c$ be the maximum among the size of graphs in $\mathcal{G}_1$ and the integer $\max_{(H_1, H_2) \in \mathcal{F}_p}(|H_1|+|H_2|)$. Note that the branching factor of Branching Rule~\ref{branch-rule:H1-H2-pair-branching} is reduced to $\max_{(H_1, H_2) \in \mathcal{F}_p}(|H_1|+|H_2|)$ as we do not branch on the vertices of the path between the vertex sets of the closest forbidden pair.

\begin{theorem}
\label{theorem:H1-H2-FPT-algorithm}
{\InfCMSOPiOnePiTwo} can be solved in \\ ${\max\{f(k),c^k\}poly(n)}$-time.
\end{theorem}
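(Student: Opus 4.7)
The plan is to mirror the structure of the proof of Theorem~\ref{theorem:finite-path-FPT-algorithm}, replacing the single branching rule used there with the pair of branching rules tailored to the infinite setting, namely Branching Rule~\ref{branch-rule:G1-free} (to eliminate $\mathcal{G}_1$-subgraphs) followed by Branching Rule~\ref{branch-rule:H1-H2-pair-branching} (to handle the closest forbidden pair, without branching on the path between the two sides). First I would describe the algorithm: given an input instance $(G,k)$, exhaustively apply Reduction Rule~\ref{red-rule:removal-redundant-component}, then Branching Rule~\ref{branch-rule:G1-free}, and finally Branching Rule~\ref{branch-rule:H1-H2-pair-branching}, in that order. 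The necessary polynomial-time primitives (searching for an induced copy of a graph in $\mathcal{G}_1$ and searching for a closest realised forbidden pair) are already available, since $|\mathcal{G}_1|$ and $|\mathcal{F}_p|$ are constant and the vertex-set sizes of the relevant subgraphs are bounded by $c$.

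Once none of the rules apply, every connected component $C$ of the current graph is $\mathcal{G}_1$-free and contains no vertex sets forming a forbidden pair; by the definition of $\mathcal{F}_p^1,\mathcal{F}_p^2$, the component $C$ must be $\mathcal{F}_p^i$-free for some $i \in \{1,2\}$. For such a component I would invoke the fixed-parameter algorithm $\mathcal{A}_i$ on $G[C]$ with budget equal to the remaining parameter, and then apply Lemma~\ref{lemma:path-component-with-no-H1} to replace the component by its optimal $\Pi_i$-deletion set and recurse on $G - V(C)$. If at any leaf the parameter drops below zero we return \textsc{No}; if the entire graph gets cleared within budget we return \textsc{Yes}.

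Correctness is a straight assembly of previously established facts: Reduction Rule~\ref{red-rule:removal-redundant-component} is sound by inspection; Branching Rule~\ref{branch-rule:G1-free} is sound because every graph in $\mathcal{G}_1$ is an induced obstruction to $\Pi_{(1,2)}$ (this is exactly the hypothesis built into Condition~\ref{2problem-condition-4}); Branching Rule~\ref{branch-rule:H1-H2-pair-branching} is sound by Lemma~\ref{lemma:nearest-H1-H2-pair-property}; and the component-by-component reduction at the leaves is justified by Lemma~\ref{lemma:path-component-with-no-H1}. No new combinatorial content is required at this stage — the technical heart has already been absorbed into Condition~\ref{2problem-condition-4} and Lemma~\ref{lemma:nearest-H1-H2-pair-property}.

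The only remaining item is the running-time bookkeeping, and this is where I expect the only subtlety. Each invocation of Branching Rule~\ref{branch-rule:G1-free} or Branching Rule~\ref{branch-rule:H1-H2-pair-branching} branches into at most $c$ subcases and strictly decreases $k$, so the branching tree has depth at most $k$ and at most $c^k$ leaves. At each leaf with remaining parameter $k'$, the work is dominated by calls to $\mathcal{A}_1$ or $\mathcal{A}_2$, contributing $\OO^*(f(k'))$. Combining these yields a total of $\OO^*(c^{k-k'} f(k'))$ along any root-to-leaf path, which is bounded by $\OO^*(\max\{c^k, f(k)\})$ exactly as in the proof of Theorem~\ref{theorem:finite-path-FPT-algorithm}, completing the argument.
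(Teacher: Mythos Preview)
Your proposal is correct and follows essentially the same approach as the paper: exhaustively apply Reduction Rule~\ref{red-rule:removal-redundant-component}, Branching Rule~\ref{branch-rule:G1-free}, and Branching Rule~\ref{branch-rule:H1-H2-pair-branching}, then invoke $\mathcal{A}_1$ or $\mathcal{A}_2$ on each remaining component via Lemma~\ref{lemma:path-component-with-no-H1}, with the running-time analysis $c^{k-k'}f(k')\,\mathrm{poly}(n)$ matching the paper's. One small remark: the soundness of Branching Rule~\ref{branch-rule:G1-free} is not literally stated in Condition~\ref{2problem-condition-4} (which only asserts the structural property of forbidden sets in a $\mathcal{G}_1$-free graph); rather, in every instantiation $\mathcal{G}_1$ is taken as a subfamily of $\ssp(\FF_1,\FF_2)$, and the paper simply declares the rule's soundness ``easy to see'' on that basis.
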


We now give an approximation algorithm for {\InfCMSOPiOnePiTwo} when for $i \in \{1,2\}$, {\sc $\Pi_i$ Vertex Deletion} has  an approximation algorithm with approximation factor $c_i$. The algorithm is similar to that of Theorem \ref{theorem:finite-path-approximation-algorithm} with an additional primary step of greedily adding vertex subsets of induced graphs isomorphic to members in the family $\GG_1$ to the solution. Note that any optimal solution should contain at least one of the vertices of each such vertex subset. 


\begin{theorem}
\label{theorem:H1-H2-approximation-algorithm}
{\InfCMSOPiOnePiTwo} has a $d$-approximation algorithm where $d=\max \{c, c_1, c_2\}$.
\end{theorem}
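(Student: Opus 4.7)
The plan is to mirror the structure of the proof of Theorem~\ref{theorem:finite-path-approximation-algorithm}, but with an additional greedy stage that handles the family $\GG_1$ (which replaces the role of the path-based branching). Let $S_{OPT}$ be an optimal solution.

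First, I would construct a family $\mathcal{S}_0$ of pairwise vertex-disjoint sets as follows. While $G$ contains an induced subgraph isomorphic to a member of $\GG_1$ on some vertex set $Q$, add $Q$ to $\mathcal{S}_0$ and delete $Q$ from $G$; repeat until no such induced copy remains. Let $S_0$ be the union of all sets in $\mathcal{S}_0$. Since every $Q \in \mathcal{S}_0$ is a forbidden set of $\Pi_{(1,2)}$ (by the construction of $\GG_1$ and Lemma~\ref{lemma:forbidden-pair-characterization}-style reasoning), $S_{OPT}$ must intersect each such $Q$; since the sets are pairwise disjoint, $|S_{OPT} \cap S_0| \geq |\mathcal{S}_0|$, and $|S_0| \leq c \cdot |\mathcal{S}_0|$ because each member of $\GG_1$ has at most $c$ vertices.

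Next, on the residual graph $G_1 = G - S_0$ (which is now $\GG_1$-free), I would build a family $\mathcal{S}_1$ as in Theorem~\ref{theorem:finite-path-approximation-algorithm}, but branching only on the vertex sets of closest forbidden pairs, \emph{without} adding the interior of the shortest path between them. Concretely, while $G_1$ contains a closest forbidden pair $(J^*, T^*)$ with $(H_1, H_2) \in \FF_p$, add $J^* \cup T^*$ to $\mathcal{S}_1$ and delete $J^* \cup T^*$ from $G_1$. Let $S_1$ be the union of all members of $\mathcal{S}_1$. The soundness argument of Lemma~\ref{lemma:nearest-H1-H2-pair-property} shows that \emph{some} optimal solution for $G_1$ hits each $J^* \cup T^*$; together with pairwise disjointness, this gives $|(S_{OPT} \setminus S_0) \cap S_1| \geq |\mathcal{S}_1|$, and $|S_1| \leq c \cdot |\mathcal{S}_1|$.

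Finally, let $G_2 = G_1 - S_1$. By construction, no component of $G_2$ contains a forbidden pair, so every component $C_i$ of $G_2$ is either $\FF_p^1$-free or $\FF_p^2$-free. By Lemma~\ref{lemma:path-component-with-no-H1}, on each such $C_i$ an optimal solution restricted to $C_i$ is a $\Pi_j$-vertex deletion set for the appropriate $j \in \{1,2\}$, so I can invoke the $c_j$-approximation algorithm for $\Pi_j$-Vertex Deletion on $G_2[C_i]$ to obtain a set $Z_i$ with $|Z_i| \leq \max\{c_1, c_2\} \cdot |(S_{OPT} \setminus (S_0 \cup S_1)) \cap C_i|$. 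Set $S_2 = \bigcup_i Z_i$ and output $S = S_0 \cup S_1 \cup S_2$. Adding the three bounds gives
\[
|S| \leq c|\mathcal{S}_0| + c|\mathcal{S}_1| + \max\{c_1,c_2\} \cdot |S_{OPT} \setminus (S_0 \cup S_1)| \leq \max\{c, c_1, c_2\} \cdot |S_{OPT}|,
\]
which is the desired $d$-approximation. The main obstacle, as in the FPT version, is making sure we may swap $(S_{OPT} \cap C_i)$-type contributions across the three stages legitimately; this is exactly guaranteed by the disjointness of the $\mathcal{S}_j$'s together with the ``swap'' argument in Lemma~\ref{lemma:nearest-H1-H2-pair-property} that lets us assume some optimum avoids the interiors of the shortest paths used in forming $\mathcal{S}_1$.
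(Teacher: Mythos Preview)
Your proposal is correct and follows essentially the same three-stage structure as the paper's own proof: a greedy packing of induced copies of graphs in $\GG_1$, followed by a greedy packing of vertex sets $J^*\cup T^*$ of closest forbidden pairs in the resulting $\GG_1$-free graph, and finally per-component invocation of the $c_j$-approximation for $\Pi_j$-deletion using Lemma~\ref{lemma:path-component-with-no-H1}. The accounting via pairwise disjointness and the swap argument from Lemma~\ref{lemma:nearest-H1-H2-pair-property} is exactly what the paper uses (with your $\mathcal{S}_0,\mathcal{S}_1,S_2$ playing the role of the paper's $\mathcal{S}_1,\mathcal{S}_2,S_3$).
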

\begin{proof}
Let $G$ be the input graph. Let $S_{OPT}$ be the  minimum sized set such that in the graph $G - S_{OPT}$, every connected component is either in $\Pi_1$ or $\Pi_2$. Let $|S_{OPT}| = OPT$.

Let $\mathcal{S}_1$ denote the maximal family of graphs which are in $\mathcal{G}_1$ such that any two members of $\mathcal{S}_1$ are pairwise disjoint. We can greedily construct such a family $\mathcal{S}_1$ in polynomial time. Let $S_1$ be the set of vertices that is present in any graphs in $\mathcal{S}_1$. From Lemma \ref{lemma:forbidden-pair-characterization}, we can conclude that any feasible solution of $G$ must contains a vertex from each member of the family $\mathcal{S}_1$. Since the members of $\mathcal{S}_1$ are pairwise disjoint, we have that $|S_{OPT} \cap S_1| \geq |\mathcal{S}_1|$.

Let $G' = G - S_1$. We now construct a family $\mathcal{S}_2$ as follows. Initially $\mathcal{S}_2 = \emptyset$. We find the closest forbidden pair $(J^*,T^*)$ in $G'$, add it to $\mathcal{S}_2$ and delete $J^* \cup T^*$ from $G'$. We repeat this step until it is no longer applicable. Let $S_2$ be the set of vertices that is present in any pair of graphs in $\mathcal{S}_2$. From Lemma \ref{lemma:nearest-H1-H2-pair-property}, we can conclude that any feasible solution of $G$ must contains a vertex from each pair of the family $\mathcal{S}_2$. Since the members of $\mathcal{S}_2$ are pairwise disjoint and $S_{OPT} - S_1$ is also an optimum solution for $G'$, we have that $|(S_{OPT}-S_1) \cap S_2| \geq |\mathcal{S}_2|$.

Let $G'' = G' - S_2$. We now construct a set $S_3$ as follows. Let $C_1, \dotsc , C_q$ be the connected components of $G''$. If a connected component $C_i$ has no graphs in $\mathcal{F}_p^1$ as induced subgraph, we apply the $c_1$-approximation algorithm for $\Pi_1$ vertex deletion on $G''[C_i]$ to obtain a solution $S_i$. If a connected component $C_i$ no has no graphs in $\mathcal{F}_p^2$ as induced subgraph, we apply the $c_2$-approximation algorithm for $\Pi_2$ vertex deletion on $G''[C_i]$ to obtain a solution $Z_i$. The correctness comes from Lemma \ref{lemma:path-component-with-no-H1}. We have $S_3 = \bigcup_{i \in [q]} Z_i$. Since $((S_{OPT} - S_1) -S_2) \cap C_i$ is a feasible solution for the connected component $C_i$ of $G''$, we have that $|Z_i| \leq (\max \{c_1,c_2\})|((S_{OPT} - S_1) -S_2) \cap C_i|$ for all $i \in [q]$.

We set $S = S_1 \cup S_2 \cup S_3$. We have 
\begin{eqnarray*}
|S|  &=&  |S_1| + |S_2| + |S_3| \\
 & \leq & (\max_{H \in \mathcal{G}_1} \{|H|\})|\mathcal{S}_1| + (\max_{(H_1, H_2) \in \mathcal{F}_p}(|H_1|+|H_2|))|\mathcal{S}_2| + \\
 & & \sum\limits_{i=1}^{q} (\max \{c_1,c_2\})|((S_{OPT} - S_1) -S_2) \cap C_i|\\
& \leq & c|S_{OPT} \cap S_1| + c|(S_{OPT}-S_1) \cap S_2| + \\ & &\sum\limits_{i=1}^{q} (\max \{c_1,c_2\})|((S_{OPT} - S_1) -S_2) \cap C_i|\\
& \leq & (\max \{c, c_1, c_2\})|S_{OPT}|
\end{eqnarray*}

Thus we have a $d$-approximation algorithm for {\InfCMSOPiOnePiTwo}.
\end{proof}

\section{Examples of {\InfCMSOPiOnePiTwo}}
\label{sec:other-examples}
Verifying whether Conditions \ref{2problem-condition-2} and \ref{2problem-condition-4} are satisfied for a general $\Pi_1$ and $\Pi_2$ is non-trivial. Hence we look at specific pairs of graph classes $\Pi_1$ and $\Pi_2$ and prove that they are examples of {\InfCMSOPiOnePiTwo}.

We start by showing that the problems {\ClawfreeorTriangleFree} 
is indeed an example of {\InfCMSOPiOnePiTwo}.

\begin{lemma} {\ClawfreeorTriangleFree} is an example of {\InfCMSOPiOnePiTwo}.
\end{lemma}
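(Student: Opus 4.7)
The plan is to verify the three defining conditions \ref{2problem-condition-1}, \ref{2problem-condition-2}, and \ref{2problem-condition-4} of {\InfCMSOPiOnePiTwo} for the pair $(\Pi_1, \Pi_2) = (\text{claw-free}, \text{triangle-free})$. Condition \ref{2problem-condition-1} is immediate: {\sc Claw-Free Vertex Deletion} and {\sc Triangle-Free Vertex Deletion} are finite-hitting-set problems on induced $K_{1,3}$ and induced $K_3$ (forbidden subgraphs of constant size $4$ and $3$), hence both admit elementary FPT branching algorithms. Condition \ref{2problem-condition-2} reduces to a direct computation: $\FF_1 = \{K_{1,3}\}$, $\FF_2 = \{C_3\}$, and since neither graph is an induced subgraph of the other, $\ssp(\FF_1,\FF_2) = \emptyset$ and the forbidden pair family is the constant-size set $\FF_p = \{(K_{1,3}, C_3)\}$.

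For Condition \ref{2problem-condition-4}, I plan to take $\GG_1 = \emptyset$, a trivially bounded family that makes every graph $\GG_1$-free. Fix a closest forbidden pair $(J^*, T^*)$ with shortest connecting path $P^* = u, x_1, \dots, x_{d-1}, v$, where $u \in J^*$ and $v \in T^*$, and designate $v$ as the right endpoint. The case $d \leq 1$ contains no internal vertices and the condition is vacuous, so I assume $d \geq 2$ and aim to show that every minimal forbidden set $Q$ which contains some internal vertex $x_i$ must also contain $v$.

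The structural ingredients come from Lemma~\ref{lemma:nearest-claw-triangle-pair-path} combined with the closest-pair property. By the lemma, the induced path $P^* \setminus \{u, v\}$ is isolated in $G$ except through $u$ (via $x_1$) and $v$ (via $x_{d-1}$). Using the closest-pair choice, no internal $x_i$ lies in any induced $K_{1,3}$ or $K_3$ of $G$: a claw $J'$ containing $x_i$ would satisfy $d_G(J', T^*) \leq d_G(x_i, v) = d - i < d$, and a triangle $T'$ containing $x_i$ would satisfy $d_G(J^*, T') \leq d_G(u, x_i) = i < d$, each contradicting the choice of $(J^*, T^*)$. Hence deleting $x_i$ from $Q$ preserves every induced claw and triangle of $G[Q]$, so minimality of $Q$ forces $G[Q \setminus \{x_i\}]$ to be disconnected; in particular $\{x_{i-1}, x_{i+1}\} \subseteq Q$.

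To conclude, I will assume $v \notin Q$ and reach a contradiction by cases. If additionally $u \notin Q$, the lemma forces $Q$ to sit entirely inside the induced path $P^* \setminus \{u, v\}$, contradicting that $G[Q]$ contains a claw. Otherwise $u \in Q$, and connectivity of $Q$ together with the lemma pin down $Q \cap P^* = \{x_1, \dots, x_{j'}\}$ for some $j'$ with $i \leq j' \leq d-1$. Then $x_{j'}$ lies in no claw and no triangle of $G[Q]$, and its only $Q$-neighbour is $x_{j'-1}$ (or $u$ when $j' = 1$), so $G[Q \setminus \{x_{j'}\}]$ remains connected and continues to witness both a claw and a triangle, contradicting the minimality of $Q$. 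The main obstacle I anticipate is the boundary bookkeeping at $i \in \{1, d-1\}$ when ruling out claws through internal vertices; the uniform closest-pair inequalities $d - i < d$ and $i < d$ resolve both sides cleanly as soon as $d \geq 2$, which is the only nontrivial regime.
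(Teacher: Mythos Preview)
Your proof is correct and follows essentially the same approach as the paper: Conditions~\ref{2problem-condition-1} and~\ref{2problem-condition-2} are verified identically, and for Condition~\ref{2problem-condition-4} you take $\GG_1=\emptyset$ and use Lemma~\ref{lemma:nearest-claw-triangle-pair-path} to argue that a minimal forbidden set meeting an internal path vertex but avoiding $v$ contradicts minimality, which is precisely the content of the paper's cited Lemma~\ref{lemma:branching-rule-claw-triangle-safeness}. The only difference is cosmetic: the paper removes the whole subpath $x_1,\dots,x_i$ at once, whereas you peel off the single leaf $x_{j'}$; both yield the same contradiction.
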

\begin{proof}
We show that Conditions \ref{2problem-condition-1} - \ref{2problem-condition-4}  are satisfied by {\ClawfreeorTriangleFree}. The problems {\sc Claw-Tree Vertex Deletion} and {\sc Triangle Vertex Deletion} has simple $4^k poly(n)$ and $3^k poly(n)$ time FPT algorithms parameterized by the solution size $k$ via a simple branching on claws and triangles respectively. Hence Condition \ref{2problem-condition-1} is satisfied by  {\ClawfreeorTriangleFree}. The forbidden pair family for {\ClawfreeorTriangleFree} is of size one which is $\{K_{1,3}, C_3\}$. Hence Condition \ref{2problem-condition-2} is satisfied. Finally, from Lemma \ref{lemma:branching-rule-claw-triangle-safeness}, we can conclude that Condition \ref{2problem-condition-4} is satisfied as well with $\GG_1 = \emptyset$. 

Hence {\ClawfreeorTriangleFree} is an example of {\InfCMSOPiOnePiTwo}. We have $f(k) = \max \{4^k, 3^k\} = 4^k$ and $c = \max_{(H_1, H_2) \in \mathcal{F}_p}(|H_1|+|H_2|) =7$. Hence from Theorem \ref{theorem:H1-H2-FPT-algorithm}, we have a  $7^k poly(n)$ time algorithm for {\ClawfreeorTriangleFree}. This is the same running time obtained independently in Theorem \ref{theorem:claw-free-triangle-free-FPT-algorithm}.

We also have $d=\max \{c, c_1, c_2\} =7$ giving a $7$-approximation for {\ClawfreeorTriangleFree} from Theorem \ref{theorem:H1-H2-approximation-algorithm}, which is the same approximation factor obtained independently in Theorem \ref{theorem:claw-free-triangle-free-approx-algorithm}.
\end{proof}

We now give examples of other pairs of graph classes $\Pi_1$ and $\Pi_2$ whose scattered deletion problem is an example of {\InfCMSOPiOnePiTwo}. The core part in each of the cases below is establishing that Condition \ref{2problem-condition-4} is satisfied. We do so by establishing structural properties for the shortest path corresponding to the closest forbidden pair. Such properties vary for each case.

\subsection{Interval or Trees}

We define the problem. 

\defparproblem{{\IVDTrees}}{An undirected graph $G = (V, E)$ and an integer $k$.}{$k$}{Is there $S \subseteq V(G)$ of size at most $k$ such that every connected component of $G - S$ is either an interval graph, or a tree?}

We have the following forbidden subgraph characterization of interval graphs.

\begin{lemma}(\cite{lekkeikerker1962representation})
\label{lemma:interval-main-property}
A graph is an interval graph if and only if it does not contain net, sun, hole, whipping top, long-claw, $\dagger$-AW, or $\ddagger$-AW as its induced subgraphs. 
\end{lemma}
See Figure~\ref{fig:interval-obstruction} for an illustration of the graphs mentioned as forbidden subgraphs for interval graphs. Note that $\dagger$-AW and $\ddagger$-AW are a collection of graphs (like holes) and its size need not be constant.

\begin{figure}[t]
\centering
	\includegraphics[scale=0.25]{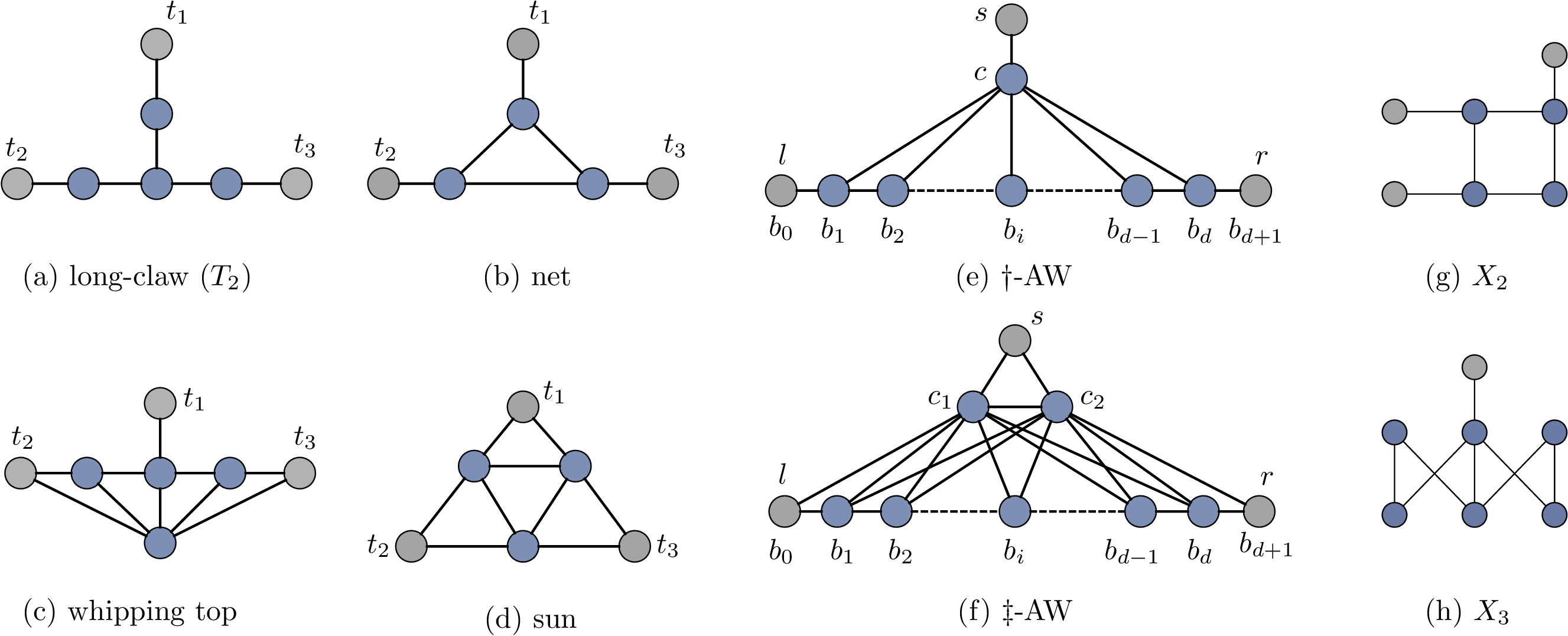}
	\caption{Obstructions for Graph Classes}
\label{fig:interval-obstruction}	
\end{figure}

We now give a characterization for graphs whose every connected component is an interval graph or a tree.
Recall from Section \ref{sec:forbidden-characterization} that the forbidden pair family $\FF_p$ of this pair of graph classes is $($long-claw, triangle$)$ and $\ssp(\FF_1 , \FF_2)$ is  $\{$net, sun, hole, whipping top, $\dagger$-AW, $\ddagger$-AW$\}$. 
The following is a corollary from Lemma \ref{lemma:forbidden-pair-characterization}.
\begin{corollary}
\label{lemma:interval-or-tree-some-characteristics}
The following statements are equivalent.
\begin{enumerate}
	\item\label{char-item-1} Let $G$ be a graph such that every connected component is either an interval graph or a tree.
	\item\label{char-item-2} $G$ does not have any net, sun, hole, whipping top, $\dagger$-AW, $\ddagger$-AW as its induced subgraphs. Moreover, $G$ cannot have long-claw and triangle as induced subgraphs in the same connected component. 
\end{enumerate}
\end{corollary}

We show that Conditions \ref{2problem-condition-1} - \ref{2problem-condition-4}  are satisfied by {\IVDTrees}. The problems {\sc Interval Vertex Deletion} and {\sc Feedback Vertex Set} has  $10^kpoly(n)$ \cite{cao2015interval} and $3.619^k poly(n)$ \cite{KP14} time FPT algorithms parameterized by the solution size $k$ respectively. Hence Condition \ref{2problem-condition-1} is satisfied by  {\IVDTrees}. The forbidden pair family for {\IVDTrees} is of size one which is the pair $($long-claw, triangle$)$. Hence Condition \ref{2problem-condition-2} is satisfied.

It remains to show that Condition \ref{2problem-condition-4} is satisfied for {\IVDTrees}. We define $\mathcal{G}_1$ be the family of graphs in $\ssp(\FF_1 , \FF_2)$ of size at most 10.

Let $(J^*, T^*)$ be the vertex subsets of a closest long-claw, triangle pair in a connected component of $G$, where $J^*$ is a long-claw and $T^*$ is a triangle. Let $P^* := x_0,x_1,\ldots,x_{d-1},x_d$ be a shortest path between $J^*$ and $T^*$ of length $d_G(J^*, T^*) = d$ with $x_0 = u \in J^*$ and $x_d = v \in T^*$.

A {\em caterpillar} graph is a tree in which all the vertices are within distance $1$ of a central path. In the graph $G$, let $C$ be the connected component of $G - (J^* \cup T^*)$ containing the internal vertices of $P^*$.  We have the following lemma that helps us to prove Condition \ref{2problem-condition-4}.

\begin{lemma}
\label{lemma:nearest-long-claw-triangle-pair-path-neighbors}
The graph $C$ is a caterpillar with the central path being $P^*$. Furthermore, the only vertices of $C$ adjacent to $J^* \cup T^*$ are $x_1$ and $x_{d-1}$ which are only adjacent to $x_0$ and $x_d$ respectively.
\end{lemma}

\begin{proof}
\begin{figure}[t]
\centering
	\includegraphics[scale=0.28]{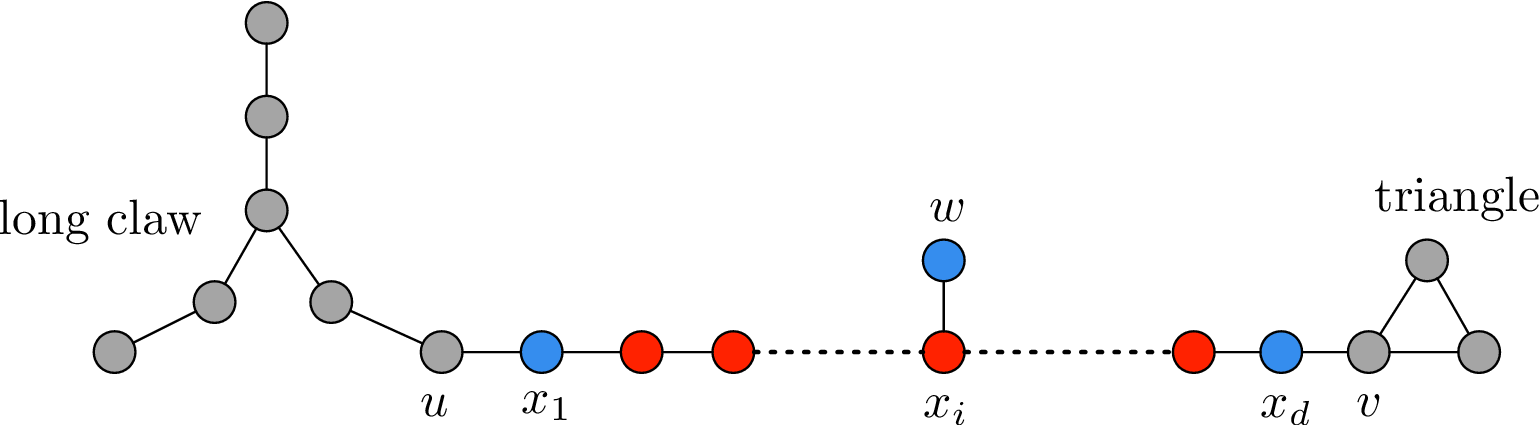}
	\caption{An illustration of a shortest path between a closest (long-claw, triangle) pair.}
\label{fig:long-claw-triangle-pair}
\end{figure}

We first look at the neighborhood vertices of the path $P^*$ in the connected component $C$. Let $w$ be such a vertex 
which is adjacent to a vertex $x_i$ with $ i \in \{1,2, \dotsc, d-1\}$. We prove that  $w$ is not adjacent to any other vertex in $G$. Thus there are no cycles in $C$ and all the vertices in $C$ are at distance at most $1$ to $P^*$ proving that $C$ is a caterpillar.

We go over the possibilities of edges from $w$ to other vertices.

\begin{description}
	\item[Case 1:] Suppose $w$ is adjacent to some vertex $x_j$ with $j \in \{0, \dotsc , d\}$. If $j=i+1$ or $j=i-1$, then $wx_ix_j$ forms a triangle $T'$. Since the path $P' = u,x_1,\ldots,x_{i}$ has length smaller than $P^*$ and connects $J^*$ and $T'$, we have that $d_G(J^*, T') < d_G(J^*, T^*)$. This contradicts the fact that $(J^*, T^*)$ is a pair of long-claw and triangle that is the closest.

Hence $w$ is not adjacent to $x_{i-1}$ and $x_{i+1}$. Suppose $j=i+2$ or $j=i-2$. Then the graph induced by the set of vertices $\{w, x_i, x_{i+1}, x_{i+2}\}$ or $\{w, x_i, x_{i-1}, x_{i-2}\}$ is $C_4$ contradicting that the graph is $\GG_1$-free.
Hence $w$ is not adjacent to $x_{i-2}$ and $x_{i+2}$.

Suppose $w$ is adjacent to $x_j$ with  $1 \leq j<i-2$ or $i+2 < j \leq d$. Then note that we have a path from $x_j$ to $x_i$ of length 2 via $w$ which is shorter than the path $x_i \dotsc x_j$ along $P^*$. Hence we have a path $P' = u x_1 \dotsc x_j w x_i \dotsc x_dv$ or $P' = u x_1 \dotsc x_i w x_j \dotsc x_dv$ from $J^*$ to $T^*$ of length smaller than $P^*$ contradicting that $P^*$ is the shortest such path. 

Hence $w$ is not adjacent to any of the vertices in $P^*$.

	\item[Case 2:] Suppose $w$ is adjacent to a vertex $u' \in J^*$. We assume without loss of generality that among all neighbors of $w$ in $J^*$, $u'$ is the vertex that is closest to $u$ in $J^*$, i.e, $d_{J^*}(u,u')$ is minimum. If $3 \leq i \leq d-1$, we have a path from $u'$ to $x_i$ of length 2 via $w$ which is shorter than the path from $u$ to $x_i$ via $P^*$. This contradicts that $P^*$ is the shortest path from $J^*$ to $T^*$. 
	
Suppose $i=1$ or $i=2$. Let $P'$ denote a shortest path between $u$ and $u'$ in $J^*$. Since $J^*$ is a long-claw, the length of $P'$ is at most $4$. Let us concatenate $P'$ with the prefix of the path $P^*$ from $u$ to $x_i$ which is of length at most $2$. Then we get a shortest path from $u'$ to $x_i$ which is of length at least $2$ and at most $6$. The vertex $w$ is adjacent to only $u'$ and $x_i$ in this path. Hence the graph induced by the set of vertices in this path plus $w$ is cycle $C_j$ with $4 \leq j \leq 8$ contradicting that the graph is $\GG_1$-free.

Hence $w$ is not adjacent to any of the vertices in $J^*$.

	\item[Case 3:] Suppose $w$ adjacent to a vertex $v' \in T^*$. We have $d_{T^*}(v,v') = 1$ as $T^*$ is a triangle. If $1 \leq i \leq d-3$, we have a path from $v'$ to $x_i$ of length 2 via $w$ which is shorter than the path from $v$ to $x_i$ via $P^*$. This contradicts that $P^*$ is the shortest path from $J^*$ to $T^*$. 
	
	Hence $d-2 \leq i \leq d-1$. Let $P'$ be the suffix of the path $P^*$ from $x_i$ to $v$. Then we get a shortest path from $x_i$ to $v'$ as $x_i P' vv'$ which is of length at least $2$ and at most $3$. The vertex $w$ is adjacent with only vertices $x_i$ and $v'$ in this path. Hence the graph induced by the set of vertices of this path plus $w$ forms a cycle $C_j$ with $4 \leq j \leq 5$ contradicting that the graph is $\GG_1$-free.

Hence from the above three cases, $w$ is adjacent to none of the other vertices of $J^* \cup T^* \cup P^*$.

	\item[Case 4:] We now prove that $w$ is not adjacent to any other vertex $w' \in V(C) \setminus P^*$. Suppose that there exists such a vertex $w'$. We now look at various cases of the adjacency of $w'$ with other vertices.
	\begin{description} 
		\item[- Case 4.1:] We first look at adjacencies of $w'$ with vertices in $P^*$.

		Suppose $w'$ is adjacent to vertex $x_i$. Then the graph induced by the set of vertices $T' = \{w', w, x_i\}$ forms a triangle.  Since the path $P' := u, x_1,\ldots, x_{i}$ has length smaller than $P^*$ and connects $J^*$ and $T'$, we have that $d_G(J^*, T') < d_G(J^*, T^*)$. This contradicts the fact that $(J^*, T^*)$ is a pair of long-claw and triangle that is the closest.

		Suppose $w'$ is adjacent to $x_{i+1}$ or $x_{i-1}$. Then the graph induced by the set of vertices $\{w',w, x_i, x_{i+1}\}$ or $\{w', w, x_i, x_{i-1}\}$ forms a $C_4$ contradicting that the graph is $\GG_1$-free.
		

		Hence this is not the case. Now suppose $w'$ is adjacent to $x_{i+2}$ or $x_{i-2}$. Then the graph induced by the set of vertices $\{w', w, x_i, x_{i+1}, x_{i+2}\}$ or $\{w',w, x_i, x_{i-1}, x_{i-2},\}$ forms a $C_5$ contradicting that the graph is $\GG_1$-free.

		Hence this is also not the case. Suppose $w'$ is adjacent to $x_{i+3}$ or $x_{i-3}$. Then the graph induced by the set of vertices $\{w', w, x_i, x_{i+1}, x_{i+2}, x_{i+3}\}$ or $\{w', w, x_i, x_{i-1}, x_{i-2}, x_{i-3}\}$ forms a $C_6$ contradicting that the graph is $\GG_1$-free.

		 In the remaining case, $w'$ is adjacent to $x_j$ with $1 \leq j<i+3$ or $i+3 < j \leq d$. Hence $|j-i| >3$. Then note that the path $x_j,w',w, x_i$ is of length three which is shorter than the path between $x_i$ and $x_j$ in $P^*$. Hence we get a path $P'$ from $u$ to $v$ of length smaller than $P^*$ contradicting that $P^*$ is the smallest such path. Hence $w'$ is not adjacent to any of the vertices in $P^*$. 
		 

		\item[- Case 4.2:] Suppose $w'$ adjacent to a vertex $u' \in J^*$. We assume without loss of generality that among all neighbors of $w$ in $J^*$, $u'$ is the vertex that is closest to $u$ in $J^*$, i.e, $d_{J^*}(u,u')$ is minimum. If $3 \leq i \leq d-1$, observe that the path $u'w'wx_i$ from $u'$ to $x_i$ is of length 3 which is shorter than the path from $u$ to $x_i$ via $P^*$. This contradicts that $P^*$ is the shortest path from $J^*$ to $T^*$. Hence $1 \leq i \leq 3$. Let $P'$ denote the path between $u$ and $u'$ in $J^*$ and $P''$ be the prefix of the path $P^*$ from $u$ to $x_i$. Then $u P'' x_i ww' u' P'u$ forms a cycle $C_j$ with $4 \leq j \leq 10$ with no chords contradicting that the graph is $\GG_1$-free.

		Hence $w$ is not adjacent to any of the vertices in $J^*$. 

		\item[- Case 4.3:] Suppose $w'$ adjacent to a vertex $v' \in T^*$. We have $d_{T^*}(v,v') = 1$ as $T^*$ is a triangle. If $1 \leq i \leq d-4$, we have the path $v',w',w,x_i$ from $v'$ to $x_i$ of length 3 which is shorter than the path from $v$ to $x_i$ via $P^*$. This contradicts that $P^*$ is the shortest path from $J^*$ to $T^*$. Hence $d-3 \leq i \leq d-1$. Let $P'$ be the suffix of the path $P^*$ from $x_i$ to $v$. Then $v'w'wx_i P' vv'$ forms a cycle $C_j$ with $4 \leq j \leq 6$ without any chords contradicting that the graph is $\GG_1$-free.
	\end{description}
\end{description}

Hence we conclude that $w'$ is not adjacent to any of the vertices in $J^* \cup T^* \cup P^*$. Now look the graph induced by the set of vertices $J'$ which is 
\begin{itemize}
\item $\{w', w, x_i, x_{i-1}, x_{i-2}, x_{i+1}, x_{i+2} \}$ for $3 \leq i \leq x_{d-2}$ or 
\item $\{w', w, x_i, x_{i-1}, u, x_{i+1}, x_{i+2} \}$ when $i=2$ or 
\item $\{w', w, x_i, u, u', x_{i+1}, x_{i+2}\}$ when $i=1$ for $u' \in J^* \cap N(u)$ or 
\item $\{w', w, x_i, x_{i-1}, x_{i-2}, v, v' \}$ when $i=d-1$ for $v' \in J^* \cap N(v)$.
\end{itemize}
 In all cases, the graph induced by $J'$ forms a long-claw. Since the path $P'$ from $J'$ to $v \in T^*$ has length smaller than $P^*$, we have that $d_G(J', T^*) < d_G(J^*, T^*)$. This contradicts the fact that $(J^*, T^*)$ is a pair of long-claw and triangle that is closest. 

Hence no such vertex $w'$ exists and therefore $w$ has no other neighbors in $G$.

Hence, the graph $C$ is a caterpillar with the central path being $P^*$. Furthermore, no vertices other than $x_1$ and $x_{d-1}$ is adjacent to $J^* \cup T^*$.
\end{proof}

We now use Lemma \ref{lemma:nearest-long-claw-triangle-pair-path-neighbors} to prove that Condition \ref{2problem-condition-4} is satisfied for {\IVDTrees}.

\begin{lemma}
\label{lemma:branching-rule-2-safeness}
 Condition \ref{2problem-condition-4} is satisfied for {\IVDTrees}.
\end{lemma}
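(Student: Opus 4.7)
The plan is to set $\GG_1$ to be the collection of all members of $\ssp(\FF_1, \FF_2)$ of size at most $10$, which by Lemma \ref{lemma:interval-or-tree-some-characteristics} consists of the net, sun, whipping top, $\dagger$-AW, $\ddagger$-AW and the small holes $C_4, \ldots, C_{10}$; this is a finite family of bounded-size graphs, so the first bullet of Condition \ref{2problem-condition-4} is immediate. For the second bullet, suppose $G$ is $\GG_1$-free and let $Q$ be a forbidden set of $\Pi_{(1,2)}$ that meets the internal vertices of $P^* = x_0, x_1, \ldots, x_d$ at some $x_i$ with $1 \leq i \leq d-1$. I must show $v := x_d \in Q$. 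By Lemma \ref{lemma:forbidden-pair-characterization}, either $G[Q]$ is isomorphic to a graph in $\ssp(\FF_1, \FF_2)$, or $G[Q]$ is connected and contains both a long-claw $J'$ and a triangle $T'$ as induced subgraphs.

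In the first case, $\GG_1$-freeness forces $G[Q]$ to be a hole $C_j$ with $j \geq 11$. By Lemma \ref{lemma:nearest-long-claw-triangle-pair-path-neighbors}, $x_i$ lies in the caterpillar component $C$ of $G - (J^* \cup T^*)$, and the only edges from $C$ to the rest of $G$ are $x_1 u$ and $x_{d-1} v$. Since the tree $C$ has no cycles and its leaves are dead ends, a simple cycle through $x_i$ is forced to traverse the central path from $x_i$ in both directions, reach $x_1$ and $x_{d-1}$, and exit through $u$ and $v$. Hence $v \in Q$.

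In the second case, I first show that $T'$ and $J'$ both lie entirely outside $C$ and that $x_i \notin T' \cup J'$. The closest-pair property gives $x_j \notin T' \cup J'$ for every internal $j$: if $x_j$ were in some triangle $T''$ or long-claw $J''$, then $d_G(J^*, T'') \leq j$ or $d_G(J'', T^*) \leq d - j$, each strictly less than $d$, contradicting that $(J^*, T^*)$ is closest. No triangle fits inside the tree $C$, and a triangle straddling $C$ would need an external vertex with two neighbours in $C$, impossible because only $u$ and $v$ are external neighbours of $C$, each with a single $C$-neighbour, and $u \not\sim v$ as $d \geq 2$. For the long-claw, every non-central-path vertex of $C$ has $G$-degree one by Lemma \ref{lemma:nearest-long-claw-triangle-pair-path-neighbors}, so if such a leaf appeared in $J'$ its unique neighbour $x_j$ would be forced into $J'$, which we just ruled out. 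Thus $T', J' \subseteq V(G) \setminus V(C)$. Now suppose for contradiction that $v \notin Q$ and fix a simple path $P_Q$ in the connected graph $G[Q]$ between a vertex of $T'$ and a vertex of $J'$. If $P_Q$ visited $C$ it would have to enter and leave $C$ through $\{u, v\}$, and since $v \notin Q$ both transitions would pass through $u$, contradicting simplicity. Hence $V(P_Q) \cap V(C) = \emptyset$, and in particular $x_i \notin V(P_Q)$, so $T' \cup J' \cup V(P_Q) \subseteq Q \setminus \{x_i\}$ is a connected subset of $Q$ containing both a long-claw and a triangle, contradicting the minimality of $Q$. Therefore $v \in Q$.

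The main obstacle will be the adjacency bookkeeping needed to push both $T'$ and $J'$ entirely outside the caterpillar $C$, since the long-claw is itself a tree and could a priori sit inside $C$; ruling this out combines the closest-pair minimality of $(J^*, T^*)$ with the fact that every non-central-path vertex of $C$ has $G$-degree exactly one, and this is precisely what unlocks the clean simple-path exchange argument that forces $v$ into $Q$.
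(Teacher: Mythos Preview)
Your overall architecture matches the paper's, but Case~1 has a genuine gap. You assert that $\GG_1$-freeness forces $G[Q]$ to be a hole $C_j$ with $j\geq 11$, and your enumeration of $\GG_1$ lists ``$\dagger$-AW, $\ddagger$-AW'' as if each were a single small graph. They are not: both are infinite families, and only the members of size at most $10$ land in $\GG_1$. So after $\GG_1$-freeness, $G[Q]$ can still be a large $\dagger$-AW or $\ddagger$-AW, and your cycle argument (``any cycle through $x_i$ must exit $C$ through both $u$ and $v$'') does not dispose of this case: in an AW, $x_i$ may sit at a terminal position, off every cycle of $G[Q]$, so the cycle-through-$x_i$ reasoning simply does not fire. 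The paper handles this explicitly: it first argues that $u\in Q$ (as you do), then observes that $u$ cannot lie in any triangle of $G$ (else $(J^*,T')$ with $d_G(J^*,T')=0$ beats $(J^*,T^*)$), and finally uses the structure of $\dagger$-AW and $\ddagger$-AW to show that having $x_1$ (hence $x_i$) and $u$ in the AW forces $u$ into one of its triangles. You need an argument of this kind; without it Case~1 is incomplete.

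Your Case~2 is correct and is essentially the same minimality contradiction the paper uses, phrased via an explicit simple path $P_Q$ rather than by deleting the subpath $x_1,\ldots,x_i$ from $Q$. One small point worth tightening: when you rule out a triangle ``straddling'' $C$, you only spell out the two-in-$C$, one-outside subcase. The one-in-$C$, two-outside subcase is also impossible, but for a different reason: after you have shown $x_j\notin T'$ for all internal $j$, any vertex of $T'\cap V(C)$ would have to be a leaf of the caterpillar, which has $G$-degree~$1$ and so cannot sit in a triangle. State that explicitly.
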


\begin{proof}
Condition \ref{2problem-condition-4} is not satisfied in the following case. There exist a pair  $(J^*, T^*)$ which is the vertex subsets of a closest long-claw, triangle pair in a connected component of $G$, where $J^*$ is a long-claw and $T^*$ is a triangle. Also there is a shortest path $P^* := x_0,x_1,\ldots,x_{d-1},x_d$ between $J^*$ and $T^*$ of length $d_G(J^*, T^*) = d$ with $x_0 = u \in J^*$ and $x_d = v \in T^*$. A forbidden set $Q$ of the graph $G$ is such that $Q$ contains some internal vertex $x_i$ of the path $P$ but it does not contain $v$.

Since the graph $G$ is $\GG_1$-free, $G[Q]$ can be one of hole, $\dagger$-AW or a $\ddagger$-AW or contain a forbidden pair for $($long-claw, triangle$)$.
Note that all of these possibilities contain cycles. But  from Lemma \ref{lemma:nearest-long-claw-triangle-pair-path-neighbors}, the component $C$ of $G \setminus (J^* \cup  T^*)$ that contains the internal vertices of $P^*$ is a caterpillar which does not contain any cycles. Hence $Q$ is not fully contained in $C$.

From Lemma \ref{lemma:nearest-long-claw-triangle-pair-path-neighbors}, the only neighbors of $C$ are $u$ and $v$ via $x_1$ and $x_{d-1}$ respectively.
Since $G[Q]$ is connected and intersects $x_i$, $Q \cap \{u, v\} \neq \emptyset$. Since we assumed that $Q$ does not contain the vertex $v$, we have $u \in Q$. In particular, $Q$ contains the entire subpath of $P^*$ from $u$ to $x_i$.

Also note that $u$ cannot be part of a subset of three vertices $T'$ which is a triangle as otherwise, we get a pair $(J^*, T')$ with distance zero contradicting that $(J^*, T^*)$ was the closest pair $P^*$ has internal vertices. Hence the forbidden set $Q$ cannot be $\dagger$-AW or a $\ddagger$-AW whose structure forces $u$ to be part of a triangle if it contains $x_i$ (which also happens only in the case when $i=1$). Since $x_i$ does not have any paths to the vertex $u$ other than the subpath in $P^*$, the forbidden set $Q$ cannot be a hole as well. 

Hence $Q$ can only correspond to a $($long-claw, triangle$)$ forbidden pair. In this case, we claim that the set after removing the vertices $x_1, \dotsc, x_i$ from $Q$ is also a forbidden set. This contradicts that $Q$ is a forbidden set as by definition they are required to be minimal.

Since $Q$ is a forbidden set, it contains vertex subsets that are isomorphic to a long-claw and a triangle. From Lemma \ref{lemma:nearest-long-claw-triangle-pair-path-neighbors}, we can conclude that none of the vertices $x_1, \dotsc , x_i$ can be part of any triangle in $G$. None of these vertices can be part of a long-claw in $G$ as well since it contradicts that $(J^*, T^*)$ is the closest forbidden pair. Since $v$ disconnects the path $P^*$, the subpath $x_1, \dotsc , x_i$ is not part of a path connecting a long-claw and a triangle either as if so $Q$ must contain the entire path $P^*$ including $v$. Hence the set after removing the vertices $x_1, \dotsc , x_i$ from $Q$ is still a forbidden set contradicting that $Q$ is minimal.

These cases of $Q$ are mutually exhaustive completing the proof of the Lemma.
\end{proof}

Hence, we have established that {\IVDTrees} is indeed an example of {\InfCMSOPiOnePiTwo}. We have the following theorem.

\begin{theorem} {\IVDTrees} has an FPT algorithm with running time $10^k poly(n)$ and a $10$-approximation algorithm.
\end{theorem}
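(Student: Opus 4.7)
My plan is to note that the entirety of this subsection has been devoted to verifying that {\IVDTrees} is an instance of {\InfCMSOPiOnePiTwo}, so the theorem reduces to plugging the appropriate constants into Theorems~\ref{theorem:H1-H2-FPT-algorithm} and~\ref{theorem:H1-H2-approximation-algorithm}. First I would assemble the conditions: Condition~\ref{2problem-condition-1} is witnessed by the $\OO^*(10^k)$-time algorithm for {\sc Interval Vertex Deletion}~\cite{cao2015interval} and the $\OO^*(3.618^k)$-time algorithm for {\sc Feedback Vertex Set}~\cite{KP14}. Condition~\ref{2problem-condition-2} holds since Corollary~\ref{lemma:interval-or-tree-some-characteristics} shows that the forbidden pair family is the singleton $\mathcal{F}_p = \{(\text{long-claw}, C_3)\}$. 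Condition~\ref{2problem-condition-4} is exactly what Lemma~\ref{lemma:branching-rule-2-safeness} establishes, taking $\GG_1$ to be the subfamily of $\ssp(\FF_1,\FF_2)$ consisting of graphs on at most $10$ vertices, namely net, sun, whipping top, $\dagger$-AW, $\ddagger$-AW, and the holes $C_4,\ldots,C_{10}$.

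Next I would compute the constants fed into the generic theorems. The branching factor coming from the closest forbidden pair is $|V(\text{long-claw})| + |V(C_3)| = 7 + 3 = 10$, and every graph in $\GG_1$ has at most $10$ vertices, so $c = 10$. Combined with $f(k) = \max\{10^k, 3.618^k\} = 10^k$, Theorem~\ref{theorem:H1-H2-FPT-algorithm} yields the $\OO^*(\max\{10^k, 10^k\}) = \OO^*(10^k)$ running time. For the approximation factor I would invoke the $10$-approximation for {\sc Interval Vertex Deletion}~\cite{cao2015interval} and the $2$-approximation for {\sc Feedback Vertex Set}~\cite{bafna19992}, giving $d = \max\{c, c_1, c_2\} = \max\{10,10,2\} = 10$, and Theorem~\ref{theorem:H1-H2-approximation-algorithm} then delivers the $10$-approximation.

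The main obstacle is already out of the way: it was Lemma~\ref{lemma:nearest-long-claw-triangle-pair-path-neighbors}, which, after the preprocessing that makes $G$ be $\GG_1$-free, forces the connected component of $G \setminus (J^* \cup T^*)$ containing the internal vertices of $P^*$ to be a caterpillar with spine $P^*$, and restricts the only neighbors of that component in $J^* \cup T^*$ to $u$ and $v$. That rigid structural description is what enables the clean reformulation in Lemma~\ref{lemma:branching-rule-2-safeness} showing that any new forbidden set meeting the interior of $P^*$ must include the triangle endpoint $v$. With that structural statement in hand, the proof of the theorem is purely bookkeeping: list the constants, confirm each of the four conditions, and quote Theorems~\ref{theorem:H1-H2-FPT-algorithm} and~\ref{theorem:H1-H2-approximation-algorithm}.
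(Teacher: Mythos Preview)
Your proposal is correct and matches the paper's own proof essentially line for line: the paper also computes $f(k)=\max\{10^k,3.618^k\}=10^k$ and $c=10$, invokes Theorem~\ref{theorem:H1-H2-FPT-algorithm} for the FPT bound, then cites the $10$-approximation for {\sc Interval Vertex Deletion} and the $2$-approximation for {\sc Feedback Vertex Set} to get $d=\max\{10,10,2\}=10$ via Theorem~\ref{theorem:H1-H2-approximation-algorithm}. One tiny slip: $\dagger$-AW and $\ddagger$-AW are infinite families, so $\GG_1$ contains only those members on at most ten vertices (the paper just says ``graphs in $\ssp(\FF_1,\FF_2)$ of size at most $10$''), and there are three conditions, not four.
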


\begin{proof}
We have $f(k) = \max \{10^k, 3.619^k\} = 10^k$ and $c = \max_{(H_1, H_2) \in \mathcal{F}_p}(|H_1|+|H_2|) =10$. Hence from Theorem \ref{theorem:H1-H2-FPT-algorithm}, we have a  $10^k poly(n)$ time algorithm for {\IVDTrees}. 

We know that {\sc Interval Vertex Deletion} has an $8$-approximation algorithm \cite{cao2016linear} and {\sc Feedback Vertex Set} has a 2-approximation algorithm \cite{bafna19992}. Hence $d=\max \{c, c_1, c_2\} =10$ giving a $10$-approximation for {\IVDTrees} from Theorem \ref{theorem:H1-H2-approximation-algorithm}.
\end{proof}

\subsection{Proper Interval or Trees}
We define the problem.

\defparproblem{{\PIVDTrees}}{An undirected graph $G = (V, E)$ and an integer $k$}{$k$}{Is there $S \subseteq V(G)$ of size at most $k$ such that every connected component of $G - S$ is a proper interval graph or a tree?}

We have the following forbidden subgraph characterization of proper interval graphs.

\begin{lemma}\cite{brandstadt1999graph}
\label{lemma:proper-interval-main-property}
A graph is said to be a proper interval graph if and only if it does not contain claw, net, sun or hole as its induced subgraphs. 
\end{lemma}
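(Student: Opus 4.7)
The plan is to establish both directions separately, leveraging the forbidden subgraph characterization of interval graphs (Lemma~\ref{lemma:interval-main-property}) and the classical theorem of Roberts stating that an interval graph is a proper interval graph if and only if it is claw-free.

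For the forward direction ($\Rightarrow$), I would verify by direct inspection that none of claw, net, sun, or a hole $C_\ell$ with $\ell \geq 4$ admits a proper interval representation, and then invoke hereditariness of the class. For the claw $K_{1,3}$: given any interval representation with center $c$ and leaves $\ell_1, \ell_2, \ell_3$, the three leaf-intervals are pairwise disjoint yet each intersects $I_c$, so two of them lie on the same side of $I_c$, and the one closer to the center of $I_c$ is properly contained in $I_c$ --- contradicting properness. Any hole already fails to be chordal, so it is not even an interval graph. The net contains an asteroidal triple and the sun contains a hole, so both fail to be interval graphs; alternatively, standard case analysis shows each forces proper containment in any interval representation.

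For the reverse direction ($\Leftarrow$), the key observation is that among the seven forbidden subgraphs listed in Lemma~\ref{lemma:interval-main-property} for interval graphs, four of them --- whipping top, long-claw, $\dagger$-AW, and $\ddagger$-AW --- each contain an induced claw $K_{1,3}$. This can be read off directly from Figure~\ref{fig:interval-obstruction}: in each of these four graphs there is a degree-three vertex whose three neighbors form an independent set. Consequently, any graph $G$ that is claw-free, net-free, sun-free, and hole-free is in fact free of all seven obstructions in Lemma~\ref{lemma:interval-main-property}, and is therefore an interval graph. Roberts' theorem then promotes $G$ from an interval graph to a proper interval graph, using its claw-freeness once more.

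The main obstacle in this plan is the structural verification that each of $\dagger$-AW and $\ddagger$-AW contains an induced claw; the claim is immediate for the whipping top and long-claw, but for the two asteroidal-witness graphs it requires a careful reading of Figure~\ref{fig:interval-obstruction} to locate an independent triple in the neighborhood of some vertex. Everything else is either a hereditary-property argument or a direct appeal to Roberts' theorem, which we would cite rather than reprove.
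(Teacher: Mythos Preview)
The paper does not prove this lemma; it is stated with a citation to \cite{brandstadt1999graph} as a classical characterization (due to Wegner), so there is no proof in the paper to compare your attempt against.

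Your overall plan is sound, but there is one error in the forward direction: the $3$-sun is chordal and does \emph{not} contain a hole. The reason it fails to be an interval graph is that its three degree-$2$ vertices form an asteroidal triple. A cleaner route for that direction is simply to observe that net, sun, and holes already appear among the Lekkerkerker--Boland obstructions in Lemma~\ref{lemma:interval-main-property}, so they are automatically excluded from interval (and hence proper interval) graphs; only the claw then requires the separate proper-containment argument you give. Your reverse direction via Lekkerkerker--Boland plus Roberts' theorem is the standard derivation, and the check you correctly flag as the remaining work---that each $\dagger$-AW and $\ddagger$-AW contains an induced claw---is a finite case analysis on their defining structure and does go through.
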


We now give a characterization for graphs whose every connected component is a proper interval graph or a tree.

\begin{lemma}
\label{lemma:proper-interval-or-tree-some-characteristics}
The following statements are equivalent.
\begin{enumerate}
	\item\label{pint:char-item-1} A graph $G$ is such that every connected component of $G$ is a proper interval graph or a tree.
	\item\label{pint:char-item-2} A graph $G$ does not have any net, sun or hole as its induced subgraphs. Moreover, no connected component of $G$ have a claw and a triangle as induced graphs.
\end{enumerate}
\end{lemma}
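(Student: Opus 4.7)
The plan is to derive this lemma directly as an application of the general characterization in Lemma~\ref{lemma:forbidden-pair-characterization}, by computing the super-pruned family and forbidden pair family for the pair (proper interval, trees). By Lemma~\ref{lemma:proper-interval-main-property}, the minimal forbidden family for proper interval graphs is $\FF_1 = \{\text{claw, net, sun}\} \cup \{C_k : k \geq 4\}$, while the minimal forbidden family for trees (on the connected-component level) is $\FF_2 = \{C_k : k \geq 3\}$. I would first observe that net and sun each contain a triangle as induced subgraph, and every $C_k$ with $k \geq 4$ lies in both families, so each of net, sun, and every $C_k$ with $k \geq 4$ belongs to $\ssp(\FF_1, \FF_2)$. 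On the other hand, the claw has only $4$ vertices and is triangle-free, so it contains no element of $\FF_2$; and the triangle has only $3$ vertices, so it contains no element of $\FF_1$. Hence neither claw nor triangle lies in $\ssp(\FF_1, \FF_2)$, and the forbidden pair family is exactly $\FF_p = \{(\text{claw},\text{triangle})\}$. Plugging this into Lemma~\ref{lemma:forbidden-pair-characterization} yields the stated equivalence.

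For readers who want a more self-contained argument, I would additionally sketch both directions directly. For (\ref{pint:char-item-1})$\Rightarrow$(\ref{pint:char-item-2}), let $C$ be a connected component of $G$. If $C$ is a proper interval graph then by Lemma~\ref{lemma:proper-interval-main-property} it contains no claw, net, sun, or hole; if $C$ is a tree then it is acyclic and hence contains no net, sun, hole, or triangle. In either case $C$ contains no net, sun, or hole, and lacks either a claw or a triangle, so Statement~\ref{pint:char-item-2} holds.

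For (\ref{pint:char-item-2})$\Rightarrow$(\ref{pint:char-item-1}), fix a connected component $C$ of $G$. Since $G$ is net-, sun-, and hole-free, so is $C$. By hypothesis $C$ cannot contain both a claw and a triangle, so split into cases. If $C$ is claw-free, then $C$ is simultaneously claw-, net-, sun-, and hole-free and Lemma~\ref{lemma:proper-interval-main-property} gives that $C$ is a proper interval graph. Otherwise $C$ is triangle-free; combined with being hole-free (no induced $C_k$ for $k \geq 4$), $C$ contains no induced cycle. Since every cycle contains an induced cycle, $C$ is acyclic, and being a connected acyclic graph means $C$ is a tree.

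The main step is the bookkeeping for the super-pruned family computation, namely checking that the only pair in $\FF_1 \times \FF_2$ surviving the super-pruning reduction is (claw, triangle); everything else follows routinely from the already-established general machinery and the classical forbidden-subgraph characterization of proper interval graphs.
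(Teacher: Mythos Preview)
Your proposal is correct and follows essentially the same approach as the paper: compute the super-pruned family (net, sun, and holes $C_k$ with $k\ge 4$) and the forbidden pair family $\{(\text{claw},\text{triangle})\}$, then invoke Lemma~\ref{lemma:forbidden-pair-characterization}. Your additional self-contained direct argument for both directions is extra detail not present in the paper, but the core derivation is the same.
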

\begin{proof}
We prove that the forbidden pair family is $($claw, triangle$)$. The forbidden family $\FF_1$ for proper interval graphs are claw, net, sun or holes. The forbidden family of trees $\FF_2$ is cycles. Since cycles of length at least 4 are common in $\FF_1$ and $\FF_2$, they are in $\ssp(\FF_1 , \FF_2)$. Since the graphs net and sun have triangle as induced subgraph, they are in $\ssp(\FF_1 , \FF_2)$ as well. The only remaining pair in $\FF_1 \times \FF_2$ is $($claw, triangle$)$. The proof now follows from the forbidden characterization in Lemma \ref{lemma:forbidden-pair-characterization}.
 \end{proof}

Hence Condition \ref{2problem-condition-2} is satisfied by {\PIVDTrees} as the forbidden pair is of size one which is $($claw, triangle$)$. 
Since {\sc Proper Interval Vertex Deletion} is FPT with a $O^*(6^k)$ running time algorithm from \cite{van2013proper} and  {\sc Feedback Vertex Set } is FPT with a $O^*(3.619^k)$ running time algorithm from \cite{KP14}, Condition \ref{2problem-condition-1} is satisfied as well.

We now prove that Condition \ref{2problem-condition-4} is satisfied by {\PIVDTrees}. Recall Lemma \ref{lemma:nearest-claw-triangle-pair-path} where we established that the internal vertices of any shortest path between the vertex sets of a closest claw, triangle pair in a graph do not contain any neighbors other than the endpoints of the path.

\begin{lemma}Condition \ref{2problem-condition-4} is satisfied by {\PIVDTrees}.
\end{lemma}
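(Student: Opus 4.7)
The plan is to take $\GG_1 = \{\mathrm{net}\}$, a singleton finite family consisting of a single graph on $6$ vertices, and verify that this choice suffices for Condition~\ref{2problem-condition-4}. First I would argue that the proof of Lemma~\ref{lemma:nearest-claw-triangle-pair-path}, originally stated for the (claw, triangle) pair in \ClawfreeorTriangleFree, transfers verbatim here because that proof uses only the closest-pair property of $(J^*, T^*)$ and the fact that $P^* = x_0, x_1, \ldots, x_d$ (with $u = x_0 \in J^*$ and $v = x_d \in T^*$) is an induced shortest path. Consequently each internal vertex $x_i$ of $P^*$ will have degree exactly $2$ in $G$ with neighbors its $P^*$-predecessor and $P^*$-successor, and these two neighbors are non-adjacent.

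Next I would perform a case analysis over the forms a forbidden set $Q$ containing an internal $x_i$ can take, as given by Lemma~\ref{lemma:proper-interval-or-tree-some-characteristics}. If $G[Q]$ is a net, $\GG_1$-freeness kills this case immediately. If $G[Q]$ is a sun $S_n$, then no vertex can simultaneously have $G$-degree $2$ and play either the role of an inner vertex (which has degree $4$) or an outer vertex (whose two neighbors must be adjacent), so $x_i$ cannot lie in $Q$. If $G[Q]$ is a hole $C_k$, every vertex has hole-degree $2$, so both $G$-neighbors of $x_i$ must lie in $Q$; iterating this along $P^*$ forces $\{u, x_1, \ldots, x_{d-1}, v\} \subseteq Q$ and in particular $v \in Q$.

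The most subtle case will be the fourth, where $Q$ is a minimal connected set containing both a claw and a triangle of $G$ as induced subgraphs. The key claim to establish is that no internal vertex $x_j$ of $P^*$ can belong to any triangle of $G$ (its two $P^*$-neighbors are non-adjacent, so no triangle through $x_j$ exists) or to any claw of $G$ (any such claw together with $T^*$ would form a (claw, triangle) pair at distance at most $d_G(x_j, v) \le d - j < d$, contradicting the closestness of $(J^*, T^*)$). Given this, $x_i$ contributes to neither the claw nor the triangle witnesses in $G[Q]$, so by the minimality of $Q$ the set $Q \setminus \{x_i\}$ must fail to satisfy some defining property, and the only property whose failure is not precluded is connectivity. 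Hence $x_i$ is a cut-vertex of $G[Q]$, forcing both its $G$-neighbors into $Q$; iterating the same cut-vertex reasoning at each subsequent internal vertex along $P^*$ (which meets the same hypotheses) will yield $\{x_1, \ldots, x_{d-1}, u, v\} \subseteq Q$, and in particular $v \in Q$.

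The main obstacle I anticipate is carefully executing the ``no internal $x_j$ lies in a claw of $G$'' step, since one has to consider claws centered potentially at $u$ (for $x_1$) and to verify that such a putative claw really does yield a (claw, triangle) pair strictly closer than $(J^*, T^*)$; the closest-pair hypothesis together with the shortest-path length $d_G(x_j, v) \le d - j$ will close this case, after which the cut-vertex propagation is routine.
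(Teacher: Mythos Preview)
Your proof is correct and follows essentially the same route as the paper: both rest on Lemma~\ref{lemma:nearest-claw-triangle-pair-path} (each internal $x_i$ has $G$-degree exactly~$2$ with non-adjacent neighbors) and then case-split on the possible shapes of the forbidden set $Q$. The only noteworthy difference is that the paper takes $\GG_1 = \emptyset$: your own degree argument for the sun applies verbatim to the net (every net vertex has degree $1$ or $3$, never $2$), so adding the net to $\GG_1$ is unnecessary. Organizationally, the paper argues contrapositively---assume $v \notin Q$, deduce $u \in Q$ via the connectivity of $G[Q]$ and the neighborhood structure of $C$, and then contradict each shape of $Q$---while you argue directly that $v \in Q$ in the hole and (claw, triangle) cases; your cut-vertex iteration in the latter case is the forward-direction counterpart of the paper's observation that deleting the internal subpath from $Q$ would leave a smaller forbidden set, violating minimality.
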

\begin{proof}

Condition \ref{2problem-condition-4} is not satisfied in the following case. There exist a pair  $(J^*, T^*)$ which is the vertex subsets of a closest claw, triangle pair in a connected component of $G$, where $J^*$ is a claw and $T^*$ is a triangle. Also there is a shortest path $P^* := x_0,x_1,\ldots,x_{d-1},x_d$ between $J^*$ and $T^*$ of length $d_G(J^*, T^*) = d$ with $x_0 = u \in J^*$ and $x_d = v \in T^*$. A forbidden set $Q$ of the graph $G$ is such that $Q$ contains some internal vertex $x_i$ of the path $P$ but it does not contain $v$.

Let $\GG_1 = \emptyset$. The graph $G[Q]$ can be one of net, sun, hole or contain a forbidden pair for $($claw, triangle$)$. Note that all of these possibilities contain cycles. But from Lemma \ref{lemma:nearest-claw-triangle-pair-path}, the component $C$ of $G \setminus (J^* \cup  T^*)$ that contains the internal vertices of $P^*$ is the path $P^* \setminus \{u,v\}$  which does not contain any cycles. Hence $Q$ is not fully contained in $C$.

From Lemma \ref{lemma:nearest-claw-triangle-pair-path}, the only neighbors of $C$ is $u$ and $v$ via $x_1$ and $x_{d-1}$ respectively.  Since $G[Q]$ is connected and intersects $x_i$, $Q \cap \{u, v\} \neq \emptyset$. Since we assumed that $Q$ does not contain the vertex $v$, we have $u \in Q$. In particular, $Q$ contains the entire subpath of $P^*$ from $u$ to $x_i$.

Also note that $u$ cannot be part of a subset of three vertices $T'$ which is a triangle as otherwise, we get a pair $(J^*, T')$ with distance zero contradicting that $(J^*, T^*)$ was the closest pair $P^*$ has internal vertices. Hence the forbidden set $Q$ cannot be a net or a sun whose structure forces $u$ to be part of a triangle if it contains $x_i$. Since $x_i$ does not have any paths to the vertex $u$ other than the subpath in $P^*$, the forbidden set $Q$ cannot be a hole as well. 

Hence $Q$ can only correspond to a $($claw, triangle$)$ forbidden pair. In this case, we claim that the set after removing the vertices $x_1, \dotsc, x_i$ from $Q$ is also a forbidden set. This contradicts that $Q$ is a forbidden set as by definition they are required to be minimal.

Since $Q$ is a forbidden set, it contains vertex subsets that are isomorphic to a claw and a triangle. From Lemma \ref{lemma:nearest-claw-triangle-pair-path}, we can conclude that none of the vertices $x_1, \dotsc , x_i$ can be part of any triangle in $G$. None of these vertices can be part of a claw in $G$ as well since it contradicts that $(J^*, T^*)$ is the closest forbidden pair. Since $v$ disconnects the path $P^*$, the subpath $x_1, \dotsc , x_i$ is not part of a path connecting a claw and a triangle either as if so $Q$ must contain the entire path $P^*$ including $v$. Hence the set after removing the vertices $x_1, \dotsc , x_i$ from $Q$ is still a forbidden set contradicting that $Q$ is minimal.

These cases of $Q$ are mutually exhaustive completing the proof of the Lemma.
\end{proof}

We have $f(k) = \max \{6^k, 3.619^k\} = 6^k$ and $c = 7$. 
Also we know that {\sc Proper Interval Vertex Deletion} has an 6-approximation algorithm \cite{van2013proper} and {\sc Feedback Vertex Set} has a 2-approximation algorithm \cite{bafna19992}. Hence we have $d=7$ as well. We have the following theorem.

\begin{theorem}
\label{theorem:proper-interval-trees-algorithms}
{\PIVDTrees} can be solved in $7^k poly(n)$-time and has a $7$-approximation algorithm.
\end{theorem}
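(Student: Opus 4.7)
The plan is to invoke the general machinery of Theorems \ref{theorem:H1-H2-FPT-algorithm} and \ref{theorem:H1-H2-approximation-algorithm} for {\InfCMSOPiOnePiTwo} and simply instantiate the three relevant quantities $f(k)$, $c$, and $d$ for the specific pair (proper interval, trees). All the nontrivial structural work has already been done in the preceding lemmas of this subsection, which jointly verify Conditions \ref{2problem-condition-1}--\ref{2problem-condition-4}: Lemma \ref{lemma:proper-interval-or-tree-some-characteristics} together with the remark on known FPT algorithms gives Conditions \ref{2problem-condition-1} and \ref{2problem-condition-2}, and the lemma immediately preceding the theorem establishes Condition \ref{2problem-condition-4} (with $\GG_1=\emptyset$). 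So {\PIVDTrees} is a concrete instance of {\InfCMSOPiOnePiTwo}, and the proof will only need to collect the constants.

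First I would plug in the FPT running times: by \cite{van2013proper}, {\sc Proper Interval Vertex Deletion} admits an $\OO^*(6^k)$ algorithm, and by \cite{KP14}, {\sc Feedback Vertex Set} admits an $\OO^*(3.618^k)$ algorithm, so $f(k)=\max\{6^k, 3.618^k\}=6^k$. Next I would compute $c$: since $\GG_1$ is empty, $c$ reduces to $\max_{(H_1,H_2)\in \FF_p}(|V(H_1)|+|V(H_2)|)$, and the unique forbidden pair is (claw, triangle), giving $c = 4+3 = 7$. Therefore $\max\{f(k), c^k\} = \max\{6^k, 7^k\} = 7^k$, and Theorem \ref{theorem:H1-H2-FPT-algorithm} yields the $\OO^*(7^k)$ bound.

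For the approximation part, I would recall the known factors: {\sc Proper Interval Vertex Deletion} has a $6$-approximation from \cite{van2013proper} and {\sc Feedback Vertex Set} has a $2$-approximation from \cite{bafna19992}. Then $d = \max\{c, c_1, c_2\} = \max\{7, 6, 2\} = 7$, and Theorem \ref{theorem:H1-H2-approximation-algorithm} directly gives a $7$-approximation.

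Since every ingredient has been proved in previous lemmas, I do not anticipate any real obstacle in the write-up; the only thing to be careful about is ensuring that the constant $c$ is indeed $7$ and not $7+|V(P)|$ for some shortest path $P$ between the closest pair, i.e., that Branching Rule~\ref{branch-rule:H1-H2-pair-branching} (and not Branching Rule~\ref{branch-rule:closest-pair-branching-path}) is the one being used. This is guaranteed because Condition \ref{2problem-condition-4} is satisfied, which is precisely what allows us to avoid branching on the interior of the shortest path and to take $c = |V(H_1)|+|V(H_2)|$ rather than $c = |V(H_1)|+|V(H_2)|+\alpha-2$.
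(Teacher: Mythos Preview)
Your proposal is correct and follows essentially the same approach as the paper: verify that {\PIVDTrees} is an instance of {\InfCMSOPiOnePiTwo} via the preceding lemmas, then plug in $f(k)=6^k$, $c=7$ (from the unique forbidden pair (claw, triangle) with $\GG_1=\emptyset$), and $d=\max\{7,6,2\}=7$ into Theorems~\ref{theorem:H1-H2-FPT-algorithm} and~\ref{theorem:H1-H2-approximation-algorithm}. The paper's own argument is exactly this computation, stated in the sentences immediately preceding the theorem.
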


\subsection{Chordal or Bipartite Permutation}

We define the problem as follows.

\defparproblem{{\ChordalBipPer}}{An undirected graph $G = (V, E)$ and an integer $k$}{$k$}{Is there $S \subseteq V(G)$ of size at most $k$ such that every connected component of $G - S$ is either a chordal graph, or a bipartite permutation graph?}

The forbidden set for chordal graphs $\FF_1$ is the set of cycle graphs with a length of at least $4$. We have the following characterization for bipartite permutation graphs which defines $\FF_2$.

\begin{lemma}\cite{bozyk2020vertex}
\label{lemma:bipartite-permutation-main-property}
A graph is said to be a bipartite permutation graph if and only if it does not contain long-claw,$ X_2, X_3, C_3$ or cycle graphs of length at least $5$ as its induced subgraphs. 
See Figure~\ref{fig:interval-obstruction} for an illustration of the graphs $X_2$ and $X_3$.
\end{lemma}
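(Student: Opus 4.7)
The plan is to exploit the equivalence noted in Section~\ref{sec:Prelims}: a graph is a bipartite permutation graph if and only if it is both bipartite and AT-free. So the task reduces to showing that the listed family is precisely the minimal induced-subgraph obstruction set for \emph{bipartite $\cap$ AT-free} graphs.

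First, for the forward direction, let $G$ be a bipartite permutation graph. Bipartiteness forbids every odd cycle, which rules out $C_3, C_5, C_7, \ldots$ from the list. For the remaining forbidden graphs (long-claw, $X_2$, $X_3$, and every even cycle $C_{2k}$ with $k \geq 3$) I would exhibit, for each of them, an explicit asteroidal triple: three vertices such that between every two of them there is a path avoiding the closed neighbourhood of the third. For $C_{2k}$ with $k \geq 3$ any three vertices pairwise at distance at least $2$ along the cycle will do; for the three graphs long-claw, $X_2$, $X_3$ one checks by inspection that the three ``end'' vertices form an AT. Since induced subgraphs of an AT-free graph are AT-free, this yields the forward direction.

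For the backward direction, assume $G$ is free of every graph on the list. The absence of $C_3$ and $C_{2k+1}$ for $k \geq 2$ makes $G$ bipartite. It remains to show $G$ is AT-free; assume for contradiction that $\{a,b,c\}$ is an AT in $G$ with witnessing paths $P_{ab}, P_{ac}, P_{bc}$ chosen to be shortest, and among such triples choose one minimizing total path length. I would then case-analyse on how the three paths interact. Bipartiteness plus the absence of even holes $C_{2k}$ with $k\geq 3$ leaves only $C_4$ as a possible induced cycle, which severely restricts chords and crossings between the three paths. A careful enumeration (based on whether pairs of paths share internal vertices, and on the parity of the distances $d(a,b), d(a,c), d(b,c)$) then forces the induced subgraph on $V(P_{ab}) \cup V(P_{ac}) \cup V(P_{bc})$ to contain one of long-claw, $X_2$, or $X_3$, contradicting our assumption.

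The main obstacle will be this last case analysis: bipartite ATs come in many configurations, and one has to argue that each one necessarily realizes one of the three named obstructions. The cleaner route is to first prove a ``thinness'' lemma showing that between any two AT-vertices the shortest path is essentially unique and has no nearby branching, and then show that the three resulting ``spines'' attach to a common region in only a handful of combinatorial patterns, each patterns matching exactly one of long-claw, $X_2$, or $X_3$. An alternative is to appeal directly to the structural characterisation of bipartite permutation graphs via strong orderings of the bipartition (as done in~\cite{bozyk2020vertex}) and translate that ordering condition into forbidden induced subgraphs; this avoids AT-gymnastics but requires the permutation-graph machinery as a black box.
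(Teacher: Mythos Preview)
The paper does not prove this lemma at all: it is stated with a citation to~\cite{bozyk2020vertex} and used as a black box, so there is no ``paper's own proof'' to compare against. Your proposal therefore goes well beyond what the paper does.

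As for the content of your sketch: the forward direction is fine and routine. The backward direction, however, is where all the work lies, and your plan there is not yet a proof. Saying ``a careful enumeration \ldots\ forces the induced subgraph \ldots\ to contain one of long-claw, $X_2$, or $X_3$'' is exactly the hard step, and the minimality trick plus ``only $C_4$ survives as an induced cycle'' does constrain things, but it does not by itself deliver the trichotomy. In the literature this characterisation is usually obtained not by a direct AT case analysis but via the known structural characterisations of bipartite permutation graphs (strong orderings, or equivalently bipartite AT-free graphs being exactly the $\{T_2, X_2, X_3, \text{hole}\}$-free bipartite graphs), and your ``alternative route'' through~\cite{bozyk2020vertex} is in fact the standard one. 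If you want a self-contained argument, you would need to actually carry out the case split on how the three shortest AT-paths meet; that is doable but tedious, and your current write-up only names the obstacle rather than overcoming it.
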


We now give a characterization for graphs whose each connected component is either a chordal graph or a bipartite permutation graph.

\begin{lemma}
\label{lemma:chordal-or-bip-per-some-characteristics}
The following statements are equivalent.
\begin{enumerate}
	\item\label{chord-bip-char-item-1} Let $G$ be a graph such that every connected component is either chordal or a bipartite permutation graph.
	\item\label{chord-bip-char-item-2} $G$ does not have any $X_2,X_3$ or induced cycle of length at least $5$ as induced subgraphs. Moreover, $G$ cannot have long-claw and $C_4$ in the same connected component or have $C_4$ and triangle in the same connected component. 
\end{enumerate}
\end{lemma}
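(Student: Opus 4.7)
The plan is to apply Lemma~\ref{lemma:forbidden-pair-characterization} to the pair of classes $\Pi_1 = $ chordal graphs and $\Pi_2 = $ bipartite permutation graphs, whose forbidden families are $\FF_1 = \{C_i : i \geq 4\}$ and $\FF_2 = \{\text{long-claw}, X_2, X_3, C_3\} \cup \{C_i : i \geq 5\}$ by the definition of chordal graphs and by Lemma~\ref{lemma:bipartite-permutation-main-property} respectively. It then suffices to compute the super-pruned family $\ssp(\FF_1, \FF_2)$ and the forbidden pair family $\FF_p$, after which the claimed characterization is just the instantiation of Lemma~\ref{lemma:forbidden-pair-characterization}.

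First I would determine $\ssp(\FF_1, \FF_2)$. All cycles $C_i$ with $i \geq 5$ are common to $\FF_1$ and $\FF_2$, and hence lie in the super-pruned family. Among the remaining members of $\FF_2$, the graphs $X_2$ and $X_3$ each contain an induced $C_4$ (a direct check from the obstructions figure), which places them in $\ssp(\FF_1, \FF_2)$ as well. For the complementary negative direction, the long-claw is acyclic and therefore contains no cycle of length at least $4$; the triangle $C_3$ is too small to contain any member of $\FF_1$; and $C_4$ is triangle-free and too small to contain the long-claw, $X_2$, $X_3$, or any longer hole. Thus $\ssp(\FF_1, \FF_2) = \{X_2, X_3\} \cup \{C_i : i \geq 5\}$.

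Next, the forbidden pair family consists of pairs $(H_1, H_2) \in \FF_1 \times \FF_2$ with neither coordinate in $\ssp(\FF_1, \FF_2)$. The members surviving the pruning are $\FF_1 \setminus \ssp(\FF_1, \FF_2) = \{C_4\}$ and $\FF_2 \setminus \ssp(\FF_1, \FF_2) = \{\text{long-claw}, C_3\}$, giving $\FF_p = \{(C_4, \text{long-claw}), (C_4, C_3)\}$. Substituting these two families into Lemma~\ref{lemma:forbidden-pair-characterization} yields exactly item~\ref{chord-bip-char-item-2}: $G$ avoids the elements of $\ssp(\FF_1, \FF_2)$ as induced subgraphs (no $X_2$, $X_3$, or $C_i$ with $i \geq 5$), and no connected component of $G$ contains both coordinates of a pair in $\FF_p$ (no component has a $C_4$ together with either a long-claw or a triangle).

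The main obstacle in this outline is verifying that both $X_2$ and $X_3$ genuinely contain an induced $C_4$; if either did not, it would not be absorbed into $\ssp(\FF_1, \FF_2)$ and would instead generate additional forbidden pairs (namely $(C_4, X_2)$ and/or $(C_4, X_3)$), contradicting the statement. Everything else reduces to routine bookkeeping once that structural fact is confirmed against the figure referenced in Lemma~\ref{lemma:bipartite-permutation-main-property}.
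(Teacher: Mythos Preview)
Your proposal is correct and follows essentially the same approach as the paper: identify $\FF_1$ and $\FF_2$, show that cycles of length at least $5$ are common to both and that $X_2$, $X_3$ contain an induced $C_4$, conclude that the surviving forbidden pairs are $(C_4,\text{long-claw})$ and $(C_4,C_3)$, and then invoke Lemma~\ref{lemma:forbidden-pair-characterization}. Your write-up is in fact slightly more careful than the paper's, since you explicitly verify the negative checks (that long-claw, $C_3$, and $C_4$ do not fall into $\ssp(\FF_1,\FF_2)$).
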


\begin{proof}
We prove that the forbidden pair family is $(C_4$, long-claw$)$ and  $(C_4, C_3 )$. The forbidden family $\FF_1$ for chordal graphs are holes. The forbidden family of bipartite permutation graphs $\FF_2$ is long-claw,$ X_2, X_3, C_3$  plus cycles of length at least $5$. Since cycles of length at least $5$ are common in $\FF_1$ and $\FF_2$, they are in $\ssp(\FF_1 , \FF_2)$. Since the graphs $X_2, X_3$ have $C_4$ as induced subgraph, they are in $\ssp(\FF_1 , \FF_2)$ as well. The only remaining pairs in $\FF_1 \times \FF_2$ are  $(C_4$, long-claw$)$ and  $(C_4$, triangle$)$. The proof now follows from the forbidden characterization in Lemma \ref{lemma:forbidden-pair-characterization}.
\end{proof}

Hence Condition \ref{2problem-condition-2} is satisfied by {\ChordalBipPer} as the forbidden pair is of size two which are $($long-claw, $C_4)$ and $($triangle$,C_4)$. 
Since {\sc Chordal Vertex Deletion} is FPT with a $k^{O(k)}poly(n)$ running time algorithm from \cite{cao2016chordal} and  {\sc Bipartite Permutation Vertex Deletion Set } is FPT with a $9^k poly(n)$ running time algorithm from \cite{bozyk2020vertex}, Condition \ref{2problem-condition-1} is satisfied as well.

It remains to show that Condition \ref{2problem-condition-4} is satisfied by {\ChordalBipPer}. We define $\mathcal{G}_1$ as all the forbidden graphs in $\ssp(\FF_1 , \FF_2)$ of size at most $10$. 

Let $(J^*, T^*)$ be the vertex subsets of a closest forbidden pair in a connected component of a $\GG_1$-free $G$, where $J^*$ is one of long-claw or triangle and $T^*$ is a $C_4$. Let $P^* := x_0,x_1,\ldots,x_{d-1},x_d$ be a shortest path between $J^*$ and $T^*$ of length $d_G(J^*, T^*) = d$ with $x_0 = u \in J^*$ and $x_d = v \in T^*$. 

Let $C$ be the connected component of $G - (J^* \cup T^*)$ containing the internal vertices of $P^*$.  We have the following lemma similar to Lemma \ref{lemma:nearest-long-claw-triangle-pair-path-neighbors} in {\IVDTrees}.

\begin{lemma}
\label{lemma:nearest-claw-triangle-C4-pair-path-neighbors}
The graph $C$ is a caterpillar with the central path being $P^* \setminus \{u,v\}$. Furthermore, the only vertices of $C$ adjacent to $J^* \cup T^*$ are $x_1$ and $x_{d-1}$ which are only adjacent to $x_0$ and $x_d$ respectively.
\end{lemma}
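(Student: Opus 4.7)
The plan is to mirror the case analysis of Lemma~\ref{lemma:nearest-long-claw-triangle-pair-path-neighbors} from the Interval-or-Tree setting, but now with $T^*$ playing the role of a $C_4$ (rather than a triangle) and $J^*$ being either a long-claw or a triangle. The guiding principle is the same: any ``extra'' adjacency around an internal vertex $x_i$ of the shortest path $P^*$ either (a) produces a shorter path between the two forbidden structures, contradicting the shortness of $P^*$, (b) produces a new forbidden pair (of the same type, with the same endpoints or modified endpoints) that is strictly closer than $(J^*, T^*)$, contradicting closestness, or (c) yields an induced subgraph in $\mathcal{G}_1$ (an element of $\ssp(\mathcal{F}_1,\mathcal{F}_2)$ of size at most $10$), contradicting that $G$ is $\mathcal{G}_1$-free.

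First I would fix an internal vertex $x_i$ of $P^*$ (with $1 \le i \le d-1$) and a vertex $w \in V(C)\setminus V(P^*)$ adjacent to $x_i$, and then enumerate the possible further neighbors of $w$. Adjacencies $w x_{i\pm 1}$ would create a triangle $T' = \{w, x_i, x_{i\pm 1}\}$, yielding a $($long-claw or triangle$, T')$ pair strictly closer to $J^*$; this contradicts closestness unless $J^*$ was already a triangle, in which case we use the pair $(T', T^*)$ or note that $T'$ together with $T^*$ itself creates a nearer $(C_3, C_4)$ pair. Adjacencies $w x_{i\pm 2}$ create an induced $C_4$, which together with $x_i$ or with $J^*$ yields either a graph in $\mathcal{G}_1$ (a $C_5, C_6, \ldots$ or an $X_2, X_3$), or a strictly closer $C_4$ to $J^*$. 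Adjacencies $w x_j$ for $|j-i|\ge 3$ yield a shorter path between $J^*$ and $T^*$, contradicting the choice of $P^*$. The cases where $w$ is adjacent to vertices of $J^*$ or $T^*$ directly are handled analogously: either a shortcut path is created, or, by concatenating with the relevant portion of $P^*$, an induced cycle of length between $4$ and $10$ arises (hence in $\mathcal{G}_1$), or a new forbidden pair closer than $(J^*, T^*)$ appears.

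Next I would prove the caterpillar claim by ruling out a pair $w, w' \in V(C)\setminus V(P^*)$ both adjacent to $P^*$ (or with $w'$ adjacent to $w$). The same taxonomy applies, except that the induced cycles and forbidden configurations grow by two vertices; since the bound on sizes in $\mathcal{G}_1$ was chosen to be $10$ precisely to absorb these longer induced cycles and AT-like graphs, every case terminates in a contradiction with $\mathcal{G}_1$-freeness, with closestness of $(J^*, T^*)$, or with the fact that $P^*$ is a shortest path. This establishes that $C$ is a tree and that every vertex of $C$ lies within distance $1$ of $P^*$, hence a caterpillar with spine $P^*\setminus\{u,v\}$.

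The main obstacle I anticipate is the interaction with $J^*$ when $J^*$ is a long-claw, because the long-claw has diameter $4$ and so concatenating a short detour through $w$ with a path inside $J^*$ can produce induced cycles of length up to $10$ rather than the length $\le 8$ that appeared in the interval-or-tree proof. This is exactly why the bound on $|\mathcal{G}_1|$ is set at $10$; verifying that no longer forbidden configuration can arise will require a careful enumeration of where $w$ attaches inside the long-claw and where $x_i$ sits along $P^*$. All other cases are essentially bookkeeping on top of the earlier proof of Lemma~\ref{lemma:nearest-long-claw-triangle-pair-path-neighbors}, with $C_4$ replacing the triangle role, and the analogous endpoint statement that only $x_1,x_{d-1}$ see $J^*\cup T^*$ (and only to $u$, respectively $v$) follows from the shortest-path and closestness conditions applied one more time to any stray edge from the interior of $P^*$ to $J^*\cup T^*$.
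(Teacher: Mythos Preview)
Your overall strategy matches the paper's proof: fix $w$ adjacent to an internal $x_i$, exhaust the possible further adjacencies of $w$ (to $P^*$, to $J^*$, to $T^*$, and to a second off-path vertex $w'$), and in each case derive either a shortcut path, a strictly closer forbidden pair, or an induced cycle in $\mathcal{G}_1$. The caterpillar conclusion and the endpoint statement then follow exactly as you outline, and the final contradiction (when $w'$ exists with no neighbours in $J^*\cup T^*\cup P^*$) is obtained, as in the paper, by exhibiting a long-claw on $\{w',w,x_i,x_{i\pm1},x_{i\pm2}\}$ closer to $T^*$.

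There is one slip in your handling of the $wx_{i\pm 1}$ sub-case. You propose that the triangle $T'=\{w,x_i,x_{i\pm1}\}$ forms a forbidden pair with $J^*$ (``a (long-claw or triangle, $T'$) pair''). But both long-claw and $C_3$ lie in $\mathcal{F}_2$, so neither $(\text{long-claw},C_3)$ nor $(C_3,C_3)$ is in $\mathcal{F}_1\times\mathcal{F}_2$, let alone in $\mathcal{F}_p$; the only forbidden pairs here are $(\text{long-claw},C_4)$ and $(C_3,C_4)$. The correct move---which you do mention, but only as a fallback for the case $J^*=C_3$---is to pair the new triangle with the \emph{existing} $C_4$: use $(T',T^*)$ and observe that the suffix $x_i,\dots,x_{d-1},v$ of $P^*$ witnesses $d_G(T',T^*)<d$. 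This works uniformly regardless of whether $J^*$ is a long-claw or a triangle, so the case split you introduce is unnecessary. Symmetrically, in the $wx_{i\pm2}$ sub-case the new induced $C_4$ replaces $T^*$ and is paired with the original $J^*$ via the prefix $u,\dots,x_i$; your ``strictly closer $C_4$ to $J^*$'' is exactly this, and no appeal to $X_2,X_3$ is needed. With that one correction your plan goes through and coincides with the paper's argument.
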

\begin{proof}

Let $w$ be a vertex of $C$ other than $P^*$ adjacent to a vertex $x_i$ with $ i \in [d-1]$. We claim that $w$ is not adjacent to any other vertex in $G$.

We go over possibilities of edges from $w$ to other vertices.
\begin{description}
	\item[Case 1:]  Suppose $w$ is adjacent to some vertex $x_j$ with $j \in \{0, \dotsc , d\}$. If $j=i+1$ or $j=i-1$, then $wx_ix_j$ forms a triangle $J'$. Since the path $P' = x_i,\ldots,x_{d-1},v$ that has length smaller than $P^*$ connects $J'$ and $T^*$, we have that $d_G(J', T^*) < d_G(J^*, T^*)$. This contradicts the fact that $(J^*, T^*)$ is a closest forbidden pair.

Hence $w$ is not adjacent to $x_{i-1}$ and $x_{i+1}$. Suppose $j=i+2$ or $j=i-2$. Then the graph induced by the vertices $\{w, x_i, x_{i+1}, x_{i+2}\}$ or $\{w, x_i, x_{i-1}, x_{i-2}\}$ forms the graph $T'$ which is a $C_4$. Since the path $P' = u,\ldots,x_{i}$ that has length smaller than $P^*$ connects $J^*$ and $T'$, we have that $d_G(J^*, T') < d_G(J^*, T^*)$. This contradicts the fact that $(J^*, T^*)$ is a closest forbidden pair.

Suppose now $w$ is adjacent to $x_j$ with  $0 \leq j<i-2$ or $i+2 < j \leq d$. Then note that we have a path from $x_j$ to $x_i$ of length two via $w$ which is shorter than the path $x_i \dotsc x_j$ via $P^*$. This creates a path $P'$ which is one of $u, x_1, \dotsc , x_j, w, x_i , \dotsc , x_d, v$ or $u, x_1, \dotsc , x_i, w, x_j, \dotsc , x_d, v$ from $J^*$ to $T^*$ of length smaller than $P^*$ contradicting that $P^*$ is the shortest such path. Hence $w$ is not adjacent to any of the vertices in $P^*$.

	\item[Case 2:] Suppose $w$ adjacent to a vertex $u' \in J^*$. We assume without loss of generality that among all neighbors of $w$ in $J^*$, $u'$ is the vertex that is closest to $u$ in $J^*$, i.e, $d_{J^*}(u,u')$ is minimum. We have $d_{J^*}(u,u') \leq 4$ as $J^*$ is either a long-claw or a triangle. If $3 \leq i \leq d-1$, we have a path from $u'$ to $x_i$ of length $2$ via $w$ which is shorter than the path from $u$ to $x_i$ via $P^*$. This contradicts that $P^*$ is the shortest path from $J^*$ to $T^*$. Hence $i=1$ or $i=2$. 
Let $P'$ denote the path between $u$ and $u'$ in $J^*$ and $P''$ be the subpath of $P^*$ from $u$ to $x_i$. Then $u, P'' x_i w u' P'u$ forms a cycle $C_j$ with $4 \leq j \leq 8$ without any chords. This either contradicts that $(J^*, T^*)$ is a closest forbidden pair or $G$ is $\GG_1$-free. Hence $w$ is not adjacent to any of the vertices in $J^*$.

	\item[Case 3:]   Suppose $w$ adjacent to a vertex $v' \in T^*$. We have $d_{T^*}(v,v') \leq 2$ as $T^*$ is a $C_4$. If $1 \leq i \leq d-3$, we have a path from $v'$ to $x_i$ of length $2$ via $w$ which is shorter than the path from $v$ to $x_i$ via $P^*$. This contradicts that $P^*$ is the shortest path from $J^*$ to $T^*$. Hence $d-2 \leq i \leq d-1$. Let $P'$ be the subpath of $P^*$ from $x_i$ to $v$. Then $v'wx_i P' vv'$ forms a cycle $C_j$ with $4 \leq j \leq 6$ without any chords. This either contradicts that $(J^*, T^*)$ is a closest forbidden pair or $G$ is $\GG_1$-free.

Hence $w$ is adjacent to none of the other vertices of $J^* \cup T^* \cup P^*$.

	\item[Case 4:]  We now prove that $w$ is not adjacent to any other vertex $w' \in V(G) - (J^* \cup T^* \cup P^*)$. Suppose that there exists such a vertex $w'$. Suppose $w'$ is adjacent to vertex $x_i$. Then the graph induced by the set of vertices $J' = \{w',w, x_i\}$ forms a triangle.  Since the path $P' = x_i,\ldots,x_{d-1},v$ that has length smaller than $P^*$ connects $J'$ and $T^*$, we have that $d_G(J', T^*) < d_G(J^*, T^*)$. This contradicts the fact that $(J^*, T^*)$ is a closest forbidden pair. 

	\item[- Case 4.1:]   Suppose $w'$ is adjacent to $x_{i+1}$ or $x_{i-1}$. Then the graph $T'$ induced by the vertices $\{w', w, x_i, x_{i+1}\}$ or $\{w', w, x_i, x_{i-1}\}$ forms a $C_4$. Since the path $P' = u,\ldots,x_{i}$ that has length smaller than $P^*$ connects $J^*$ and $T'$, we have that $d_G(J^*, T') < d_G(J^*, T^*)$. This contradicts the fact that $(J^*, T^*)$ is a closest forbiddden pair.

Hence this is not the case. Now suppose $w'$ is adjacent to $x_{i+2}$ or $x_{i-2}$. Then the graph induced by the set of vertices $\{w', w, x_i, x_{i+1}, x_{i+2}\}$ or $\{w', w, x_i, x_{i-1}, x_{i-2}\}$ is $C_5$ contradicting that $G$ is $\GG_1$-free. Hence this is not the case. Now suppose $w'$ is adjacent to $x_{i+3}$ or $x_{i-3}$. Then the graph induced by the set of vertices $\{w', w, x_i, x_{i+1}, x_{i+2}, x_{i+3}\}$ or $\{w', w, x_i, x_{i-1}, x_{i-2}, x_{i-3}\}$ is $C_6$ contradicting that $G$ is $\GG_1$-free. 

Now suppose $w'$ is adjacent to $x_j$ with $1 \leq j<i+3$ or $i+3 < j \leq d$. Then we have a path $P'$ which is either $u, x_1 \dotsc x_j w, x_i \dotsc x_d,v$ or $u, x_1 \dotsc x_i w, x_j \dotsc x_d, v$ from $J^*$ to $T^*$ of length smaller than $P^*$ contradicting that $P^*$ is the smallest such path. Hence $w'$ is not adjacent to any of the vertices in $P^*$. 

\item[- Case 4.2:] Suppose $w'$ adjacent to a vertex $u' \in J^*$. We assume without loss of generality that among all neighbors of $w$ in $J^*$, $u'$ is the vertex that is closest to $u$ in $J^*$, i.e, $d_{J^*}(u,u')$ is minimum. If $3 \leq i \leq d-1$, we have the path $u'w'wx_i$ from $u'$ to $x_i$ of length $3$ via $w$ which is shorter than the path from $u$ to $x_i$ via $P^*$. This contradicts that $P^*$ is the shortest path from $J^*$ to $T^*$. Hence $1 \leq i \leq 3$. Let $P'$ denote the between $u$ and $u'$ in $J^*$ and $P''$ be the prefix of the path $P^*$ from $u$ to $x_i$. Then $u P'' x_i,w,w', u' P'u$ forms a cycle $C_j$ with $4 \leq j \leq 10$ with no chords. This either contradicts that $(J^*, T^*)$ is a pair that is closest or $G$ is $\GG_1$-free. Hence $w$ is not adjacent to any of the vertices in $J^*$. 

\item[- Case 4.3:] Suppose $w'$ adjacent to a vertex $v' \in T^*$. We have $d_{T^*}(v,v') \leq 2$ as $T^*$ is a $C_4$. If $1 \leq i \leq d-4$, we have the path $v',w', w, x_i$ from $v'$ to $x_i$ of length three which is shorter than the path from $v$ to $x_i$ via $P^*$. This contradicts that $P^*$ is the shortest path from $J^*$ to $T^*$. Hence $d-3 \leq i \leq d-1$. Let $P'$ be the suffix of the path $P^*$ from $x_i$ to $v$. Then $v',w',w,x_i P' v,v'$ forms a cycle $C_j$ with $4 \leq j \leq 7$. This either contradicts that $(J^*, T^*)$ is a pair that is closest or $G$ is $\GG_1$-free.

\end{description}

Hence we conclude that $w'$ is not adjacent to any of the vertices in $J^* \cup T^* \cup P^*$. Now, we look at the induced subgraph formed by the set of vertices $J'$ which is
\begin{itemize}
\item $\{w', w, x_i, x_{i-1}, x_{i-2}, x_{i+1}, x_{i+2}\}$ for $3 \leq i \leq x_{d-2}$,
\item $\{w', w, x_i, x_{i-1}, u, x_{i+1}, x_{i+2}\}$ when $i=2$,
\item $\{w', w, x_i, u, u', x_{i+1}, x_{i+2}\}$ when $i=1$ for $u \in J^* \cap N(u)$ or 
\item $\{w', w, x_i, x_{i-1}, x_{i-2}, v, v'\}$ when $i=d-1$ for $v' \in J^* \cap N(v)$.
\end{itemize}
 This graph forms a long-claw. Since the path $P'$ from $J'$ to $v \in T^*$ has length smaller than $P^*$, the distance $d_G(J', T^*) < d_G(J^*, T^*)$. This contradicts the fact that $(J^*, T^*)$ is a closest forbidden pair. 

Hence no such vertex $w'$ exist and therefore $w$ has no other neighbors in $G$. 
\end{proof}

We now use Lemma \ref{lemma:nearest-claw-triangle-C4-pair-path-neighbors} to prove that Condition \ref{2problem-condition-4} is satisfied for {\ChordalBipPer}.

\begin{lemma}
\label{lemma:branching-rule-2-safeness}
 Condition \ref{2problem-condition-4} is satisfied for {\ChordalBipPer}.
\end{lemma}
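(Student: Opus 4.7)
The plan is to mirror the {\IVDTrees} argument closely, since the structural lemma we need has already been established in Lemma \ref{lemma:nearest-claw-triangle-C4-pair-path-neighbors}. Suppose for contradiction that Condition \ref{2problem-condition-4} fails: there is a closest forbidden pair $(J^*, T^*)$ with $T^* = C_4$ and $J^*$ isomorphic to either a long-claw or a triangle, a shortest path $P^* = x_0, \ldots, x_d$ from $x_0 = u \in J^*$ to $x_d = v \in T^*$, and a minimal forbidden set $Q$ that contains some internal vertex $x_i$ of $P^*$ but does not contain $v$.

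First I would pin down where $Q$ must sit. Since $G$ is $\GG_1$-free, $G[Q]$ is either a hole of length at least $11$ (the only members of $\ssp(\FF_1, \FF_2)$ absent from $\GG_1$) or contains a forbidden pair from $\{(\text{long-claw}, C_4), (\text{triangle}, C_4)\}$. In each case $G[Q]$ is connected and carries a cycle. By Lemma \ref{lemma:nearest-claw-triangle-C4-pair-path-neighbors} the component of $G - (J^* \cup T^*)$ containing the internal vertices of $P^*$ is a caterpillar, hence acyclic, so $Q$ is not confined to it. The same lemma tells us that this caterpillar is attached to $J^* \cup T^*$ only through $u$ (via $x_1$) and $v$ (via $x_{d-1}$); since $v \notin Q$ and $Q$ is connected, the whole subpath $u, x_1, \ldots, x_i$ must lie in $Q$.

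Next I would eliminate the hole case. A hole containing $u, x_1, \ldots, x_i$ would need a second induced path from $x_i$ back to $u$ avoiding $v$, but Lemma \ref{lemma:nearest-claw-triangle-C4-pair-path-neighbors} forbids any such alternative route: the only neighbours of the internal vertices of $P^*$ outside $P^*$ are leaves of the caterpillar, so no cycle through $u$ that avoids $v$ can exist. Therefore $Q$ must correspond to one of the two forbidden pairs. At this point I would show $Q \setminus \{x_1, \ldots, x_i\}$ is still a forbidden set, contradicting the minimality of $Q$: Lemma \ref{lemma:nearest-claw-triangle-C4-pair-path-neighbors} implies that no $x_j$ with $1 \le j \le d-1$ participates in any triangle, $C_4$, or long-claw of $G$, since otherwise we would obtain a strictly closer forbidden pair than $(J^*, T^*)$; and the deleted subpath cannot serve as the connecting path between the witnesses of the pair either, because $v \notin Q$ severs $P^*$ and leaves $u$ disconnected from $T^*$ within $G[Q]$. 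The main obstacle will be this final case analysis — specifically, checking carefully for both the (long-claw, $C_4$) and (triangle, $C_4$) pairs that every small cycle or short path anchoring some internal $x_j$ either contradicts $\GG_1$-freeness or violates the minimality of the distance $d_G(J^*, T^*)$; the corresponding checks in the proof of Lemma \ref{lemma:nearest-claw-triangle-C4-pair-path-neighbors} give a template for these arguments.
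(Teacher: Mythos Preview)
Your proposal is correct and follows essentially the same route as the paper's proof: reduce to a hole or a forbidden pair using $\GG_1$-freeness, use Lemma~\ref{lemma:nearest-claw-triangle-C4-pair-path-neighbors} to force $u$ (and the subpath $u,x_1,\ldots,x_i$) into $Q$ since $v\notin Q$, rule out the hole case because there is no second $x_i$--$u$ route, and then contradict minimality of $Q$ by stripping $x_1,\ldots,x_i$, none of which can lie in a long-claw, triangle, or $C_4$ nor on a connecting path once $v$ is excluded. The only cosmetic difference is that the paper refers to Lemma~\ref{lemma:nearest-long-claw-triangle-pair-path-neighbors} at one point (an apparent slip for the $C_4$ version you cite), so your citation is in fact the cleaner one.
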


\begin{proof}
Condition \ref{2problem-condition-4} is not satisfied in the following case. There exist a pair  $(J^*, T^*)$ which is the vertex subsets of a closest long-claw, triangle pair in a connected component of $G$, where $J^*$ one of long-claw or a triangle and $T^*$ is a $C_4$. Also there is a shortest path $P^* := x_0,x_1,\ldots,x_{d-1},x_d$ between $J^*$ and $T^*$ of length $d_G(J^*, T^*) = d$ with $x_0 = u \in J^*$ and $x_d = v \in T^*$. A forbidden set $Q$ of the graph $G$ is such that $Q$ contains some internal vertex $x_i$ of the path $P$ but it does not contain $v$.

Since the graph $G$ is $\GG_1$-free, $G[Q]$ can be a hole of size at least $11$ or contain a forbidden pair which is $($long-claw, $C_4)$ or $(C_3, C_4)$. Note that all of these possibilities contain cycles.  But from Lemma \ref{lemma:nearest-claw-triangle-C4-pair-path-neighbors}, the component $C$ of $G \setminus (J^* \cup  T^*)$ that contains the internal vertices of $P^*$ is a caterpillar which does not contain any cycles. Hence $Q$ is not fully contained in $C$.

From Lemma \ref{lemma:nearest-claw-triangle-C4-pair-path-neighbors}, the only neighbors of $C$ is $u$ and $v$ via $x_1$ and $x_{d-1}$ respectively. Since $G[Q]$ is connected and intersects $x_i$, $Q \cap \{u, v\} \neq \emptyset$. Since we assumed that $Q$ does not contain the vertex $v$, we have $u \in Q$. In particular, $Q$ contains the entire subpath of $P^*$ from $u$ to $x_i$.


Since $x_i$ does not have any paths to the vertex $u$ other than the subpath in $P^*$, the forbidden set $Q$ cannot be a hole. Hence $Q$ can only correspond to a (long-claw, $C_4$) or $(C_3, C_4)$ forbidden pair. In this case, we claim that the set after removing the vertices $x_1, \dotsc, x_i$ from $Q$ is also a forbidden set. This contradicts that $Q$ is a forbidden set as by definition they are required to be minimal.

From Lemma \ref{lemma:nearest-long-claw-triangle-pair-path-neighbors}, we can conclude that none of the vertices $x_1, \dotsc , x_i$ can belong to a subset of vertices in the graph such that the graph induced by the subset is a long-claw, triangle or a $C_4$. This is because otherwise, we get a forbidden pair using this subset which is closer than the closest forbidden pair $(J^*, T^*)$.
Since $v$ disconnects the path $P^*$, the subpath $x_1, \dotsc , x_i$ is not part of a path connecting a forbidden pair either as if so $Q$ must contain the entire path $P^*$ including $v$. Hence the set after removing the vertices $x_1, \dotsc , x_i$ from $Q$ is still a forbidden set contradicting that $Q$ is minimal.

These cases of $Q$ are mutually exhaustive completing the proof of the lemma.
\end{proof}



We have $f(k) = \max \{k^{O(k)}, 9^k\} = k^{O(k)}$ and $c = 11$. We know that {\sc Chordal Vertex Deletion} has an $\log^2(OPT)$-approximation algorithm \cite{agrawal2020polylogarithmic} where $OPT$ denote the size of the optimal solution. Also, {\sc Bipartite Permutation Vertex Deletion} has a $9$-approximation algorithm \cite{bozyk2020vertex}. Hence $d=\max \{c, c_1, c_2\} = \max \{11,\log^2(OPT),2\} = \log^2(OPT)$. We have the following theorem.

\begin{theorem}
\label{theorem:chordal-bipper-FPT-algorithm}
{\ChordalBipPer} can be solved in $k^{O(k)}poly(n)$-time and has as a $\log^2(OPT)$-approximation algorithm..
\end{theorem}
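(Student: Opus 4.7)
The plan is to derive this theorem as a direct application of Theorems \ref{theorem:H1-H2-FPT-algorithm} and \ref{theorem:H1-H2-approximation-algorithm} after verifying that {\ChordalBipPer} satisfies Conditions \ref{2problem-condition-1}--\ref{2problem-condition-4} of {\InfCMSOPiOnePiTwo}. Conditions \ref{2problem-condition-1} and \ref{2problem-condition-2} are essentially immediate: both underlying vertex deletion problems are known to be FPT ({\sc Chordal Vertex Deletion} in $k^{O(k)}$ time \cite{cao2016chordal}, {\sc Bipartite Permutation Vertex Deletion} in $9^k$ time \cite{bozyk2020vertex}), and Lemma \ref{lemma:chordal-or-bip-per-some-characteristics} already pins down the forbidden pair family to the two-element set $\{(\text{long-claw}, C_4), (C_3, C_4)\}$.

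The main obstacle is Condition \ref{2problem-condition-4}. To attack it, I would set $\GG_1$ to be all members of $\ssp(\FF_1, \FF_2)$ of size at most $10$ (finite, of bounded size, independent of the input) and then prove a structural analogue of Lemma \ref{lemma:nearest-long-claw-triangle-pair-path-neighbors} tailored to the pair where $J^*$ is a long-claw or triangle and $T^*$ is a $C_4$: in a $\GG_1$-free graph, the component of $G - (J^* \cup T^*)$ containing the internal vertices of the shortest path $P^*$ between a closest forbidden pair $(J^*, T^*)$ is a caterpillar whose central path is $P^* \setminus \{u,v\}$, and the only vertices of this component adjacent to $J^* \cup T^*$ are $x_1$ (adjacent only to $u$) and $x_{d-1}$ (adjacent only to $v$). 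The proof would be a finite case analysis along the lines of Lemma \ref{lemma:nearest-long-claw-triangle-pair-path-neighbors}: any extra adjacency either creates a strictly shorter path between some forbidden pair (contradicting closeness of $(J^*, T^*)$) or produces a short induced hole, $C_4$, triangle, or long-claw whose total vertex count is at most $10$ (contradicting $\GG_1$-freeness). The main new wrinkle compared to the interval-or-trees case is that $T^*$ is a $C_4$ rather than a triangle, so the intra-$T^*$ distances go up to $2$ and the eligible range of indices $i$ near $v$ must be adjusted accordingly, but the produced obstructions still fit inside the size-$10$ bound.

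Given this caterpillar structure, the payoff is that any forbidden set $Q$ meeting an internal vertex $x_i$ of $P^*$ must reach outside the caterpillar through $\{u, v\}$, because every candidate for $G[Q]$ (a hole of length $\ge 11$ or a graph containing one of our two forbidden pairs) contains a cycle, whereas caterpillars are acyclic. If $v \notin Q$, then $Q$ must contain $u$ together with the entire subpath from $u$ to $x_i$; since no internal vertex $x_j$ lies in any small triangle, long-claw, or $C_4$ (else the pair $(J^*, T^*)$ would not be closest), and $v \notin Q$ blocks the only escape on the other side, one can delete $x_1, \ldots, x_i$ from $Q$ and still be left with a forbidden set, contradicting minimality of $Q$. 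Hence $v \in Q$, which is exactly Condition \ref{2problem-condition-4}.

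Finally, plugging $f(k) = \max\{k^{O(k)}, 9^k\} = k^{O(k)}$ and $c = \max\{10,\, |V(\text{long-claw})|+|V(C_4)|\} = 11$ into Theorem \ref{theorem:H1-H2-FPT-algorithm} yields the $k^{O(k)}\mathrm{poly}(n)$ running time, and using the known approximation factors $c_1 = \log^2(OPT)$ for {\sc Chordal Vertex Deletion} \cite{agrawal2020polylogarithmic} and $c_2 = 9$ for {\sc Bipartite Permutation Vertex Deletion} \cite{bozyk2020vertex} in Theorem \ref{theorem:H1-H2-approximation-algorithm} gives $d = \max\{11,\log^2(OPT),9\} = \log^2(OPT)$, yielding the claimed approximation.
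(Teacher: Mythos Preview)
Your proposal is correct and follows essentially the same approach as the paper: the paper also sets $\GG_1$ to be the members of $\ssp(\FF_1,\FF_2)$ of size at most $10$, proves the caterpillar lemma (Lemma~\ref{lemma:nearest-claw-triangle-C4-pair-path-neighbors}) via the same case analysis you outline, uses it to verify Condition~\ref{2problem-condition-4}, and then plugs the identical parameters $f(k)=k^{O(k)}$, $c=11$, and $d=\log^2(OPT)$ into Theorems~\ref{theorem:H1-H2-FPT-algorithm} and~\ref{theorem:H1-H2-approximation-algorithm}. The only minor point to tighten is that the hole case for $Q$ should be dispatched separately (a hole through $x_i$ would need two exits from the caterpillar, forcing $v\in Q$), rather than via the deletion-of-$x_1,\ldots,x_i$ minimality argument, which the paper also does.
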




\section{Conclusion}
We gave faster algorithms for some vertex deletion problems to pairs of scattered graph classes with infinite forbidden families.
The existence of a polynomial kernel for all the problems studied except the case when both the forbidden families are finite and one of them has a path are open. It is even open when the two scattered graph classes have finite forbidden families (without the forbidden path assumption).

Currently we do not know any $W[1]$-hardness results (where we do not expect to have \FPT algorithms) for scattered graph classes in the literature when the deletion to each underlying graph classes is {\FPT}. Finding such a result would be interesting.

Another open problem is to give faster FPT algorithms for problems that doesn't fit in any of the frameworks described above, especially problems which does not have a constant sized forbidden pair family. An example is the case when $(\Pi_1, \Pi_2)$ is (Chordal, Bipartite). The forbidden pair family for this problem is the set of all pairs $(C_{2i}, C_3)$ with $i \geq 2$ which is not of constant size.

%
%
%
\bibliographystyle{plain}

\begin{thebibliography}{10}

\bibitem{agrawal2020polylogarithmic}
Akanksha Agrawal, Daniel Lokshtanov, Pranabendu Misra, Saket Saurabh, and
  Meirav Zehavi.
\newblock Polylogarithmic approximation algorithms for weighted-f-deletion
  problems.
\newblock {\em ACM Transactions on Algorithms (TALG)}, 16(4):1--38, 2020.

\bibitem{aoike2022improved}
Yuuki Aoike, Tatsuya Gima, Tesshu Hanaka, Masashi Kiyomi, Yasuaki Kobayashi,
  Yusuke Kobayashi, Kazuhiro Kurita, and Yota Otachi.
\newblock An improved deterministic parameterized algorithm for cactus vertex
  deletion.
\newblock {\em Theory of Computing Systems}, 66(2):502--515, 2022.

\bibitem{bafna19992}
Vineet Bafna, Piotr Berman, and Toshihiro Fujito.
\newblock A 2-approximation algorithm for the undirected feedback vertex set
  problem.
\newblock {\em SIAM Journal on Discrete Mathematics}, 12(3):289--297, 1999.

\bibitem{bozyk2020vertex}
{\L}ukasz Bo{\.z}yk, Jan Derbisz, Tomasz Krawczyk, Jana Novotn{\'a}, and
  Karolina Okrasa.
\newblock Vertex deletion into bipartite permutation graphs.
\newblock In {\em 15th International Symposium on Parameterized and Exact
  Computation (IPEC 2020)}. Schloss Dagstuhl-Leibniz-Zentrum f{\"u}r
  Informatik, 2020.

\bibitem{brandstadt1999graph}
Andreas Brandstadt, Jeremy~P Spinrad, et~al.
\newblock {\em Graph classes: a survey}, volume~3.
\newblock SIAM, 1999.

\bibitem{bruckner2015graph}
Sharon Bruckner, Falk H{\"u}ffner, and Christian Komusiewicz.
\newblock A graph modification approach for finding core--periphery structures
  in protein interaction networks.
\newblock {\em Algorithms for Molecular Biology}, 10(1):1--13, 2015.

\bibitem{Cai96}
Leizhen Cai.
\newblock {Fixed-Parameter Tractability of Graph Modification Problems for
  Hereditary Properties}.
\newblock {\em Inf. Process. Lett.}, 58(4):171--176, 1996.

\bibitem{cao2016linear}
Yixin Cao.
\newblock Linear recognition of almost interval graphs.
\newblock In {\em Proceedings of the Twenty-Seventh Annual ACM-SIAM Symposium
  on Discrete Algorithms}, pages 1096--1115. SIAM, 2016.

\bibitem{cao2015interval}
Yixin Cao and D{\'a}niel Marx.
\newblock Interval deletion is fixed-parameter tractable.
\newblock {\em ACM Transactions on Algorithms (TALG)}, 11(3):1--35, 2015.

\bibitem{cao2016chordal}
Yixin Cao and D{\'a}niel Marx.
\newblock Chordal editing is fixed-parameter tractable.
\newblock {\em Algorithmica}, 75(1):118--137, 2016.

\bibitem{chen2010improved}
Jianer Chen, Iyad~A Kanj, and Ge~Xia.
\newblock Improved upper bounds for vertex cover.
\newblock {\em Theoretical Computer Science}, 411(40-42):3736--3756, 2010.

\bibitem{cygan2015parameterized}
Marek Cygan, Fedor~V Fomin, {\L}ukasz Kowalik, Daniel Lokshtanov, D{\'a}niel
  Marx, Marcin Pilipczuk, Micha{\l} Pilipczuk, and Saket Saurabh.
\newblock {\em Parameterized algorithms}, volume~3.
\newblock Springer, 2015.

\bibitem{cygan2013split}
Marek Cygan and Marcin Pilipczuk.
\newblock Split vertex deletion meets vertex cover: new fixed-parameter and
  exact exponential-time algorithms.
\newblock {\em Information Processing Letters}, 113(5-6):179--182, 2013.

\bibitem{DiestelBook2012}
Reinhard Diestel.
\newblock {\em Graph Theory, 4th Edition}, volume 173 of {\em Graduate texts in
  mathematics}.
\newblock Springer, 2012.

\bibitem{fomin2012planar}
Fedor~V Fomin, Daniel Lokshtanov, Neeldhara Misra, and Saket Saurabh.
\newblock Planar f-deletion: Approximation, kernelization and optimal fpt
  algorithms.
\newblock In {\em 2012 IEEE 53rd Annual Symposium on Foundations of Computer
  Science}, pages 470--479. IEEE, 2012.

\bibitem{GanianRS17}
Robert Ganian, M.~S. Ramanujan, and Stefan Szeider.
\newblock Discovering archipelagos of tractability for constraint satisfaction
  and counting.
\newblock {\em {ACM} Trans. Algorithms}, 13(2):29:1--29:32, 2017.

\bibitem{garg1996approximate}
Naveen Garg, Vijay~V Vazirani, and Mihalis Yannakakis.
\newblock Approximate max-flow min-(multi) cut theorems and their applications.
\newblock {\em SIAM Journal on Computing}, 25(2):235--251, 1996.

\bibitem{jacob2021parameterized}
Ashwin Jacob, Jari~JH de~Kroon, Diptapriyo Majumdar, and Venkatesh Raman.
\newblock Deletion to scattered graph classes i - case of finite number of
  graph classes.
\newblock {\em arXiv preprint arXiv:2105.04660}, 2021.

\bibitem{jacob2020parameterized}
Ashwin Jacob, Diptapriyo Majumdar, and Venkatesh Raman.
\newblock Parameterized complexity of deletion to scattered graph classes.
\newblock In {\em 15th International Symposium on Parameterized and Exact
  Computation (IPEC 2020)}. Schloss Dagstuhl-Leibniz-Zentrum f{\"u}r
  Informatik, 2020.

\bibitem{JacobMR21fct}
Ashwin Jacob, Diptapriyo Majumdar, and Venkatesh Raman.
\newblock Faster {FPT} algorithms for deletion to pairs of graph classes.
\newblock In Evripidis Bampis and Aris Pagourtzis, editors, {\em Fundamentals
  of Computation Theory - 23rd International Symposium, {FCT} 2021, Athens,
  Greece, September 12-15, 2021, Proceedings}, volume 12867 of {\em Lecture
  Notes in Computer Science}, pages 314--326. Springer, 2021.

\bibitem{jansen2014near}
Bart~MP Jansen, Daniel Lokshtanov, and Saket Saurabh.
\newblock A near-optimal planarization algorithm.
\newblock In {\em Proceedings of the Twenty-Fifth Annual ACM-SIAM Symposium on
  Discrete Algorithms}, pages 1802--1811. SIAM, 2014.

\bibitem{kawarabayashi2017polylogarithmic}
Ken-ichi Kawarabayashi and Anastasios Sidiropoulos.
\newblock Polylogarithmic approximation for minimum planarization (almost).
\newblock In {\em 2017 IEEE 58th Annual Symposium on Foundations of Computer
  Science (FOCS)}, pages 779--788. IEEE, 2017.

\bibitem{KP14}
Tomasz Kociumaka and Marcin Pilipczuk.
\newblock {Faster deterministic Feedback Vertex Set}.
\newblock {\em Inf. Process. Lett.}, 114(10):556--560, 2014.

\bibitem{lekkeikerker1962representation}
C~Lekkeikerker and J~Boland.
\newblock Representation of a finite graph by a set of intervals on the real
  line.
\newblock {\em Fundamenta Mathematicae}, 51(1):45--64, 1962.

\bibitem{LewisY80}
John~M. Lewis and Mihalis Yannakakis.
\newblock The node-deletion problem for hereditary properties is np-complete.
\newblock {\em J. Comput. Syst. Sci.}, 20(2):219--230, 1980.

\bibitem{lokshtanov2014faster}
Daniel Lokshtanov, NS~Narayanaswamy, Venkatesh Raman, MS~Ramanujan, and Saket
  Saurabh.
\newblock Faster parameterized algorithms using linear programming.
\newblock {\em ACM Transactions on Algorithms (TALG)}, 11(2):1--31, 2014.

\bibitem{van2013proper}
Pim Van’t~Hof and Yngve Villanger.
\newblock Proper interval vertex deletion.
\newblock {\em Algorithmica}, 65(4):845--867, 2013.

\end{thebibliography}
%


\end{document}